\newcommand{\A}{\mathcal{A}}
\newcommand{\B}{\mathcal{B}}
\newcommand{\CC}{\mathcal{C}}
\newcommand{\CM}{\mathcal{C}_M}
\newcommand{\OO}{\mathcal{O}}
\def\G{\mathcal{G}}
\renewcommand{\P}{\mathcal{P}}
\newcommand{\N}{\mathbb{N}}
\newcommand{\R}{\mathbb{R}}
\def\C{\mathbb{C}}
\newcommand{\K}{\mathbb{K}}
\newcommand{\Pt}{\ltrans{P}}
\newcommand{\<}{\langle}
\renewcommand{\>}{\rangle}
\renewcommand{\phi}{\varphi}
\newcommand{\na}{\nabla}
\newcommand{\eps}{\varepsilon}
\newcommand{\dV}{\,\mathrm{dV}}
\newcommand{\vol}{\,\mathrm{vol}}
\newcommand{\dA}{\,\mathrm{dA}}
\newcommand{\dAs}{\,\mathrm{dA}_s}
\newcommand{\supp}{\mathrm{supp}}
\newcommand{\ext}{\mathrm{ext}}
\newcommand{\res}{\mathrm{res}}
\newcommand{\Hom}{\mathrm{Hom}}
\renewcommand{\Re}{\mathrm{Re}}
\newcommand{\Coo}{C^\infty}
\newcommand{\CooA}{C_A^\infty}
\newcommand{\Cooc}{C_c^\infty}
\newcommand{\Coosc}{C_{sc}^\infty}
\newcommand{\Cootc}{C_{tc}^\infty}
\newcommand{\Coospc}{C_{spc}^\infty}
\newcommand{\Coosfc}{C_{sfc}^\infty}
\newcommand{\Coopc}{C_{pc}^\infty}
\newcommand{\Coofc}{C_{fc}^\infty}
\newcommand{\D}{\mathcal{D}'}
\newcommand{\Dc}{\mathcal{D}_c'}
\newcommand{\Dfc}{\mathcal{D}_{fc}'}
\newcommand{\Dpc}{\mathcal{D}_{pc}'}
\newcommand{\Dspc}{\mathcal{D}_{spc}'}
\newcommand{\Dsfc}{\mathcal{D}_{sfc}'}
\newcommand{\Dsc}{\mathcal{D}_{sc}'}
\newcommand{\Dtc}{\mathcal{D}_{tc}'}
\newtheorem{thm}{Theorem}[section]
\newtheorem{cor}[thm]{Corollary}
\newtheorem{prop}[thm]{Proposition}
\newtheorem{lem}[thm]{Lemma}
\theoremstyle{definition}
\newtheorem{ex}[thm]{Example}
\newtheorem{rem}[thm]{Remark}
\newtheorem{dfn}[thm]{Definition}
\newcommand{\dref}[1]{Definition~\ref{#1}}
\newcommand{\tref}[1]{Theorem~\ref{#1}}
\newcommand{\pref}[1]{Proposition~\ref{#1}}
\newcommand{\lref}[1]{Lemma~\ref{#1}}
\newcommand{\cref}[1]{Corollary~\ref{#1}}
\newcommand{\eref}[1]{Example~\ref{#1}}
\newcommand{\rref}[1]{Remark~\ref{#1}}
\newcommand{\diagref}[1]{Diagram~\ref{#1}}
\newcounter{diagram}
\newenvironment{diagram}[0]{\refstepcounter{diagram}}{} 
\newcommand{\diagramnumber}[1]{
\vspace{-8pt}
\begin{center}
{\sc Diagram}~\arabic{diagram}: \emph{#1}
\end{center}
}
\title{Green-hyperbolic operators on globally hyperbolic spacetimes}
\author{Christian B\"ar}
\address{Universit\"at Potsdam, Institut f\"ur Mathematik, Am Neuen Palais 10, 14469 Potsdam, Germany}
\email{baer@math.uni-potsdam.de}
\urladdr{http://geometrie.math.uni-potsdam.de/}
\keywords{Globally hyperbolic Lorentzian manifolds, Green-hyperbolic operators, wave operators, normally hyperbolic operators, Dirac-type operators, Green's operators, support system, symmetric hyperbolic system, Cauchy problem, energy estimate, finite propagation speed, locally covariant quantum field theory}
\subjclass[2010]{58J45,35L45,35L51,35L55,81T20}
\begin{document}

\begin{abstract}
Green-hyperbolic operators are linear differential operators acting on sections of a vector bundle over a Lorentzian manifold which possess advanced and retarded Green's operators.
The most prominent examples are wave operators and Dirac-type operators.
This paper is devoted to a systematic study of this class of differential operators.
For instance, we show that this class is closed under taking restrictions to suitable subregions of the manifold, under composition, under taking ``square roots'', and under the direct sum construction.
Symmetric hyperbolic systems are studied in detail.
\end{abstract}

\maketitle

%%%%%%%%%%%%%%%%%%%%%%%%%%%%%%%%%%%%%%%%%%%%%%%%%%%%%%%%%%%%%%%%%%%%%%%%%%%%%%%%%%%%%%%%%%%%%%%%%

\section*{Introduction}

Green-hyperbolic operators are certain linear differential operators acting on sections of a vector bundle over a Lorentzian manifold.
They are, by definition, those operators which possess advanced and retarded Green's operators.
The most prominent examples are normally hyperbolic operators (wave equations) and Dirac-type operators.
The reason for introducing them in \cite{BG12b} lies in the fact that they can be quantized;
one can canonically construct a bosonic locally covariant quantum field theory for them.

The aim of the present paper is to study Green-hyperbolic operators systematically from a geometric and an analytic perspective.
The underlying Lorentzian manifold must be well behaved for the analysis of hyperbolic operators.
In technical terms, it must be globally hyperbolic.
In the first section we collect material about such Lorentzian manifolds.
We introduce various compactness properties for closed subsets and show their interrelation.
These considerations will later be applied to the supports of sections.

In the second section we study various spaces of smooth sections of our vector bundle.
The crucial concept is that of a support system.
This is a family of closed subsets of our manifold with certain properties making it suitable for defining a good space of sections by demanding that their supports be contained in the support system.
We observe a duality principle;
a distributional section has support in a support system if and only if it extends to a continuous linear functional on test sections with support in the dual support system.

Green's operators and Green-hyperbolic differential operators are introduced in the third section.
We give various examples and show that the class of Green-hyperbolic operators is closed under taking restrictions to suitable subregions of the manifold, under composition, under taking ``square roots'', and under the direct sum construction.
This makes it a large and very flexible class of differential operators to consider.
We show that the Green's operators are unique and that they extend to several spaces of sections.
We argue that Green-hyperbolic operators are not necessarily hyperbolic in any PDE-sense and that they cannot be characterized in general by well-posedness of a Cauchy problem.

The fourth section is devoted to extending the Green's operators to distributional sections.
We show that an important analytical result for the causal propagator (the difference of the advanced and the retarded Green's operator), also holds when one replaces smooth by distributional sections.

In the last section we study symmetric hyperbolic systems over globally hyperbolic manifolds.
We provide detailed proofs of well-posedness of the Cauchy problem, finiteness of the speed of propagation and the existence of Green's operators.
The crucial step in these investigations is an energy estimate for the solution to such a symmetric hyperbolic system.
We conclude by observing that a symmetric hyperbolic system can be quantized in two ways;
one yields a \emph{bosonic} and the other one a \emph{fermionic} locally covariant quantum field theory.

\emph{Acknowledgments.}
It is a great pleasure to thank Klaus Fredenhagen, Ulrich Menne, and Miguel S\'anchez for very helpful discussions and an anonymous referee for very interesting suggestions.
Thanks also go to Sonderforschungs\-bereich~647 funded by Deutsche Forschungsgemeinschaft for financial support.

%%%%%%%%%%%%%%%%%%%%%%%%%%%%%%%%%%%%%%%%%%%%%%%%%%%%%%%%%%%%%%%%%%%%%%%%%%%%%%%%%%%%%%%%%%%%%%%%%

\section{Globally Hyperbolic Lorentzian Manifolds}

We summarize various facts about globally hyperbolic Lorentzian manifolds. 
For details the reader is referred to one of the classical textbooks \cite{BEE96,HE73,ON83}.
Throughout  this article, $M$ will denote a time oriented Lorentzian manifold.
We use the convention that the signature of $M$ is $(-+\cdots+)$.
Note that we do not specify the dimension of $M$ nor do we assume orientability or connectedness.

\subsection{Cauchy hypersurfaces}
A subset $\Sigma\subset M$ is called a \emph{Cauchy hypersurface} if every inextensible timelike curve in $M$ meets $\Sigma$ exactly once.
Any Cauchy hypersurface is a topological submanifold of codimension $1$.
All Cauchy hypersurfaces of $M$ are homeomorphic.

If $M$ possesses a Cauchy hypersurface then $M$ is called \emph{globally hyperbolic}.
This class of Lorentzian manifolds contains many important examples:
Minkowski space, Friedmann models, the Schwarzschild model and deSitter spacetime are globally hyperbolic.
Bernal and S\'anchez proved an important structural result \cite[Thm.~1.1]{BS05}:
Any globally hyperbolic Lorentzian manifold has a \emph{Cauchy temporal function}.
This is a smooth function $t:M\to\R$ with past-directed timelike gradient $\na t$ such that the levels $t^{-1}(s)$ are (smooth spacelike) Cauchy hypersurfaces if nonempty.

\subsection{Future and past}
From now on let $M$ always be globally hyperbolic.
For any $x\in M$ we denote by $J^+(x)$ the set all points that can be reached by future-directed causal curves emanating from $x$.
For any subset $A\subset M$ we put $J^+(A):=\bigcup_{x\in A}J^+(x)$.
If $A$ is closed so is $J^+(A)$.
We call a subset $A\subset M$ \emph{strictly past compact} if it is closed and there is a compact subset $K\subset M$ such that $A\subset J^+(K)$.
If $A$ is strictly past compact so is $J^+(A)$ because $J^+(A)\subset J^+(J^+(K))=J^+(K)$.
A closed subset $A\subset M$ is called \emph{future compact} if $A\cap J^+(x)$ is compact for all $x\in M$.
For example, if $\Sigma$ is a Cauchy hypersurface, then $J^-(\Sigma)$ is future compact.

We denote by $I^+(x)$ the set of all points in $M$ that can be reached by future-directed timelike curves emanating from $x$.
The set $I^+(x)$ is the interior of $J^+(x)$; in particular, it is an open subset of $M$.

Interchanging the roles of future and past, we similarly define $J^-(x)$, $J^-(A)$, $I^-(x)$, \emph{strictly future compact} and \emph{past compact} subsets of $M$. 
If $A\subset M$ is past compact and future compact then we call $A$ \emph{temporally compact}.
For any compact subsets $K_1,K_2\subset M$ the intersection $J^+(K_1)\cap J^-(K_2)$ is compact.
If $A$ is past compact so is $J^+(A)$ because $J^+(A)\cap J^-(x)=J^+(A\cap J^-(x)) \cap J^-(x)$.
Similarly, if $A$ is future compact then $J^-(A)$ is future compact too.
If $A$ is strictly past compact then it is past compact because $A\cap J^-(x)\subset J^+(K)\cap J^-(x)$ is compact.
Similarly, strictly future compact sets are future compact.

If we want to emphasize the ambient manifold $M$, then we write $J^+_M(x)$ instead of $J^+(x)$ and similarly for $J^-_M(x)$, $J^\pm_M(A)$, and $I^\pm_M(A)$.

\begin{ex}\label{ex:pastcompactbutnotstrict}
Let $M$ be Minkowski space and let $C\subset M$ be an open cone with tip $0$ containing the closed cone $J^-(0)\setminus\{0\}$.
Then $A=M\setminus C$ is past compact but not strictly past compact.
Indeed, for each $x\in M$, the set $J^-(x)\cap A=J^-(x)\setminus C$ is compact.
But $A$ is not strictly past compact because the intersection of $A$ and spacelike hyperplanes is not compact, compare \lref{lem:SpacelikeCompact} below (Fig.~1).
\begin{center}
\begin{pspicture}(-4,-2.6)(4,0.8)
\psline[fillstyle=solid,fillcolor=white](-2.1,-1.9)(0.1,0.3)(2.9,-2.5)
\psline[fillstyle=solid,fillcolor=gray](-3,-2.5)(0,0)(3,-2.5)
\psline[fillstyle=solid,fillcolor=lightgray,linestyle=dotted](-2.5,-2.5)(0,0)(2.5,-2.5)
\psdots(0,0)(0.1,0.3)
\uput[180](0.1,0.3){$x$}
\uput[270](0,0){\psframebox*[framearc=.3]{$0$}}
\uput[0](1.1,-0.6){$J^-(x)$}
\uput[90](0,-2.4){\psframebox*[framearc=.3]{$J^-(0)$}}
\uput[180](-0.7,-1.4){\psframebox*[framearc=.3]{$C$}}
\end{pspicture}

{\sc Fig.}~1:
\emph{Past-compact set which is not strictly past compact}
\end{center}
This example also shows that (surjective) Cauchy temporal functions need not be bounded from below on past compact sets.
However, we have:
%The situation is different if the Cauchy hypersurfaces of $M$ are compact, see \lref{lem:pastcompact}.
\end{ex}

\begin{lem}\label{lem:CauchyChar1}
For any closed subset $A\subset M$ the following are equivalent:
\begin{enumerate}[(i)]
\item\label{pc:1}
$A$ is past compact;
\item\label{pc:2}
there exists a smooth spacelike Cauchy hypersurface $\Sigma\subset M$ such that $A\subset J^+(\Sigma)$;
\item\label{pc:3}
there exists a surjective Cauchy temporal function $t:M\to\R$ which is bounded from below on $A$.
\end{enumerate}
\end{lem}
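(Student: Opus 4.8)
The plan is to prove the cycle $\ref{pc:3}\Rightarrow\ref{pc:2}\Rightarrow\ref{pc:1}\Rightarrow\ref{pc:3}$. For $\ref{pc:3}\Rightarrow\ref{pc:2}$, given a surjective Cauchy temporal function $t$ with $t\geq c$ on $A$, I would set $\Sigma:=t^{-1}(c)$; this is nonempty by surjectivity, hence a smooth spacelike Cauchy hypersurface. Since $\na t$ is past-directed timelike, $t$ decreases strictly along the future-directed integral curves of $\na t$ and, more generally, increases strictly along every future-directed causal curve; running the integral curve of $\na t$ from a point $p$ with $t(p)\geq c$ until it meets $\Sigma$ (which it must, being an inextensible timelike curve) shows $p\in J^+(\Sigma)$, and the converse inclusion is immediate from monotonicity, so $A\subseteq t^{-1}([c,\infty))=J^+(\Sigma)$. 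For $\ref{pc:2}\Rightarrow\ref{pc:1}$: $J^-(\Sigma)$ is future compact for any Cauchy hypersurface $\Sigma$, so interchanging past and future, $J^+(\Sigma)$ is past compact, i.e.\ $J^+(\Sigma)\cap J^-(x)$ is compact for every $x$; as $A$ is closed with $A\subseteq J^+(\Sigma)$, the set $A\cap J^-(x)$ is a closed subset of this compact set, hence compact.

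The substantial implication is $\ref{pc:1}\Rightarrow\ref{pc:3}$. First I would record that $A\cap J^-(K)$ is compact for every compact $K\subseteq M$: covering $K$ by finitely many chronological pasts $I^-(y_1),\dots,I^-(y_m)$ (each $x\in K$ lies in $I^-(y)$ for any $y\in I^+(x)$) gives $J^-(K)\subseteq\bigcup_i J^-(y_i)$, so $A\cap J^-(K)\subseteq\bigcup_i\big(A\cap J^-(y_i)\big)$ is a finite union of compacta and is closed. Fix a surjective Cauchy temporal function $\tau$ (Bernal--S\'anchez, post-composed with an increasing diffeomorphism of its range onto $\R$) and a compact exhaustion $M=\bigcup_n K_n$ with $K_{n-1}\subseteq K_n$. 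Put $B_n:=A\cap J^-(K_n)$, so $B_n$ is compact, $B_{n-1}\subseteq B_n$ and $\bigcup_n B_n=A$, and set $c_n:=\min_{B_n}\tau$, which is decreasing. If $(c_n)$ is bounded we are done with $t=\tau$. Otherwise $c_n\to-\infty$, and the level sets of $\tau$ must be tilted downward far out so as to slip below $A$: using a partition of unity subordinate to the exhaustion I would build a smooth $h\geq 0$ with $h\geq -c_n$ on $J^-(K_n)\setminus J^-(K_{n-1})$ and with gradient pointwise smaller than the timelike margin of $\na\tau$ (measured against an auxiliary complete Riemannian metric), so that $t:=\tau+h$ again has past-directed timelike gradient; then $t\geq c_n+(-c_n)=0$ on $B_n\setminus B_{n-1}$ for each $n$, hence $t\geq 0$ on $A$.

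The hard part will be ensuring that $t=\tau+h$ is still a \emph{Cauchy} temporal function, and not merely one with past-directed timelike gradient -- which alone does not force the levels to be Cauchy hypersurfaces. Since $c_n\to-\infty$ forces $h$ to be unbounded, the exhaustion $K_n$ and the growth rate of $h$ must be chosen so that along every past-inextensible causal curve $\tau$ still tends to $-\infty$ faster than $h$ grows, i.e.\ the shells $J^-(K_n)\setminus J^-(K_{n-1})$ must be ``thick'' in the causal direction; granting this, the levels of $t$ are Cauchy and $t$ is surjective after a harmless reparametrization. Equivalently one could run the same gluing to produce a smooth spacelike Cauchy hypersurface $\Sigma$ lying below $A$, establishing $\ref{pc:2}$ directly, and then appeal to the strengthened Bernal--S\'anchez theorem realizing any spacelike Cauchy hypersurface as a level set of a Cauchy temporal function. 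In either route the one genuinely delicate point is to keep the construction ``Cauchy'', and not just ``spacelike''.
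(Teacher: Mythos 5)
Your implications $\ref{pc:3}\Rightarrow\ref{pc:2}$ and $\ref{pc:2}\Rightarrow\ref{pc:1}$ are fine and are essentially what the paper does (the paper also cites \cite[Thm.~1.2]{BS06} for the reverse direction $\ref{pc:2}\Rightarrow\ref{pc:3}$, as you note at the end). The problem is the main implication $\ref{pc:1}\Rightarrow\ref{pc:3}$ (or $\ref{pc:1}\Rightarrow\ref{pc:2}$): your argument has a gap that you yourself flag and do not close. You propose to perturb a Cauchy temporal function $\tau$ to $t=\tau+h$ where $h\ge 0$ is built by a partition of unity subordinate to a compact exhaustion, chosen so that $t$ is bounded below on $A$ and $\nabla t$ stays past-directed timelike. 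But, as you correctly observe, having a past-directed timelike gradient does not by itself make the level sets Cauchy. Your fix --- choose the exhaustion so that the shells are ``thick enough in the causal direction'' that $\tau\to-\infty$ faster than $h$ grows along every past-inextensible causal curve --- is precisely the statement that needs a proof, and nothing in the sketch supplies it. ``Granting this'' is the entire content of the hard implication. Moreover, controlling $\nabla h$ pointwise against the ``timelike margin'' of $\nabla\tau$ while simultaneously forcing $h$ to blow up like $-c_n$ is a genuine competition: the margin can degenerate exactly where you need $h$ to grow, and you give no argument that the two requirements are compatible. So as written the proposal does not establish $\ref{pc:1}\Rightarrow\ref{pc:3}$.

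The paper's proof of $\ref{pc:1}\Rightarrow\ref{pc:2}$ avoids this entirely and is much softer. It observes that $J^+(A)$ is again past compact, and that $M':=M\setminus J^+(A)$ is an open past set (i.e.\ $J^-(M')=M'$) and hence globally hyperbolic in its own right. One then takes \emph{any} smooth spacelike Cauchy hypersurface $\Sigma$ of $M'$ (Bernal--S\'anchez applied to $M'$) and shows directly that $\Sigma$ is also a Cauchy hypersurface of $M$: an inextensible future-directed timelike curve $c$ in $M$, once it enters $J^+(A)$, stays there, and past compactness of $J^+(A)$ forces $c$ to spend a nontrivial past-inextensible initial segment $c_1$ in $M'$; since $c_1$ meets $\Sigma$ exactly once and the remainder of $c$ lies in $J^+(A)$ (disjoint from $\Sigma$), so does $c$. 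Then $A\subset J^+(A)\subset J^+(\Sigma)$ gives $\ref{pc:2}$, and $\ref{pc:3}$ follows from \cite{BS06} as you say. In short: rather than engineering a global temporal function by hand, the paper localizes to a globally hyperbolic subregion where the existence theorem can be applied off the shelf, and verifies a purely causal statement about inextensible curves. That replaces all of your delicate analytic estimates with a short argument about past sets. If you want to salvage your route you would have to actually carry out the estimate you describe; switching to the paper's past-set argument is considerably cleaner.
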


\begin{proof}
The implication ``\eqref{pc:3} $\Rightarrow$ \eqref{pc:2}'' is trivial and the inverse implication is a consequence of \cite[Thm.~1.2]{BS06}.
The implication ``\eqref{pc:2} $\Rightarrow$ \eqref{pc:1}'' is also trivial because $J^+(\Sigma)$ is past compact.
We only need to show ``\eqref{pc:1} $\Rightarrow$ \eqref{pc:2}''.

Let $A$ be past compact.
Then $J^+(A)$ is also past compact.
Moreover, $M':=M\setminus J^+(A)$ is an open subset of $M$ with the property $J^-(M')=M'$.
Hence $M'$ is globally hyperbolic itself.
Let $\Sigma$ be a smooth spacelike Cauchy hypersurface in $M'$.
Since $A\subset J^+(A) \subset J^+(\Sigma)$ it remains to show that $\Sigma$ is also a Cauchy hypersurface in $M$.

Let $c$ be an inextensible future-directed timelike curve in $M$.
Once $c$ has entered $J^+(A)$ it remains in $J^+(A)$.
Since $J^+(A)$ is past compact and $c$ is inextensible, $c$ must also meet $M'$.
Thus $c$ is the concatenation of an inextensible future-directed timelike curve $c_1$ in $M'$ and a (possibly empty) curve $c_2$ in $J^+(A)$.
Since $c_1$ meets $\Sigma$ exactly once, so does $c$.
This shows that $\Sigma$ is a Cauchy hypersurface in $M$ as well.
\end{proof}

Reversing future and past, we see that a closed subset $A\subset M$ is future compact if and only if $A\subset J^-(\Sigma)$ for some Cauchy hypersurface $\Sigma\subset M$.
This in turn is equivalent to the existence of a surjective Cauchy temporal function $t:M\to\R$ which is bounded from above on $A$.

Consequently, $A$ is temporally compact if and only if $A\subset J^+(\Sigma_1) \cap J^-(\Sigma_2)$ for some Cauchy hypersurfaces $\Sigma_1,\Sigma_2\subset M$.

\begin{lem}
For any past-compact subset $A\subset M$ there exists a past-compact subset $A'\subset M$ such that $A$ is contained in the interior of $A'$.
Analogous statements hold for future-compact sets and for temporally compact sets.
\end{lem}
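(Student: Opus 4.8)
The plan is to build a \emph{global slab} containing $A$, since a past-compact set need not be bounded (see \eref{ex:pastcompactbutnotstrict}) and therefore cannot in general be thickened by a tubular neighborhood of uniform size; the right tool is \lref{lem:CauchyChar1}.

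\emph{Past-compact case.} Since $A$ is past compact, \lref{lem:CauchyChar1} provides a surjective Cauchy temporal function $t:M\to\R$ and a constant $m\in\R$ with $t\ge m$ on $A$. As $t$ is surjective, $\Sigma:=t^{-1}(m-1)$ is a nonempty smooth spacelike Cauchy hypersurface, and I would set $A':=J^+(\Sigma)$. First, $A'$ is closed and past compact, being $J^+$ of a Cauchy hypersurface (this is the trivial implication ``\eqref{pc:2} $\Rightarrow$ \eqref{pc:1}'' in \lref{lem:CauchyChar1}). Second, one checks that $A'=\{x\in M:t(x)\ge m-1\}$: the inclusion ``$\subseteq$'' holds because $t$ is strictly increasing along future-directed causal curves, its gradient being past-directed timelike; for ``$\supseteq$'', if $t(x)\ge m-1$ then an inextensible timelike curve through $x$ meets $\Sigma$ exactly once, and by the strict monotonicity of $t$ along it the intersection point cannot lie strictly in the future of $x$, whence $x\in J^+(\Sigma)$. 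Consequently the open set $\{t>m-1\}$ is contained in $A'$, hence in $\mathrm{int}(A')$, and $A\subseteq\{t\ge m\}\subseteq\{t>m-1\}\subseteq\mathrm{int}(A')$, as required.

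\emph{Future-compact and temporally compact cases.} For future-compact $A$ one argues identically with the roles of future and past interchanged, using the time-reversed form of \lref{lem:CauchyChar1} recorded in the remark following it. If $A$ is temporally compact it is both past and future compact; applying the two previous cases I obtain a past-compact set $A_1$ and a future-compact set $A_2$, each containing $A$ in its interior, and put $A':=A_1\cap A_2$. Then $A'$ is closed; for every $x\in M$ the set $A'\cap J^-(x)$ is a closed subset of the compact set $A_1\cap J^-(x)$, hence compact, so $A'$ is past compact, and symmetrically $A'$ is future compact, i.e.\ temporally compact. Finally $A\subseteq\mathrm{int}(A_1)\cap\mathrm{int}(A_2)=\mathrm{int}(A_1\cap A_2)=\mathrm{int}(A')$.

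The argument is essentially routine. The only step that needs a little care is the identification $J^+(\Sigma)=\{t\ge m-1\}$, which rests on the monotonicity of a Cauchy temporal function along causal curves together with the defining property of a Cauchy hypersurface; everything else is elementary point-set topology, and I do not anticipate any real obstacle.
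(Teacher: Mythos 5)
Your proof is correct and takes essentially the same route as the paper: push a Cauchy hypersurface slightly to the past of $A$ and take its causal future as $A'$. The paper simply chooses a second Cauchy hypersurface $\Sigma'\subset I^-(\Sigma)$ rather than working with a level set of a Cauchy temporal function, but this is the same idea; your explicit identification $J^+(\Sigma)=\{t\ge m-1\}$ and the intersection argument for the temporally compact case are sound.
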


\begin{proof}
Let $A\subset M$ be past compact.
Choose a Cauchy hypersurface $\Sigma\subset M$ such that $A\subset J^+(\Sigma)$.
Choose a second Cauchy hypersurface $\Sigma' \subset I^-(\Sigma)$.
Then $A':=J^+(\Sigma')$ does the job.
\end{proof}

For $A\subset M$ we write $J(A):=J^+(A)\cup J^-(A)$.
We call $A$ \emph{spacially compact} if $A$ is closed and there exists a compact subset $K\subset M$ with $A\subset J(K)$.
We have the following analog to \lref{lem:CauchyChar1}:

\begin{lem}\label{lem:CauchyChar2}
For any closed subset $A\subset M$ the following holds:
\begin{enumerate}[(i)]
\item\label{spc:1}
$A$ is strictly past compact if and only if $A\subset J^+(K_\Sigma)$ for some compact subset $K_\Sigma$ of some smooth spacelike Cauchy hypersurface $\Sigma\subset M$;
\item\label{spc:2}
$A$ is strictly future compact if and only if $A\subset J^-(K_\Sigma)$ for some compact subset $K_\Sigma$ of some smooth spacelike Cauchy hypersurface $\Sigma\subset M$;
\item\label{spc:3}
$A$ is spacially compact if and only if $A\subset J(K_\Sigma)$ for some compact subset $K_\Sigma$ of \emph{any} Cauchy hypersurface $\Sigma\subset M$.
\end{enumerate}
\end{lem}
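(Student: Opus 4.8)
The three statements are closely parallel, so I would establish \eqref{spc:1} carefully and then obtain \eqref{spc:2} by reversing the time orientation; \eqref{spc:3} combines the two (with an extra observation about the "any Cauchy hypersurface" strengthening). For \eqref{spc:1}, the direction "$A\subset J^+(K_\Sigma)$ $\Rightarrow$ $A$ strictly past compact" is immediate from the definition, since $K_\Sigma$ is a compact subset of $M$. The substantive direction is the converse: given a compact $K$ with $A\subset J^+(K)$, I must produce a smooth spacelike Cauchy hypersurface $\Sigma$ and a \emph{compact} $K_\Sigma\subset\Sigma$ with $A\subset J^+(K_\Sigma)$. Since $A\subset J^+(K)$ and $J^+$ is monotone, it suffices to find such a $\Sigma$ and $K_\Sigma$ with $K\subset J^+(K_\Sigma)$, i.e. to "push $K$ down onto a Cauchy hypersurface while keeping it compact".

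The key step is the following construction. Pick any Cauchy temporal function $t:M\to\R$ (Bernal--S\'anchez). Since $K$ is compact, $t$ attains a minimum $s_0$ on $K$, and the level set $\Sigma:=t^{-1}(s_0)$ is a smooth spacelike Cauchy hypersurface lying weakly to the past of $K$ (every point of $K$ has $t$-value $\ge s_0$, hence lies in $J^+(\Sigma)$ by the properties of a Cauchy temporal function). Now set $K_\Sigma:=\Sigma\cap J^-(K)$. This is closed in $\Sigma$, and it is compact: indeed $K\subset J^+(K')$ for a compact $K'$ (namely $K$ itself), so $\Sigma\cap J^-(K)\subset J^+(K)\cap J^-(K)$ is a closed subset of the compact set $J^+(K)\cap J^-(K)$ (using the fact, recalled in the excerpt, that $J^+(K_1)\cap J^-(K_2)$ is compact for compact $K_1,K_2$). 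Finally $K\subset J^+(K_\Sigma)$: given $x\in K$, since $x\in J^+(\Sigma)$ there is a past-directed causal curve from $x$ meeting $\Sigma$ at some point $p$; then $p\in\Sigma\cap J^-(x)\subset\Sigma\cap J^-(K)=K_\Sigma$, so $x\in J^+(p)\subset J^+(K_\Sigma)$. Hence $A\subset J^+(K)\subset J^+(K_\Sigma)$, as required.

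Statement \eqref{spc:2} follows from \eqref{spc:1} by applying it in $M$ with the reversed time orientation. For \eqref{spc:3}, one direction is again trivial. For the other, suppose $A\subset J(K)=J^+(K)\cup J^-(K)$ for some compact $K$. Write $A=A^+\cup A^-$ where $A^+:=A\cap J^+(K)$ and $A^-:=A\cap J^-(K)$; these are closed, $A^+$ is strictly past compact and $A^-$ strictly future compact. Given \emph{any} Cauchy hypersurface $\Sigma$, I claim $K\subset J(\Sigma\cap J(K))$, and then $K_\Sigma:=\Sigma\cap J(K)$ works after checking compactness. Compactness of $K_\Sigma$: $\Sigma\cap J^+(K)$ is future compact (being a closed subset of the future-compact set $J^+(K)$... actually $J^+(K)$ is not future compact in general, so instead) — here one argues directly that $\Sigma\cap J^+(K)$ equals $\Sigma\cap J^+(K)\cap J^-(\Sigma')$ for a Cauchy hypersurface $\Sigma'$ with $K\subset J^-(\Sigma')$, hence is a closed subset of a compact set; symmetrically for $\Sigma\cap J^-(K)$. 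That $K\subset J(K_\Sigma)$ follows as in \eqref{spc:1}: each $x\in K$ lies on either side of $\Sigma$, and a causal curve from $x$ to $\Sigma$ meets $\Sigma$ inside $\Sigma\cap J(K)$.

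The main obstacle I anticipate is purely a matter of bookkeeping with compactness: ensuring at each stage that the set carved out of a Cauchy hypersurface is genuinely compact rather than merely closed. The clean way to handle this uniformly is to always sandwich the relevant piece between two Cauchy hypersurfaces (or between $J^+(K)$ and $J^-(K)$ for the original compact $K$) so that it becomes a closed subset of a set of the form $J^+(K_1)\cap J^-(K_2)$, which the excerpt has already noted is compact. Once that pattern is in place, each of the three statements is a short diagram chase using monotonicity of $J^\pm$ and the defining property of Cauchy (temporal) functions.
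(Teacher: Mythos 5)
Your construction of $K_\Sigma$ and the overall skeleton of the argument match the paper's proof exactly: choose a smooth spacelike Cauchy hypersurface $\Sigma$ with $K\subset J^+(\Sigma)$ (you do this by taking $\Sigma=t^{-1}(\min_K t)$, the paper simply posits such a $\Sigma$), set $K_\Sigma:=\Sigma\cap J^-(K)$, and then chase inclusions. That part is sound.

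However, your justification for the compactness of $K_\Sigma$ is incorrect. You claim $\Sigma\cap J^-(K)\subset J^+(K)\cap J^-(K)$, but this inclusion is false in general. Take $M$ to be $2$-dimensional Minkowski space, $K=\{(0,0),(1,0)\}$, and (with your choice of $\Sigma$) $\Sigma=\{t=0\}$. Then $(0,1)\in\Sigma\cap J^-(1,0)\subset\Sigma\cap J^-(K)$, yet $(0,1)\notin J^+(0,0)\cup J^+(1,0)=J^+(K)$. The point is that $\Sigma$ lies weakly to the past of $K$, which forces nothing about membership in $J^+(K)$ for points of $\Sigma$ that are merely causally below \emph{some} point of $K$. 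The correct route is the one the paper uses implicitly and spells out in the proof of \lref{lem:SpacelikeCompact}: for any Cauchy hypersurface $\Sigma$ and compact $K$, the set $\Sigma\cap J^-(K)$ is compact because it is contained in $J^+(\Sigma)\cap J^-(K)$, and each $J^+(\Sigma)\cap J^-(x)$ is compact by O'Neill's Lemma~40, with $K$ covered by finitely many $I^-(x_i)$ so that $J^-(K)\subset\bigcup_i J^-(x_i)$.

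The same issue propagates into your treatment of part \eqref{spc:3}. You correctly observe that ``$J^+(K)$ is future compact'' is not a valid shortcut, but your proposed repair --- that $\Sigma\cap J^+(K)=\Sigma\cap J^+(K)\cap J^-(\Sigma')$ for some Cauchy hypersurface $\Sigma'$ with $K\subset J^-(\Sigma')$ --- does not hold either: a point of $\Sigma\cap J^+(K)$ need not lie in $J^-(\Sigma')$ unless one already knows $\Sigma\cap J^+(K)$ is bounded, which is what you are trying to prove. Again, the fix is to use the compactness of $\Sigma\cap J^\pm(K)$ for any Cauchy hypersurface $\Sigma$ and compact $K$, exactly as in \lref{lem:SpacelikeCompact}. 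With that lemma (or the underlying O'Neill compactness fact) in hand, your inclusion chases for \eqref{spc:1}--\eqref{spc:3} go through and reproduce the paper's proof.
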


\begin{proof}
One direction in \eqref{spc:1} is trivial: 
if $A\subset J^+(K_\Sigma)$, then $A$ is strictly past compact by definition.
Conversely, let $A$ be strictly past compact and let $K\subset M$ be a compact subset such that $A\subset J^+(K)$.
Then choose a smooth spacelike Cauchy hypersurface $\Sigma\subset M$ such that $K\subset J^+(\Sigma)$ and put $K_\Sigma := \Sigma \cap J^-(K)$.
Then $K_\Sigma$ is compact and 
\[
A \subset J^+(K) \subset J^+(J^+(\Sigma)\cap J^-(K)) = J^+(\Sigma\cap J^-(K)) = J^+(K_\Sigma).
\]
The proof of \eqref{spc:2} is analogous.
As to \eqref{spc:3}, if $A$ is spacially compact and $\Sigma\subset M$ a Cauchy hypersurface, then $K_\Sigma:=\Sigma\cap J(K)$ does the job.
\end{proof}

We have the following diagram of implications of possible properties of a closed subset of $M$:
\begin{diagram}\label{diag:closed1}
\begin{equation*}
\xymatrix@R=4pt{
& \mbox{compact} \ar@{=>}[dl]\ar@{=>}[dr]& \\
\mbox{strictly past compact}\ar@{=>}[dr]\ar@{=>}[dd] && \mbox{strictly future compact}\ar@{=>}[dl]\ar@{=>}[dd] \\
& \mbox{spacially compact} & \\
\mbox{past compact} && \mbox{future compact} \\
& \mbox{temporally compact}\ar@{=>}[ul]\ar@{=>}[ur] & }
\end{equation*}

\vspace{1mm}
\diagramnumber{Possible properties of closed subsets}
\end{diagram}

\subsection{Spacially compact manifolds}
None of the reverse implications in the diagram holds in general.
In a special case however, the diagram simplifies considerably, see \rref{rem:Mspacelikecompact}.
The terminology ``spacially compact'' is justified by the following lemma:

\begin{lem}\label{lem:SpacelikeCompact}
Let $A\subset M$ be spacially compact and let $\Sigma\subset M$ be a Cauchy hypersurface.
Then $A\cap\Sigma$ is compact.
\end{lem}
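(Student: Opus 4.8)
The plan is to derive the statement from \lref{lem:CauchyChar2}\eqref{spc:3}. Since $A$ is spacially compact, that part of the lemma, applied to the given Cauchy hypersurface $\Sigma$, provides a \emph{compact} subset $K_\Sigma\subset\Sigma$ with $A\subset J(K_\Sigma)$. Intersecting with $\Sigma$ gives $A\cap\Sigma\subset J(K_\Sigma)\cap\Sigma$, so it suffices to show that $J(K_\Sigma)\cap\Sigma$ is compact.

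To this end I would establish the identity $J(K_\Sigma)\cap\Sigma=K_\Sigma$. The inclusion ``$\supset$'' holds because $K_\Sigma\subset\Sigma$. For ``$\subset$'', let $q\in\Sigma$ with $q\in J^+(p)\cup J^-(p)$ for some $p\in K_\Sigma\subset\Sigma$; then $p$ and $q$ are two causally related points lying on the Cauchy hypersurface $\Sigma$. Since a Cauchy hypersurface is met by every inextensible timelike curve exactly once it is achronal, and using this together with the Cauchy property one concludes that two distinct points of $\Sigma$ cannot be causally related, hence $q=p$. Thus $A\cap\Sigma\subset K_\Sigma$, and in particular $J(K_\Sigma)\cap\Sigma=K_\Sigma$ is compact.

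It remains to observe that $A\cap\Sigma$ is closed in $\Sigma$, since $A$ is closed in $M$; being a closed subset of the compact set $K_\Sigma$, it is therefore compact. I expect the only delicate step to be the inclusion $J(K_\Sigma)\cap\Sigma\subset K_\Sigma$: ruling out two distinct points of $\Sigma$ joined by a causal, necessarily lightlike, curve is exactly the place where one uses that $\Sigma$ is a Cauchy hypersurface and not merely an achronal hypersurface, so this is the natural candidate for the main obstacle. An essentially equivalent variant avoids the identity altogether: one takes a compact $K\subset M$ with $A\subset J(K)$ directly from the definition of spacial compactness and invokes the compactness of $\Sigma\cap J(K)$ — the very fact already used in the proof of \lref{lem:CauchyChar2} — so that $A\cap\Sigma$ appears again as a closed subset of a compact set.
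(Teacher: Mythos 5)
Your plan correctly isolates the delicate step, but that step in fact fails. A Cauchy hypersurface in the sense used here (a set met \emph{exactly once} by every inextensible \emph{timelike} curve) is achronal, but it need not be acausal: it may contain a null geodesic segment, along which distinct points of $\Sigma$ are causally related. A concrete counterexample in $2$-dimensional Minkowski space is the Lipschitz graph $\Sigma=\{(t,x): t=\sigma(x)\}$ with $\sigma(x)=\max(0,\min(x,1))$. One checks (e.g.\ by noting that $t\mapsto t-\sigma(x(t))$ is strictly increasing and unbounded along any inextensible timelike curve $t\mapsto(t,x(t))$) that $\Sigma$ is a Cauchy hypersurface; yet it contains the null segment from $(0,0)$ to $(1,1)$. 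Taking $K_\Sigma=\{(0,0)\}$, the set $J(K_\Sigma)\cap\Sigma$ is that entire segment, not $K_\Sigma$. So your identity $J(K_\Sigma)\cap\Sigma=K_\Sigma$ is false in general, and the inference ``two distinct points of $\Sigma$ cannot be causally related'' does not follow from the paper's definition of Cauchy hypersurface. (It would hold if $\Sigma$ were assumed smooth and spacelike, but \lref{lem:SpacelikeCompact} is stated for arbitrary Cauchy hypersurfaces, and it is applied in that generality in \lref{lem:compactCauchy}.)

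There is a second, independent problem: both your main route and your fallback are circular. The proof of \lref{lem:CauchyChar2}\eqref{spc:3} in the paper simply \emph{asserts} that $K_\Sigma:=\Sigma\cap J(K)$ ``does the job''; the compactness of $\Sigma\cap J(K)$ is precisely the content of \lref{lem:SpacelikeCompact} applied to the spacially compact set $J(K)$. So invoking \lref{lem:CauchyChar2}\eqref{spc:3}, or invoking ``the fact already used in its proof,'' defers to the very statement you are trying to prove. What you need, and what the paper supplies, is a direct argument: for each $x\in M$ the set $J^-(x)\cap J^+(\Sigma)$ is compact (this is O'Neill's Lemma~40), hence $J^-(x)\cap\Sigma$ is compact; covering the compact set $K$ by finitely many chronological pasts $I^-(x_1),\ldots,I^-(x_n)$ gives $J^-(K)\subset\bigcup_i J^-(x_i)$ and therefore $\Sigma\cap J^-(K)$ is compact; the same holds for $\Sigma\cap J^+(K)$; finally $A\cap\Sigma$ is a closed subset of $\Sigma\cap J(K)$. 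This finite-cover argument replaces both the false acausality claim and the circular appeal to \lref{lem:CauchyChar2}.
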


\begin{proof}
For any $x\in M$ the intersection $J^-(x)\cap J^+(\Sigma)$ is compact by Lemma~40 in \cite[p.~423]{ON83}.
Thus $J^-(x)\cap \Sigma$ is compact as well.
Let $K\subset M$ be compact with $A\subset J(K)$.
The sets $I^-(x)$ where $x\in M$ form an open cover of $M$.
Hence there are finitely many points $x_1,\ldots,x_n$ such that $K\subset \bigcup_{i=1}^n I^-(x_i)$.
Then we have $J^-(K)\subset \bigcup_{i=1}^n J^-(x_i)$.
Hence $\Sigma\cap J^-(K) \subset \bigcup_{i=1}^n (\Sigma\cap J^-(x_i))$ is compact.

Similarly, one shows that $\Sigma\cap J^+(K)$ is compact.
Thus $\Sigma\cap A\subset \Sigma\cap J(K)$ is compact as well.
\end{proof}

Recall that since all Cauchy hypersurfaces are homeomorphic they are all compact or all noncompact.

\begin{lem}\label{lem:compactCauchy}
The globally hyperbolic manifold $M$ is spacially compact if and only if it has compact Cauchy hypersurfaces.
\end{lem}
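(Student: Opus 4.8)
The plan is to establish the two implications separately, the harder-looking direction being in fact a one-line consequence of \lref{lem:SpacelikeCompact}. Suppose first that $M$ is spacially compact. Then $M$ is a closed subset of itself which is spacially compact by hypothesis, so \lref{lem:SpacelikeCompact} applied with $A=M$ and an arbitrary Cauchy hypersurface $\Sigma\subset M$ shows that $\Sigma=M\cap\Sigma$ is compact. Since all Cauchy hypersurfaces of $M$ are homeomorphic, they are then all compact, so $M$ has compact Cauchy hypersurfaces.

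Conversely, assume $M$ has a compact Cauchy hypersurface; again by homeomorphy every Cauchy hypersurface is compact, so fix one and call it $\Sigma$. The plan is to show $M=J(\Sigma)=J^+(\Sigma)\cup J^-(\Sigma)$: since $M$ is trivially closed and $\Sigma$ is compact, this exhibits $M$ as spacially compact with $K=\Sigma$. For the equality it suffices to check $M\subset J(\Sigma)$, the reverse inclusion being automatic. Given $x\in M$, there is an inextensible timelike curve $c$ through $x$ --- for instance the curve $s\mapsto(s,p)$ through $x=(t(x),p)$ in the splitting $M\cong\R\times\Sigma$ attached to a Cauchy temporal function $t$, cf.\ \cite{BS05}. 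By definition of a Cauchy hypersurface, $c$ meets $\Sigma$ in a point $q$, and since $x$ and $q$ lie on a common (future- or past-directed) causal curve we get $x\in J^+(q)\cup J^-(q)\subset J(\Sigma)$, as desired.

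I do not expect a genuine obstacle here. The only step that calls for a moment's care is the existence of an inextensible timelike curve through an arbitrary point of $M$, which is precisely where the Bernal--S\'anchez structure result recalled above enters; alternatively one could extend a timelike geodesic to its maximal domain, or simply invoke the classical fact that a Cauchy hypersurface $\Sigma$ has Cauchy development $D(\Sigma)=M$ with $D(\Sigma)\subset J(\Sigma)$, see \cite[Ch.~14]{ON83}. Everything else is bookkeeping with the definitions.
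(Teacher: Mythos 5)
Your proof is correct and follows exactly the same strategy as the paper's: the forward direction is an immediate application of \lref{lem:SpacelikeCompact} with $A=M$, and the converse rests on the equality $M=J(\Sigma)$ for a compact Cauchy hypersurface $\Sigma$. The only difference is that you spell out why $M\subset J(\Sigma)$ (via inextensible timelike curves meeting $\Sigma$), which the paper leaves implicit.
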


\begin{proof}
If the Cauchy hypersurfaces are compact, let $\Sigma$ be one of them.
Then $M=J(\Sigma)$, hence $M$ is spacially compact.

Conversely, if $M$ is spacially compact, then \lref{lem:SpacelikeCompact} with $A=M$ shows that the Cauchy hypersurfaces are compact.
\end{proof}

\begin{lem}\label{lem:pastcompact}
Let $M$ be globally hyperbolic and spacially compact.
Let $A\subset M$ be closed.
Then the following are equivalent:
\begin{enumerate}[(i)]
\item\label{eq:pastcompact1}
$A$ is strictly past compact;
\item\label{eq:pastcompact2}
$A$ is past compact;
\item\label{eq:pastcompact3}
some Cauchy temporal function $t:M\to\R$ attains its minimum on $A$;
\item\label{eq:pastcompact4}
all Cauchy temporal functions $t:M\to\R$ attain their minima on $A$.
\end{enumerate}
\end{lem}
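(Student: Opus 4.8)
The plan is to close the cycle $\eqref{eq:pastcompact1}\Leftrightarrow\eqref{eq:pastcompact2}$, $\eqref{eq:pastcompact4}\Rightarrow\eqref{eq:pastcompact3}\Rightarrow\eqref{eq:pastcompact1}$, $\eqref{eq:pastcompact1}\Rightarrow\eqref{eq:pastcompact4}$; all but the last implication are short. I will freely use that a Cauchy temporal function $t$ is strictly increasing along every future-directed causal curve (its gradient being past-directed timelike), so that $t$ is non-decreasing along the relation $J^+$ and, for any $r$ in the range of $t$, an inextensible timelike curve through a point of $\{t\ge r\}$ hits the Cauchy hypersurface $t^{-1}(r)$ no later than that point; hence $\{t\ge r\}\subseteq J^+(t^{-1}(r))$ and symmetrically $\{t\le r\}\subseteq J^-(t^{-1}(r))$.

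The quick implications: $\eqref{eq:pastcompact1}\Rightarrow\eqref{eq:pastcompact2}$ is \diagref{diag:closed1}, and $\eqref{eq:pastcompact4}\Rightarrow\eqref{eq:pastcompact3}$ holds because a globally hyperbolic $M$ admits a Cauchy temporal function. For $\eqref{eq:pastcompact2}\Rightarrow\eqref{eq:pastcompact1}$, \lref{lem:CauchyChar1} yields a smooth spacelike Cauchy hypersurface $\Sigma$ with $A\subseteq J^+(\Sigma)$, and $\Sigma$ is compact by \lref{lem:compactCauchy} since $M$ is spacially compact, so $A$ is strictly past compact. For $\eqref{eq:pastcompact3}\Rightarrow\eqref{eq:pastcompact1}$, if $t$ attains its minimum $m$ on $A$ then $m$ is a value of $t$, so $\Sigma_m:=t^{-1}(m)$ is a nonempty --- hence, by spacial compactness, compact --- Cauchy hypersurface, and $A\subseteq\{t\ge m\}\subseteq J^+(\Sigma_m)$ gives strict past compactness.

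The substance is $\eqref{eq:pastcompact1}\Rightarrow\eqref{eq:pastcompact4}$. We may assume $A\ne\emptyset$ and, since a compact set meets only finitely many connected components and $J^+$ stays within a component, that $M$ is connected. Fix a compact $K$ with $A\subseteq J^+(K)$ and an arbitrary Cauchy temporal function $t$. As $t$ is non-decreasing along $J^+$, we have $t\ge\min_K t$ on $A$, so $m:=\inf_A t$ is finite. The range of $t$ is an open interval ($t$ is a submersion, hence open), it contains $\min_K t$, and $\min_K t\le m\le t(x_0)$ for any fixed $x_0\in A$; thus $m$ lies strictly inside the range, and we may pick $s>m$ in the range. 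Then $\Sigma_s:=t^{-1}(s)$ is a compact Cauchy hypersurface. Choose $x_n\in A$ with $t(x_n)\to m$; discarding finitely many terms so that $t(x_n)\le s$, we obtain
\[
x_n\in A\cap\{t\le s\}\subseteq J^+(K)\cap J^-(\Sigma_s),
\]
which is compact, being the intersection of the causal future and past of compact sets. Hence $(x_n)$ has a subsequence converging to some $x\in A$ ($A$ being closed), and $t(x)=m$ by continuity, so $t$ attains its minimum on $A$. As $t$ was arbitrary, $\eqref{eq:pastcompact4}$ follows.

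The geometric core --- trapping a minimizing sequence between a compact set and a compact Cauchy level set and invoking compactness of $J^+(K_1)\cap J^-(K_2)$ --- is routine. I expect the one delicate point in $\eqref{eq:pastcompact1}\Rightarrow\eqref{eq:pastcompact4}$ to be the verification that $m=\inf_A t$ is attained as a \emph{value} of $t$, which is needed in order to obtain a genuine (hence compact) Cauchy hypersurface slightly above level $m$: this forces one to remember that a Cauchy temporal function need not be surjective --- the same subtlety underlying \eref{ex:pastcompactbutnotstrict} --- and to dispose of the disconnected case componentwise.
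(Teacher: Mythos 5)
Your proof is correct and follows essentially the same route as the paper's: the crucial implication \eqref{eq:pastcompact1}$\Rightarrow$\eqref{eq:pastcompact4} is handled in both by trapping a minimizing sequence (in the paper, the sublevel part of $A$) in the compact set $J^+(K)\cap J^-(\Sigma_s)$ for a compact Cauchy level $\Sigma_s$ chosen just above $\inf_A t$, and the point you flag --- that such an $s$ must be verified to lie in the range of $t$ --- is precisely the subtlety the paper also has to address (there, in \eqref{eq:pastcompact3}$\Rightarrow$\eqref{eq:pastcompact2}, by reparametrizing $t$ to be surjective). Your \eqref{eq:pastcompact3}$\Rightarrow$\eqref{eq:pastcompact1} via $A\subseteq J^+(\Sigma_m)$ is a slight shortcut over the paper's \eqref{eq:pastcompact3}$\Rightarrow$\eqref{eq:pastcompact2} (which reparametrizes and then invokes \lref{lem:CauchyChar1}), while your reduction to connected $M$ is harmless but unnecessary: a Cauchy temporal function must take the same interval of values on every connected component --- otherwise some nonempty level $t^{-1}(s)$ would miss every inextensible timelike curve in a component it fails to reach, contradicting that it is a Cauchy hypersurface --- so $t(M)$ is already a single open interval and the compact $K$ need not be split.
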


% \begin{proof}
% The implications ``\eqref{eq:pastcompact1}$\Rightarrow$\eqref{eq:pastcompact2}'' and ``\eqref{eq:pastcompact4}$\Rightarrow$\eqref{eq:pastcompact3}'' are clear.
% We show ``\eqref{eq:pastcompact3}$\Rightarrow$\eqref{eq:pastcompact1}''.
% Let $s\in\R$ be the minimum of $t$ on $A$ and let $\Sigma=t^{-1}(s)$ be the corresponding Cauchy hypersurface.
% Then $A\subset J^+(\Sigma)$ and since $\Sigma$ is compact by \lref{lem:compactCauchy}, this shows that $A$ is strictly past compact.
% 
% It remains to show ``\eqref{eq:pastcompact2}$\Rightarrow$\eqref{eq:pastcompact4}''.
% Let $A$ be past compact.
% Let $t:M\to \R$ be a Cauchy temporal function and let $s\in\R$ be in the image of $t$.
% Let $\Sigma=t^{-1}(s)$ be the corresponding Cauchy hypersurface.
% The sets $U_x:=\Sigma\cap I^-(x)$ where $x\in M$ provide an open cover of $\Sigma$.
% Since $\Sigma$ is compact, there exist $x_1,\ldots,x_n$ such that $\Sigma\subset \bigcup_{i=1}^n I^-(x_i)$.
% Then $t^{-1}((-\infty,s])=J^-(\Sigma)\subset \bigcup_{i=1}^n I^-(x_i)$.
% Since each $J^-(x_i)\cap A$ is compact, $t$ attains its minimum on every $J^-(x_i)\cap A$ and hence on $A$. 
% \end{proof}

\begin{proof}
Since the Cauchy hypersurfaces of $M$ are compact, Lemmas~\ref{lem:CauchyChar1} and \ref{lem:CauchyChar2} show ``\eqref{eq:pastcompact1}$\Leftrightarrow$\eqref{eq:pastcompact2}''.
The implication ``\eqref{eq:pastcompact4}$\Rightarrow$\eqref{eq:pastcompact3}'' is clear.
To show ``\eqref{eq:pastcompact1}$\Rightarrow$\eqref{eq:pastcompact4}'' let $A\subset J^+(K)$ for some compact subset $K\subset M$ and let $t$ be a Cauchy temporal function.
Choose $T$ larger than the infimum of $t$ on $A$.
Since $A\cap J^-(t^{-1}(T))$ is contained in the compact set $J^+(K)\cap t^{-1}((-\infty,T) = J^+(K)\cap J^-(t^{-1}(T))$, the function $t$ attains its minimum $t_0$ on this set.
On the rest of $A$, the values of $t$ are even larger than $T$, hence $t_0$ is the minimum of $t$ on all of $A$.

As to ``\eqref{eq:pastcompact3}$\Rightarrow$\eqref{eq:pastcompact2}'', let $t:M\to\R$ be a Cauchy temporal function which attains its minimum on $A$.
By composing with an orientation-preserving diffeomorphism $t(M)\to\R$, we may w.l.o.g.\ assume that $t$ is surjective.
Now \lref{lem:CauchyChar1} shows that $A$ is past compact.
\end{proof}

\begin{rem}\label{rem:Mspacelikecompact}
If $M$ is spacially compact, then every closed subset of $A\subset M$ is spacially compact.
Moreover, if $A$ is temporally compact, then any Cauchy temporal function $t:M\to \R$ attains its maximum $s_+$ and its minimum $s_-$ by \lref{lem:pastcompact}.
Thus $A\subset t^{-1}([s_-,s_+]) \approx \Sigma\times[s_-,s_+]$ where $\Sigma=t^{-1}(s_-)$ is a Cauchy hypersurface.
Since $\Sigma$ is compact, so is $A$.

Summarizing, \diagref{diag:closed1} of implications for closed subsets simplifies as follows for spacially compact $M$:
\begin{diagram}\label{diag:closed2}
\begin{equation*}
\xymatrix{
\mbox{strictly past compact} \ar@{<=>}[d]& \mbox{compact}\ar@{=>}[l]\ar@{=>}[r]\ar@{<=>}[d] & \mbox{strictly future compact}\ar@{<=>}[d] \\
\mbox{past compact} & \mbox{temporally compact}\ar@{=>}[l]\ar@{=>}[r] & \mbox{future compact}
}
\end{equation*}

\vspace{2mm}
\diagramnumber{Closed subsets of a spacially compact manifold}
\end{diagram}
\end{rem}

\subsection{Duality}
We will need the following duality result:

\begin{lem}\label{lem:duality}
Let $M$ be globally hyperbolic and let $A\subset M$ be closed.
Then the following holds:
\begin{enumerate}[(i)]
\item\label{eq:pastcompact}
$A$ is past compact if and only if $A\cap B$ is compact for all strictly future compact sets $B$;
\item\label{eq:futurecompact}
$A$ is future compact if and only if $A\cap B$ is compact for all strictly past compact sets $B$;
\item\label{eq:timelikecompact}
$A$ is temporally compact if and only if $A\cap B$ is compact for all spacially compact sets $B$;
\item\label{eq:strictlypastcompact}
$A$ is strictly past compact if and only if $A\cap B$ is compact for all future compact sets $B$;
\item\label{eq:strictlyfuturecompact}
$A$ is strictly future compact if and only if $A\cap B$ is compact for all past compact sets $B$;
\item\label{eq:spacelikecompact}
$A$ is spacially compact if and only if $A\cap B$ is compact for all temporally compact sets $B$.
\end{enumerate}
\end{lem}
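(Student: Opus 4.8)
The plan rests on two structural observations. First, the six equivalences come in three pairs swapped by reversing the time orientation of $M$: \eqref{eq:pastcompact}$\leftrightarrow$\eqref{eq:strictlyfuturecompact}, \eqref{eq:futurecompact}$\leftrightarrow$\eqref{eq:strictlypastcompact}, \eqref{eq:timelikecompact}$\leftrightarrow$\eqref{eq:spacelikecompact}, so it suffices to prove \eqref{eq:pastcompact}, \eqref{eq:strictlypastcompact}, \eqref{eq:timelikecompact}, \eqref{eq:spacelikecompact}. Second, writing $A\perp B$ for ``$A\cap B$ is compact'', these pairs are mutually $\perp$-dual: e.g.\ \eqref{eq:pastcompact} says a closed set is past compact iff it is $\perp$ to all strictly future compact sets, while \eqref{eq:strictlyfuturecompact} is the converse characterisation. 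Each single implication is thus either ``easy'', producing a $\perp$-condition or a ``weak'' compactness property from a stronger one, or ``hard'', producing ``strictly past compact'' or ``spacially compact''.

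For the easy implications one direction uses a finite cover and the other tests against a single well-chosen set. If $A$ is past compact and $B\subset J^-(K)$ with $K$ compact, cover $K$ by finitely many diamonds $I^-(x_i)$, so $J^-(K)\subset\bigcup_iJ^-(x_i)$ and $A\cap B\subset\bigcup_i(A\cap J^-(x_i))$ is a closed subset of a finite union of compacta, hence compact: this is ``$\Rightarrow$'' of \eqref{eq:pastcompact}. The same trick, together with the compactness of $J^+(x)\cap J^-(y)$ and the future compactness of $J^-(\Sigma)$ (both recorded earlier and used for \lref{lem:SpacelikeCompact}), shows $J^+(K)\cap J^-(\Sigma)$ is compact and gives ``$\Rightarrow$'' of \eqref{eq:strictlypastcompact}; intersecting past and future versions yields ``$\Rightarrow$'' of \eqref{eq:timelikecompact} and \eqref{eq:spacelikecompact}. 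Conversely, since $J^-(x)$ is strictly future compact, ``$A\perp B$ for all strictly future compact $B$'' already forces $A$ past compact (``$\Leftarrow$'' of \eqref{eq:pastcompact}); testing against $J^+(x)$ and $J^-(x)$, which are spacially compact, gives ``$\Leftarrow$'' of \eqref{eq:timelikecompact}.

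There remain the hard implications, ``$\Leftarrow$'' of \eqref{eq:strictlypastcompact} and of \eqref{eq:spacelikecompact}. I would reduce the first to the second using the elementary fact that a closed set is strictly past compact iff it is past compact \emph{and} spacially compact; its nontrivial direction follows again from the cover trick, since if $A\subset J(K_1)$ is past compact then $A\cap J^-(K_1)=:K_2$ is compact and $A\subset J^+(K_1)\cup K_2\subset J^+(K_1\cup K_2)$. Now if $A\cap B$ is compact for every future compact $B$, then testing against $J^-(x)$ gives that $A$ is past compact; since every temporally compact set is future compact, $A$ is $\perp$ to all temporally compact sets, so by \eqref{eq:spacelikecompact} it is spacially compact, hence strictly past compact. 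So everything comes down to ``$\Leftarrow$'' of \eqref{eq:spacelikecompact}: a closed $A$ that is $\perp$ to all temporally compact sets is spacially compact.

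I would prove this contrapositively. Suppose $A$ is not spacially compact; fix a Cauchy temporal function $t$, the smooth splitting $M\cong\R\times\Sigma$ of \cite{BS05}, and the projection $\pi\colon M\to\Sigma$. By the characterisation of spacial compactness in \lref{lem:CauchyChar2} there is a compact exhaustion $K_1\subset K_2\subset\cdots$ of $\Sigma$ and points $p_j\in A\setminus J(\{0\}\times K_j)$; since otherwise the vertical timelike segment through $p_j$ would put $p_j$ in $J(\{0\}\times K_j)$, we get $q_j:=\pi(p_j)\notin K_j$, hence $q_j\to\infty$ in $\Sigma$ and $p_j\to\infty$ in $M$. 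If the heights $s_j:=t(p_j)$ stay in some $[-N,N]$, then $B:=t^{-1}([-N,N])$ is temporally compact and $A\cap B\supset\{p_j\}$ is not compact, a contradiction. Otherwise, after a subsequence, $s_j\to+\infty$ (the case $-\infty$ is symmetric), and it suffices to build a smooth spacelike Cauchy hypersurface $\Sigma'$ through all the $p_j$: then $\Sigma'\subset J^+(\Sigma')\cap J^-(\Sigma'')$ for $\Sigma''$ any Cauchy hypersurface strictly to its future (e.g.\ the graph shifted up), this set is temporally compact and closed, and its intersection with $A$ contains the escaping sequence $\{p_j\}$ -- closed but not compact, again contradicting the hypothesis. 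I would take $\Sigma'$ a graph $\{(f(q),q):q\in\Sigma\}$ with $f$ smooth, equal to $s_j$ near $q_j$ and to $0$ off slightly larger neighbourhoods. The one genuine difficulty is to make this graph a Cauchy hypersurface and not merely spacelike -- a spacelike graph bending toward null at infinity, such as a hyperboloid, can fail to be Cauchy. Passing to the conformal metric $g/\beta$, $g=-\beta\,dt^2+g_t$ (Cauchy temporal functions, causal relations and spacelikeness being conformally invariant), the spacelike condition becomes $|df|_{g_{f(q)}}<1$; choosing the $q_j$ sparse enough and their neighbourhoods large enough -- large relative to $s_j$ and to the distortion of the $g_u$, $u\in[0,s_j]$, over the neighbourhood -- makes $|df|$ uniformly below $1$ in all relevant metrics. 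This uniform bound forces $t-f\circ\pi$ to be strictly increasing along every future-directed timelike curve and to tend to $\pm\infty$ at its ends, so each inextensible timelike curve meets the graph exactly once. This last step is where I expect all the work to concentrate.
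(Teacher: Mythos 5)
Your reductions are sound: time-reversal symmetry halves the number of cases, the ``easy'' implications follow from the finite-cover trick together with Lemmas~\ref{lem:CauchyChar1} and~\ref{lem:CauchyChar2}, and the reduction of the hard direction of~\eqref{eq:strictlypastcompact} to that of~\eqref{eq:spacelikecompact} via ``strictly past compact $\Leftrightarrow$ past compact and spacially compact'' is correct and a nice streamlining. The genuine gap is in the remaining hard direction, ``$\Leftarrow$'' of~\eqref{eq:spacelikecompact}, which is exactly where you concede all the work concentrates. Your plan to produce a spacelike Cauchy graph $\Sigma'$ through an escaping sequence $p_j$ with $t(p_j)=s_j\to+\infty$ is not under control: the graph condition $|df|_{g_{f(q)}}<1$ must hold in the spatial metrics $g_u$ for \emph{every} $u\in[0,s_j]$ over the neighbourhood of $q_j$ you are trying to choose, and there is no a priori bound on how fast those metrics distort as $u\to\infty$ on a globally hyperbolic manifold; enlarging the neighbourhood only enlarges the region over which you must control the distortion, and passing to a subsequence of the $p_j$ does not change the forced target values $s_j$. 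You have not shown the process terminates, and the case split into bounded versus unbounded $s_j$ is a symptom of the difficulty rather than a shortcut.

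The paper's argument bypasses all of this. Take an exhaustion $K_1\subset K_2\subset\cdots$ of $M$ \emph{itself} (not of a Cauchy hypersurface) by compacta such that every compact subset of $M$ is eventually contained in some $K_j$. Since $A$ is not spacially compact, pick $x_j\in A\setminus J(K_j)$. Then $B:=\{x_1,x_2,\ldots\}$ is not compact (otherwise $B\subset K_j\subset J(K_j)$ for large $j$), and $B$ is temporally compact: for any $x\in M$ one has $x\in K_j$ for $j$ large, hence $J^\pm(x)\subset J(K_j)$, while $x_i\notin J(K_i)\supset J(K_j)$ for all $i\ge j$, so $B\cap J^\pm(x)\subset\{x_1,\ldots,x_{j-1}\}$ is finite. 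No Cauchy hypersurface is constructed and no case distinction arises; the same construction with $J^+$ in place of $J$ even yields ``$\Leftarrow$'' of~\eqref{eq:strictlypastcompact} directly, making your reduction unnecessary (though still valid). If you want to keep your $p_j\in A\setminus J(\{0\}\times K_j)$ with $K_j$ exhausting $\Sigma$, the same counting still works because $J(x)\cap\Sigma_0$ is compact and hence $J(x)\subset J(\{0\}\times K_j)$ for $j$ large; but the exhaustion of $M$ makes the monotonicity $x\in K_j\Rightarrow J^\pm(x)\subset J(K_j)$ immediate.
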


\begin{proof}
(a)
We show \eqref{eq:pastcompact}.
If $A\cap B$ is compact for every strictly future compact $B$, then, in particular, $A\cap J^-(x)$ is compact for every $x\in M$.
Hence $A$ is past compact.

Conversely, let $A$ be past compact and $B$ be strictly future compact.
% We choose a compact subset $K\subset M$ with $B\subset J^-(K)$.
% By compactness of $K$, there exist finitely many points $x_1,\ldots,x_n\in M$ with $K\subset I^-(x_1) \cup \cdots \cup I^-(x_n)$.
% Then $B\subset J^-(K) \subset J^-(x_1) \cup \cdots \cup J^-(x_n)$ and therefore
% \[
% A\cap B \subset (A\cap J^-(x_1)) \cup \cdots \cup (A\cap J^-(x_n)).
% \]
% Since $A$ is past compact, every $A\cap J^-(x_i)$ is compact.
% Hence $A\cap B$ is compact.
Then $A\subset J^+(\Sigma)$ and $B\subset J^-(K)$ for some Cauchy hypersurface $\Sigma\subset M$ and some compact subset $K\subset M$.
Thus $A\cap B \subset J^+(\Sigma) \cap J^-(K)$, hence $A\cap B$ is contained in a compact set, hence compact itself.

(b)
The proof of \eqref{eq:futurecompact} is analogous.
As to \eqref{eq:timelikecompact}, if $A\cap B$ is compact for every spacially compact $B$, then, in particular, $A\cap J^+(x)$ and $A\cap J^-(x)$ are compact for every $x\in M$.
Hence $A$ is temporally compact.

Conversely, let $A$ be temporally compact and $B$ be spacially compact.
We choose a compact $K\subset M$ with $B\subset J(K)$.
By \eqref{eq:pastcompact}, $A\cap J^-(K)$ is compact and by \eqref{eq:futurecompact}, $A\cap J^+(K)$ is compact.
Thus $A\cap B \subset A\cap J(K) = (A\cap J^+(K)) \cup (A\cap J^-(K))$ is compact.

(c)
We show \eqref{eq:strictlypastcompact}.
By \eqref{eq:futurecompact} the intersection of a strictly past compact set and a future compact set is compact.
Now assume $A$ is not strictly past compact.
We have to find a future compact set $B$ such that $A\cap B$ is noncompact.
Let $K_1\subset K_2 \subset K_3 \subset \cdots \subset M$ be an exhaustion by compact subsets.
We choose the exhaustion such that every compact subset of $M$ is contained in $K_j$ for sufficiently large $j$.
Since $A$ is not strictly past compact there exists $x_j\in A\setminus J^+(K_j)$ for every $j$.
The set $B:=\{x_1,x_2,x_3,\ldots\}$ is not compact because otherwise, for sufficiently large $j$, we would have $B\subset K_j \subset J^+(K_j)$ contradicting the choice of the $x_i$.
But $B$ is future compact.
Namely, let $x\in M$.
Then $x\in K_j$ for $j$ large and therefore $B\cap J^+(x) \subset B\cap J^+(K_j) \subset \{x_1,\ldots,x_{j-1}\}$ is finite, hence compact.
Now $A\cap B=B$ is not compact which is what we wanted to show.

(d)
The proof of \eqref{eq:strictlyfuturecompact} is analogous.
As to \eqref{eq:spacelikecompact}, we know already by \eqref{eq:timelikecompact} that the intersection of a temporally compact and a spacially compact set is always compact.
If $A$ is not spacially compact, then the same construction as in the proof of \eqref{eq:strictlypastcompact} with $J^+(K_j)$ replaced by $J(K_j)$ yields a noncompact set $B\subset A$ which is temporally compact.
This concludes the proof.
\end{proof}

\subsection{Causal compatibility}
An open subset $N$ of a time oriented Lorentzian manifold $M$ is a time oriented Lorentzian manifold itself.
We call $N$ \emph{causally compatible} if $J^\pm_N(x)=J^\pm_M(x)\cap N$ for all $x\in N$.
In other words, any two points in $N$ which can be connected by causal curve in $M$ can also be connected by causal curve that stays in $N$.

%%%%%%%%%%%%%%%%%%%%%%%%%%%%%%%%%%%%%%%%%%%%%%%%%%%%%%%%%%%%%%%%%%%%%%%%%%%%%%%%%%%%%%%%%%%%%%%%%

\section{The Function Spaces}

Throughout this section, let $M$ denote a globally hyperbolic Lorentzian manifold.
In particular, $M$ carries a time-orientation and an induced volume element which we denote by $\dV$.
Moreover, let $E\to M$ be a (real or complex, finite dimensional) vector bundle.

\subsection{Smooth sections}
We denote the space of smooth sections of $E$ by $\Coo(M,E)$.
Any connection $\na$ on $E$ induces, together with the Levi-Civita connection on $T^*M$, a connection on $T^*M^{\otimes \ell}\otimes E$ for any $\ell\in\N$.
For any $f\in\Coo(M,E)$, the $\ell^\mathrm{th}$ covariant derivative $\na^{\ell}f := \na\cdots\na\na f$ is a smooth section of  $T^*M^{\otimes \ell}\otimes E$.

For any compact subset $K\subset M$, any $m\in\N$, any connection $\na$ on $E$ and any auxiliary norms $|\cdot |$ on $T^*M^{\otimes \ell}\otimes E$ we define the semi-norm
\[
\|f\|_{K,m,\na,|\cdot |} := \max_{\ell=0,\cdots,m}\,\,\max_{x\in K} |\na^{\ell}f(x)|
\]
for $f\in\Coo(M,E)$.
By compactness of $K$, different choices of $\na$ and $|\cdot |$ lead to equivalent semi-norms.
For this reason, we may suppress $\na$ and $|\cdot |$ in the notation and write $\|f\|_{K,m}$ instead of $\|f\|_{K,m,\na,|\cdot |}$.
This family of semi-norms is separating and turns $\Coo(M,E)$ into a locally convex topological vector space.
If we choose a sequence $K_1 \subset K_2 \subset K_3 \subset \cdots\subset M$ of compact subsets with $\bigcup_{i=1}^\infty K_i =M$ and such that each $K_i$ is contained in the interior of $K_{i+1}$, then the countable subfamily $\|\cdot\|_{K_i,i}$ of semi-norms is equivalent to the original family.
Hence $\Coo(M,E)$ is metrizable.
An Arzel\`a-Ascoli argument shows that  $\Coo(M,E)$ is complete.
Thus  $\Coo(M,E)$ is a Fr\'echet space.
A sequence of sections converges in $\Coo(M,E)$ if and only if the sections and all their (higher) derivatives converge locally uniformly.

\subsection{Support systems}
For a closed subset $A\subset M$ denote by $\CooA(M,E)$ the space of all smooth sections $f$ of $E$ with $\supp f\subset A$.
Then $\CooA(M,E)$ is a closed subspace of $\Coo(M,E)$ and hence a Fr\'echet space in its own right.
Moreover, if $A_1\subset A_2$ then $\Coo_{A_1}(M,E)$ is a closed subspace of $\Coo_{A_2}(M,E)$.

We denote by $\CM$ the set of all closed subsets of $M$.

\begin{dfn}\label{def:suppsys}
A subset $\A\subset\CM$ is called a \emph{support system} on $M$ if the following holds:
\begin{enumerate}[(i)]
\item\label{suppsys1}
For any $A,A'\in\A$ we have $A\cup A'\in \A$;
\item\label{suppsys2}
For any $A\in\A$ there is an $A'\in\A$ such that $A$ is contained in the interior of $A'$;
\item\label{suppsys3}
If $A\in\A$ and $A'\subset A$ is a closed subset, then $A'\in\A$.
\end{enumerate}
\end{dfn}

The first condition implies that $\A$ is a direct system with respect to inclusion.
The third condition is harmless;
if $\A$ satisfies \eqref{suppsys1} and \eqref{suppsys2}, then adding all closed subsets of the members of $\A$ to $\A$ will give a support system.

Given a support system on $M$ we obtain the direct system $\{\CooA(M,E)\}_{A\in\A}$ of subspaces of $\Coo(M,E)$ and denote by $\Coo_\A(M,E)$ its direct limit as a locally convex topological vector space.
As a vector subspace, $\Coo_\A(M,E)$ is simply $\bigcup_{A\in\A}\CooA(M,E)$.
A convex subset $\OO\subset\Coo_\A(M,E)$ is open if and only if $\OO\cap \CooA(M,E)$ is open for all $A\in\A$.
Note that $\Coo_\A(M,E)$ is not a closed subspace of $\Coo(M,E)$ in general.

\begin{dfn}
We call a support system \emph{essentially countable} if
there is a sequence $A_1, A_2, A_3,\ldots\in\A$ such that each $A_j\subset A_{j+1}$ and for any $A\in\A$ there exists a $j$ with $A\subset A_j$.
Such a sequence $A_1\subset A_2 \subset A_3 \subset \cdots$ is called a \emph{basic chain} of $\A$.
\end{dfn}

\begin{lem}\label{lem:LFKonvergenz}
Let $\A\subset\CM$ be an essentially countable support system on $M$.
If $V\subset\Coo_\A(M,E)$ is a bounded subset then there exists an $A\in\A$ such that $V\subset\CooA(M,E)$.
In particular, for any convergent sequence $f_j\in\Coo_\A(M,E)$ there exists an $A\in\A$ such that $f_j\in\CooA(M,E)$ for all $j$.
\end{lem}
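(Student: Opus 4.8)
The plan is to reduce the statement to a standard fact about \emph{LF-spaces} (strict inductive limits of Fréchet spaces). Recall that a bounded subset of a locally convex space is one absorbed by every neighbourhood of zero. First I would verify that along a basic chain $A_1 \subset A_2 \subset \cdots$ of $\A$, the inclusions $\Coo_{A_j}(M,E) \hookrightarrow \Coo_{A_{j+1}}(M,E)$ are topological embeddings onto closed subspaces — this was essentially noted just before Definition~\ref{def:suppsys}, since each $\Coo_A(M,E)$ is a closed subspace of $\Coo(M,E)$ and the subspace topology is inherited. Moreover, by condition \eqref{suppsys2} of a support system together with essential countability, one can arrange the basic chain so that each $A_j$ lies in the interior of $A_{j+1}$; this interiority is what makes the embedding $\Coo_{A_j} \hookrightarrow \Coo_{A_{j+1}}$ not just closed but have the property that the inductive limit is \emph{strict} and regular. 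Then I would observe that $\Coo_\A(M,E) = \varinjlim_j \Coo_{A_j}(M,E)$, because every $A \in \A$ is contained in some $A_j$, so the cofinal subsystem along the basic chain computes the same direct limit.

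The key step is then to invoke the regularity of strict LF-spaces built from an increasing sequence of Fréchet spaces where each is closed in the next: in such a space every bounded set is contained in, and bounded in, one of the steps $\Coo_{A_j}(M,E)$. This is the classical Dieudonné–Schwartz theorem. I would either cite it or sketch its proof: if a bounded set $V$ met every step only in a set that is not a neighbourhood-bounded subset of that step, one constructs by a diagonal argument a sequence in $V$ escaping to infinity, together with a convex balanced neighbourhood of zero in $\Coo_\A(M,E)$ (built step by step, using that $\Coo_{A_j}$ is closed in $\Coo_{A_{j+1}}$ so that Hahn–Banach-type separations are available at each stage) that fails to absorb $V$, contradicting boundedness. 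Having located $j$ with $V \subset \Coo_{A_j}(M,E)$, I set $A := A_j \in \A$ and the first assertion follows.

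For the ``in particular'' clause, I would note that a convergent sequence $f_j \to f$ in any topological vector space is a bounded set (together with its limit): the set $\{f_j : j\} \cup \{f\}$ is bounded because it is absorbed by every neighbourhood of zero — eventually all $f_j$ lie in any given neighbourhood of $f$, hence in a fixed multiple of any balanced neighbourhood of zero, and the finitely many remaining terms are absorbed as well. Applying the first part of the lemma to this bounded set yields the desired $A \in \A$ with $f_j \in \Coo_A(M,E)$ for all $j$ (and $f \in \Coo_A(M,E)$ too, since $\Coo_A(M,E)$ is closed in $\Coo(M,E)$).

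The main obstacle is making the inductive-limit machinery legitimately applicable: one must be careful that the topology on $\Coo_\A(M,E)$ defined via \emph{all} of $\A$ agrees with the one from the cofinal basic chain, and that the steps along the chain are \emph{closed} topological subspaces so that the strict-LF regularity theorem genuinely applies. Both points reduce to facts already recorded in the excerpt (closedness of $\Coo_A(M,E)$ in $\Coo(M,E)$, cofinality built into the definition of ``essentially countable''), so once these are spelled out the proof is a citation of Dieudonné–Schwartz; I would include a short self-contained argument for readers who prefer not to chase the reference.
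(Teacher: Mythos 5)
Your proof is correct, but it takes a genuinely different route from the paper. You reduce the statement to the Dieudonné--Schwartz regularity theorem for strict LF-spaces, first checking that the topology on $\Coo_\A(M,E)$ is the strict inductive limit along the cofinal basic chain $\Coo_{A_1}\subset\Coo_{A_2}\subset\cdots$ of Fr\'echet spaces with closed steps. The paper instead gives a short, self-contained argument: it picks points $x_j\in M\setminus A_j$ at which some $f_j\in V$ is nonzero, and exhibits the explicit convex neighbourhood of zero
\[
W=\bigl\{f\in\Coo_\A(M,E)\mid |f(x_j)|<|f_j(x_j)|/j\ \text{for all }j\bigr\},
\]
which is open (since each $A\in\A$ contains only finitely many $x_j$, so $W\cap\CooA$ is cut out by finitely many seminorm conditions) yet fails to absorb $V$ because $f_j\notin TW$ for $j>T$. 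Your route buys brevity and generality once the LF-space machinery is accepted, whereas the paper's buys self-containedness: it shows that the regularity needed here can be verified with a single concrete neighbourhood built from point evaluations, which is essentially a hands-on specialization of the Dieudonn\'e--Schwartz proof to this situation. One small inaccuracy to flag: the interiority coming from condition~\eqref{suppsys2} is \emph{not} what makes the inductive limit strict. Strictness holds automatically because each $\Coo_{A_j}(M,E)$ carries the subspace topology inherited from $\Coo(M,E)$, which agrees with the subspace topology it inherits from $\Coo_{A_{j+1}}(M,E)$; what Dieudonn\'e--Schwartz needs beyond this is the closedness of $\Coo_{A_j}$ in $\Coo_{A_{j+1}}$, which the paper already recorded for arbitrary nested closed sets. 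The interior condition on support systems is used elsewhere (e.g.\ in the cutoff constructions of Lemmas~\ref{lem:Ccdicht1}, \ref{lem:DistriSupp}, \ref{lem:Ccdicht2}), not in this lemma.
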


This shows that a sequence $(f_j)$ converges in $\Coo_\A(M,E)$ if and only if there exists an $A\in\A$ such that $f_j\in\CooA(M,E)$ for all $j$ and $(f_j)$ converges in $\CooA(M,E)$.

\begin{proof}[Proof of Lemma~\ref{lem:LFKonvergenz}]
Consider a basic chain $A_1\subset A_2\subset A_3\subset\ldots$.
Let $V\subset\Coo_\A(M,E)$ be a subset not contained in any $\Coo_{A_j}(M,E)$.
We have to show that $V$ is not bounded.
Pick points $x_j\in M\setminus A_j$ and sections $f_j\in V$ with $f_j(x_j)\neq 0$.
Define the convex set
\[
W:=\bigg\{f\in\Coo_\A(M,E) \,\bigg|\, |f(x_j)|<\frac{|f_j(x_j)|}{j} \mbox{ for all }j\bigg\}. 
\]
Each $A\in\A$ contains only finitely many $x_j$.
Thus $W\cap\CooA(M,E)=\{f\in\CooA(M,E)\mid \|f\|_{\{x_j\},0}<|f_j(x_j)|/j\}$ is open in $\CooA(M,E)$.
Therefore $W$ is an open neighborhood of $0$ in $\Coo_\A(M,E)$.

For any $T>0$ we have $T\cdot W=\{f\in\Coo_\A(M,E) \mid |f(x_j)|<\frac{T}{j}|f_j(x_j)| \mbox{ for all }j\}$ and hence $f_j\notin TW$ for $j>T$.
Thus $V$ is not contained in any $TW$ and is therefore not bounded.
\end{proof}

\begin{ex}
The system $\A=\CM$ of all closed subsets is an essentially countable support system on $M$.
A basic chain is given by the constant sequence $M\subset M \subset M \subset \cdots$.
Clearly, $\Coo_{\CM}(M,E)=\Coo(M,E)$.
\end{ex}

\begin{ex}
Let $\A=c$ where $c$ is the set of all compact subsets of $M$.
A basic chain can be constructed as follows:
Provide $M$ with a complete Riemannian metric $\gamma$.
Fix a point $x\in M$.
Now let $A_j$ be the closed ball centered at $x$ with radius $j$ with respect to $\gamma$.

Then $\Cooc(M,E)$ is the space of compactly supported smooth sections, also called \emph{test sections}.
\end{ex}

\begin{ex}
Let $\A=sc$ be the set of all spacially compact subsets of $M$.
If $K_1\subset K_2 \subset K_3\subset \cdots$ is a basic chain of $c$, then $J(K_1)\subset J(K_2) \subset J(K_3)\subset \cdots$ is a basic chain of $sc$.
Hence $sc$ is essentially countable.

Now $\Coosc(M,E)$ is the space of smooth sections with spacially compact support.
Recall that a sequence $(f_j)$ converges in $\Coosc(M,E)$ if and only if there exists a compact subset $K\subset M$ such that $\supp(f_j)\subset J(K)$ for all $j$ and $(f_j)$ converges locally uniformly with all derivatives.
\end{ex}

\begin{ex}
Let $\A=spc$ be the set of all strictly past compact subsets of $M$.
As in the previous example we see that $spc$ is essentially countable.
Now $\Coospc(M,E)$ is the space of smooth sections with strictly past-compact support.

Similarly, one can define the space $\Coosfc(M,E)$ of smooth sections with strictly future-compact support.
\end{ex}

\begin{ex}
Let $\A=pc$ be the set of all past-compact subsets.
If $M$ is spacially compact then $pc=spc$ by \lref{lem:pastcompact} but in general $pc$ is strictly larger than $spc$.
We obtain the space $\Coopc(M,E)$ of smooth sections with past-compact support.

In general, the support system $pc$ is not essentially countable.
The following example was communicated to me by Miguel S\'anchez.
Let $M$ be the $(1+1)$-dimensional Minkowski space.
Let $A_1\subset A_2\subset A_3\subset \cdots \subset M$ be a chain of past-compact subsets.
Look at the ``future-diverging'' sequence of points $(n,0)\in M$ and choose points\footnote{Note that $M\setminus I^-(n,0)$ is not past compact so that $A_n\cup J^-(n,0)$ cannot be all of $M$. Compare with \eref{ex:pastcompactbutnotstrict} however.} $p_n\in M\setminus (A_n \cup J^-(n,0))$.
By construction, $A:=\{p_1,p_2,p_3,\ldots\}$ is not contained in any $A_n$ but $A$ is past compact.
Namely, let $x\in M$.
Then there exists an $n$ such that $x\in J^-(n,0)$.
Now $J^-(x)\cap A \subset J^-(n,0)\cap A$ is finite and hence compact.
Thus no chain in $pc$ captures all elements of $pc$, so $pc$ is not essentially countable.
\end{ex}

\begin{ex}
A similar discussion as in the previous example yields the space $\Coofc(M,E)$ of smooth sections with future-compact support and the space $\Cootc(M,E)$ of smooth sections with temporally compact support.
Both support systems are not essentially countable in general.
But again, if $M$ is spacially compact, they are because then $fc=sfc$ and $tc=c$ by \rref{rem:Mspacelikecompact}.
\end{ex}

If $\A\subset\A'$, then $\Coo_\A(M,E) \subset \Coo_{\A'}(M,E)$ and the inclusion map is continuous.
Hence we obtain the following diagram of continuously embedded spaces:

\begin{diagram}\label{diag:Cspaces}
\begin{equation*}
\xymatrix@R=8pt{
           & \Coospc(M,E)\ar@{^{(}->}[rr] \ar@{_{(}->}[ddrr]&& \Coopc(M,E)  \ar@{_{(}->}[ddr]& \\
&&&\\
\Cooc(M,E)\ar@{^{(}->}[uur] \ar@{_{(}->}[ddr] \ar@{^{(}->}[r]& \Cootc(M,E) \ar@{^{(}->}'[ur][uurr]   \ar@{_{(}->}'[dr][ddrr] &&\Coosc(M,E)\ar@{^{(}->}[r]  & \Coo(M,E) \\
&&&\\
           & \Coosfc(M,E) \ar@{^{(}->}[rr] \ar@{^{(}->}[uurr]&& \Coofc(M,E)\ar@{^{(}->}[uur] & 
}
\end{equation*}
\end{diagram}
\diagramnumber{Smooth sections with various support properties}
All embeddings in \diagref{diag:Cspaces} have dense image.
Namely, we have

\begin{lem}\label{lem:Ccdicht1}
Let $\A$ be a support system on $M$ such that $c\subset \A$, i.e., each compact set is contained in $\A$.
Then $\Cooc(M,E)$ is a dense subspace of $\Coo_\A(M,E)$.
\end{lem}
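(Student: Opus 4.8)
The plan is to show that any section $f \in \Coo_\A(M,E)$ can be approximated in $\Coo_\A(M,E)$ by compactly supported sections. Fix $f$ and let $A \in \A$ be such that $\supp f \subset A$. By convergence in the direct limit (cf.\ the remarks following \dref{def:suppsys} and \lref{lem:LFKonvergenz} in the essentially countable case, though here we argue directly), it suffices to produce a sequence $f_j \in \Cooc(M,E)$ all supported in some fixed member of $\A$ containing $A$, such that $f_j \to f$ in $\CooA(M,E)$, i.e.\ $f_j$ and all derivatives converge to those of $f$ locally uniformly.

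First I would use property \eqref{suppsys2} of a support system to choose $A' \in \A$ with $A$ contained in the interior of $A'$. Next, exhaust $M$ by compact sets $K_1 \subset K_2 \subset \cdots$ with $K_i$ in the interior of $K_{i+1}$ and $\bigcup_i K_i = M$, and pick cutoff functions $\chi_j \in \Cooc(M,\R)$ with $0 \le \chi_j \le 1$, $\chi_j \equiv 1$ on a neighborhood of $K_j$, and $\supp \chi_j$ contained in the interior of $K_{j+1}$ (these exist by a standard partition-of-unity / bump-function argument on the manifold $M$). Set $f_j := \chi_j f$. Then $\supp f_j \subset \supp f \subset A \subset A'$, and since $\supp f_j$ is a closed subset of the compact set $\supp \chi_j$ it is compact, so $f_j \in \Cooc(M,E) \cap \CooA(M,E)$.

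It remains to check that $f_j \to f$ in $\CooA(M,E)$, i.e.\ with respect to the semi-norms $\|\cdot\|_{K,m}$ for every compact $K$ and every $m$. Fix such $K$ and $m$. For $j$ large enough that $K \subset K_j$, we have $\chi_j \equiv 1$ on a neighborhood of $K$, hence $f_j = f$ on that neighborhood and therefore $\|f_j - f\|_{K,m} = 0$. Thus the sequence is eventually equal to $f$ on every compact set, so it converges to $f$ in $\CooA(M,E)$ and a fortiori in $\Coo_\A(M,E)$. Since $f$ was arbitrary, $\Cooc(M,E)$ is dense.

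The only mild subtlety — and the step I would be most careful about — is the topological one: density in the direct limit means density with respect to the locally convex direct-limit topology, and one must be sure that convergence of the $f_j$ inside the single Fr\'echet space $\CooA(M,E)$ (equivalently $\Coo_{A'}(M,E)$) does imply convergence in $\Coo_\A(M,E)$. This holds because the inclusion $\CooA(M,E) \hookrightarrow \Coo_\A(M,E)$ is continuous by construction of the direct-limit topology, so no separate argument is needed; the approximating sequence genuinely converges in the target space. Everything else is routine manifold cutoff technology.
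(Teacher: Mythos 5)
Your proof is correct and uses essentially the same idea as the paper: multiply $f$ by a compactly supported cutoff function that is identically $1$ on the compact set governing the relevant seminorm. The paper packages this as a single cutoff chosen for a given convex open neighborhood $\OO$ of $f$ (so $g=\chi f$ lies in $\OO$ directly), whereas you build a whole approximating sequence $f_j=\chi_j f$ and then invoke continuity of the inclusion $\CooA(M,E)\hookrightarrow\Coo_\A(M,E)$ and the fact that sequential density implies density; both routes are valid, and you correctly handle the one topological subtlety.
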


\begin{proof}
Let $f\in\Coo_\A(M,E)$ and let $\OO$ be a convex open neighborhood of $f$ in $\Coo_\A(M,E)$.
Let $A\in\A$ with $f\in\CooA(M,E)$.
Since $\OO\cap\CooA(M,E)$ is open in $\CooA(M,E)$ there exists an $\eps>0$ and a seminorm $\|\cdot\|_{K,m}$ such that
\[
\{g\in\CooA(M,E) \mid \|f-g\|_{K,m}<\eps\}
\subset
\OO\cap\CooA(M,E).
\]
Pick a cutoff function $\chi\in\Cooc(M,\R)$ with $\chi\equiv 1$ on $K$.
Then $g:=\chi\cdot f\in\Cooc(M,E)$ and $\|f-g\|_{K,m}=0$.
Thus $g\in\OO\cap\CooA(M,E)$.
\end{proof}

\subsection{Distributional sections}
Now denote the dual bundle of $E\to M$ by $E^*\to M$.
The canonical pairing \mbox{$E^*\otimes E\to \R$} is denoted by $\<\cdot,\cdot\>$.
A locally integrable section $f$ of $E$ can be considered as a continuous linear functional on $\Cooc(M,E^*)$ by $f[\phi]=\int_M\<\phi,f\>\dV$.
We denote the topological dual space of $\Cooc(M,E^*)$ by $\D(M,E)$.
The elements of $\D(M,E)$, i.e., the continuous linear functionals on $\Cooc(M,E^*)$, are called \emph{distributional sections} of $E$.
It is well known that a distributional section of $E$ has compact support if and only if it extends to a continuous linear functional on $\Coo(M,E^*)$.
We denote the space of distributional sections of $E$ with compact support by $\Dc(M,E)$.

More generally, for any closed subset $A\subset M$ we denote by $\D_A(M,E)$ the space of all distributional sections of $E$ whose support is contained in $A$.
Likewise, for any support system $\A$ on $M$, we denote by $\D_\A(M,E)$ the space of all distributional sections of $E$ whose support is an element of $\A$.
Again, we have the continuous embedding $\Coo_\A(M,E) \hookrightarrow \D_\A(M,E)$ defined by $f[\phi]=\int_M\<\phi,f\>\dV$ for any $f\in\Coo_\A(M,E)$ and any test section $\phi\in\Cooc(M,E^*)$.

\subsection{Duality}
We now characterize the topological dual spaces of the other spaces in \diagref{diag:Cspaces} (with $E$ replaced by~$E^*$).
We will show that the support of a distributional section is contained in a support system if and only if it extends to test sections having their support in a dual support system.

\begin{dfn}\label{def:dual}
Two support systems $\A$ and $\B$ be on $M$ are said to be \emph{in duality} if for any $C\in\CM$:
\begin{enumerate}[(i)]
\item\label{dual1}
$C\in\A$ if and only if $C\cap B$ is compact for all $B\in\B$;
\item\label{dual2}
$C\in\B$ if and only if $C\cap A$ is compact for all $A\in\A$.
\end{enumerate}
\end{dfn}

\begin{ex}
Here are some examples of support systems $\A$ and $\B$ in duality.
The last column contains a justification of this fact.
\begin{center}
\begin{tabular}{|c|c|c|}
\hline
$\A$ & $\B$ & why?\\ \hline
$\CM$ & $c$ & obvious\\
$pc$ & $sfc$ & \lref{lem:duality}~\eqref{eq:pastcompact} and \eqref{eq:strictlyfuturecompact}\\
$fc$ & $spc$ & \lref{lem:duality}~\eqref{eq:futurecompact} and \eqref{eq:strictlypastcompact}\\
$tc$ & $sc$ & \lref{lem:duality}~\eqref{eq:timelikecompact} and \eqref{eq:spacelikecompact}\\
\hline
\end{tabular}

\vspace{5pt}
{\sc Table~1:}
\emph{Support systems in duality}
\end{center}
\end{ex}

\begin{lem}\label{lem:DistriSupp}
Let $\A$ and $\B$ be two support systems on $M$ in duality.
Then a distributional section $f\in\D(M,E)$ has support contained in $\A$ if and only if $f$ extends to a continuous linear functional on $\Coo_\B(M,E^*)$.
\end{lem}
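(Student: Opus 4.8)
The plan is to prove both implications by unwinding the definitions, using the duality of the support systems $\A$ and $\B$ to pass between "support contained in $\A$" and "vanishes on test sections whose support meets the support of $f$ in a noncompact set." Throughout, recall that $\Coo_\B(M,E^*) = \bigcup_{B\in\B}\Coo_B(M,E^*)$ as a vector space, and that a linear functional on this direct limit is continuous if and only if its restriction to each $\Coo_B(M,E^*)$ is continuous; since $\Cooc(M,E^*)\subset\Coo_B(M,E^*)$ for every $B\in\B$ (each compact set lies in $\B$ by \dref{def:dual}\eqref{dual2} applied with $C$ compact, as $C\cap A$ is then automatically compact), any continuous extension of $f$ to $\Coo_\B(M,E^*)$ is unique, being determined on the dense subspace $\Cooc(M,E^*)$ by \lref{lem:Ccdicht1}.

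For the "if" direction: suppose $f$ extends to a continuous functional $\tilde f$ on $\Coo_\B(M,E^*)$. We must show $\supp f\in\A$. By \dref{def:dual}\eqref{dual1} it suffices to show that $\supp f\cap B$ is compact for every $B\in\B$. Fix $B\in\B$ and suppose, for contradiction, that $\supp f\cap B$ is not compact. Then one can find a sequence of points $x_j\in\supp f\cap B$ with no accumulation point in $M$, hence escaping every compact set. Around each $x_j$ choose a test section $\phi_j\in\Cooc(M,E^*)$, supported in a small neighborhood of $x_j$ (these neighborhoods chosen pairwise disjoint and escaping to infinity), with $f[\phi_j]\neq 0$; this is possible because $x_j\in\supp f$. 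After rescaling we may assume $f[\phi_j]=1$ for all $j$. The key point is that the sum $\phi:=\sum_j \phi_j$ is a well-defined smooth section (locally finite) whose support is contained in $B$ up to enlarging $B$ slightly within $\B$ — more carefully, one arranges $\supp\phi_j\subset B'$ for a fixed $B'\in\B$ with $B$ in the interior of $B'$, using property \eqref{suppsys2} of support systems and the fact that the $x_j$ lie in $B$. Then $\phi\in\Coo_{B'}(M,E^*)\subset\Coo_\B(M,E^*)$, so $\tilde f[\phi]$ is a finite number. But continuity of $\tilde f$ on $\Coo_{B'}(M,E^*)$ forces $\tilde f\big[\sum_{j\le N}\phi_j\big]\to\tilde f[\phi]$ as $N\to\infty$, while $\tilde f\big[\sum_{j\le N}\phi_j\big]=\sum_{j\le N}f[\phi_j]=N\to\infty$, a contradiction. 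Hence $\supp f\cap B$ is compact for every $B\in\B$, so $\supp f\in\A$.

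For the "only if" direction: suppose $\supp f\in\A$. Given $\psi\in\Coo_\B(M,E^*)$, say $\supp\psi\subset B$ for some $B\in\B$, the set $\supp f\cap B$ is compact by \dref{def:dual}\eqref{dual1}. Choose a cutoff $\chi\in\Cooc(M,\R)$ with $\chi\equiv 1$ on a neighborhood of $\supp f\cap\supp\psi$ (possible since that set is compact), and define $\tilde f[\psi]:=f[\chi\psi]$; note $\chi\psi\in\Cooc(M,E^*)$. One checks this is independent of the choice of $\chi$: if $\chi_1,\chi_2$ are two such cutoffs, then $(\chi_1-\chi_2)\psi$ is a test section vanishing on a neighborhood of $\supp f$, so $f[(\chi_1-\chi_2)\psi]=0$. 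Linearity of $\tilde f$ follows similarly (given $\psi_1,\psi_2$ supported in $B_1,B_2$, use a single $\chi$ that is $1$ near $\supp f\cap(\supp\psi_1\cup\supp\psi_2)$, and note $B_1\cup B_2\in\B$), and $\tilde f$ restricts to $f$ on $\Cooc(M,E^*)$ (take $\chi\equiv 1$ near the compact support). Continuity: on each $\Coo_B(M,E^*)$, fix a cutoff $\chi$ that is $1$ on a neighborhood of the compact set $\supp f\cap B$; then $\tilde f[\psi]=f[\chi\psi]$ for all $\psi\in\Coo_B(M,E^*)$, and $\psi\mapsto\chi\psi$ is continuous from $\Coo_B(M,E^*)$ into $\Cooc(M,E^*)$ (supports stay inside the fixed compact set $\supp\chi$, and the relevant seminorms are controlled), while $f$ is continuous on $\Cooc(M,E^*)$; hence $\tilde f$ is continuous on each $\Coo_B(M,E^*)$ and therefore on the direct limit $\Coo_\B(M,E^*)$.

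The main obstacle is the construction in the "if" direction: producing a single smooth section $\phi$ in $\Coo_\B(M,E^*)$ that simultaneously detects infinitely many points of $\supp f\cap B$ and on which $\tilde f$ must take a finite value by continuity, thereby contradicting $\tilde f\big[\sum_{j\le N}\phi_j\big]=N$. This requires care to keep the supports of the bumps $\phi_j$ inside a single enlargement $B'\in\B$ of $B$ and to ensure the partial sums converge to $\phi$ in $\Coo_{B'}(M,E^*)$ — i.e., that the $\phi_j$ can be chosen with $\Coo_{B'}$-seminorms summable, which is arranged by shrinking the $\phi_j$ after normalizing $f[\phi_j]=1$, using that on the fixed exhausting compact sets only finitely many $\phi_j$ are nonzero.
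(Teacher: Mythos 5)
Your proof is correct. The direction ``$\supp f\in\A$ implies $f$ extends'' is essentially the paper's argument: pick a cutoff $\chi$ equal to $1$ near the compact set $\supp f\cap B$, set $\tilde f[\psi]:=f[\chi\psi]$, and check independence of the cutoff and continuity on each $\Coo_B(M,E^*)$. For the converse you take a genuinely different route. The paper argues directly: from continuity of $F$ on $\Coo_{B'}(M,E^*)$ (with $B$ in the interior of $B'\in\B$) it extracts a seminorm estimate $|F_{B'}[\phi]|\le C\|\phi\|_{K,m}$, concludes $F$ vanishes on test sections supported in $B\setminus K$, and hence $\supp F\cap B\subset K$. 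You instead argue by contradiction: if $\supp f\cap B$ were noncompact, pick a locally finite sequence $x_j\in\supp f\cap B$, bump sections $\phi_j$ near $x_j$ with supports inside a fixed $B'\in\B$ and $f[\phi_j]=1$, so that the partial sums of $\phi:=\sum_j\phi_j$ converge to $\phi$ in $\Coo_{B'}(M,E^*)$ while $\tilde f$ of the partial sums diverges, contradicting continuity. Both arguments work and have the same scope (they each use only \dref{def:dual}\eqref{dual1}, as the paper remarks); the paper's version is shorter and yields the compactness of $\supp f\cap B$ constructively, whereas yours gives a vivid counterexample-style contradiction. One small remark on exposition: the worry at the end about making the $\Coo_{B'}$-seminorms of the $\phi_j$ summable is unnecessary --- since the $\supp\phi_j$ form a locally finite family, the tails $\sum_{j>N}\phi_j$ vanish identically on each fixed compact set for $N$ large, so convergence of the partial sums in the Fr\'echet topology is automatic without any rescaling.
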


\begin{proof}
(a)
Suppose first that $\supp f\in\A$.
Let $B\in\B$.
Since $\supp f\cap B$ is compact there is a cutoff function $\chi\in\Coo_c(M,\R)$ with $\chi\equiv 1$ on a neighborhood of $\supp f\cap B$.
We extend $f$ to a linear functional $F_B$ on $\Coo_B(M,E^*)$ by 
\[
F_B[\phi] := f[\chi\phi].
\]
This extension is independent of the choice of $\chi$ because for another choice $\chi'$, $f$ and $\chi\phi-\chi'\phi$ have disjoint supports.
If $\phi_j\to0$ in $\Coo_B(M,E^*)$, then $\chi\phi_j\to0$ in $\Coo_c(M,E^*)$ and hence $F_B[\phi_j] = f[\chi\phi_j]\to0$.
Thus $F_B$ is continuous.

Doing this for every $B\in\B$ we obtain an extension $F$ of $f$ to a linear functional on $\Coo_\B(M,E^*)$ with $F_B$ being the restriction of $F$ to $\Coo_B(M,E^*)$.
Continuity of $F$ holds because each $F_B$ is continuous.

(b)
Conversely, assume that $f$ extends to a continuous linear functional $F$ on $\Coo_\B(M,E^*)$.
We check that $\supp f\in\A$ by showing that $\supp f\cap B$ is compact for every $B\in\B$.

Let $B\in\B$.
Choose $B'\in\B$ such that $B$ is contained in the interior of $B'$.
Since the restriction $F_{B'}$ of $F$ to $\Coo_{B'}(M,E^*)$ is linear and continuous, there exists a seminorm $\|\cdot\|_{K,m}$ and a constant $C>0$ such that 
\[
|F_{B'}[\phi]| \leq C\cdot \|\phi\|_{K,m}
\]
for all $\phi\in\Coo_{B'}(M,E^*)$.
In particular, $F_{B'}[\phi]=0$ if $\supp(\phi)$ and $K$ are disjoint.

\emph{Claim:}
$B\cap (M\setminus K) \subset M\setminus \supp(F)$.

Namely, let $x\in B\cap (M\setminus K)$.
Then $x$ lies in the interior of $B'$.
Hence there is an open neighborhood $U$ of $x$ entirely contained in $B'$.
Since $x\notin K$ we may assume that $U$ and $K$ are disjoint.
Now we know that for all $\phi\in\Cooc(M,E^*)$ with $\supp(\phi)\subset U$ we have $F[\phi]=0$.
Thus $x\notin\supp(F)$.
\hfill\ding{51}

The claim implies $\supp(F)\subset (M\setminus B)\cup K$ and hence $\supp(F)\cap B \subset K$.
Therefore the intersection $\supp(F)\cap B$ is compact.
\end{proof}

\begin{rem}
Observe that for the proof of \lref{lem:DistriSupp} we only need \eqref{dual1} in \dref{def:dual} but not \eqref{dual2}.
\end{rem}

Dualizing \diagref{diag:Cspaces}, Table~1 and \lref{lem:DistriSupp} yield the following diagram of continuous embeddings of several spaces of distributions, characterized by different support properties:
\begin{diagram}\label{diag:Dspaces}
\begin{equation*}
\xymatrix@R=8pt{
           & \Dfc(M,E)\ar@{^{(}->}[ddl] && \Dsfc(M,E) \ar@{_{(}->}[ll]\ar@{^{(}->}'[dl][ddll] & \\
&&&\\
\D(M,E)& \Dsc(M,E)  \ar@{_{(}->}[l]  &&\Dtc(M,E)\ar@{_{(}->}[uull]\ar@{^{(}->}[ddll] & \Dc(M,E) \ar@{_{(}->}[l]  \ar@{_{(}->}[uul] \ar@{^{(}->}[ddl]  \\
&&&\\
           & \Dpc(M,E)\ar@{_{(}->}[uul] && \Dspc(M,E)\ar@{_{(}->}[ll]\ar@{_{(}->}'[ul][uull] & 
}
\end{equation*}
\end{diagram}
\diagramnumber{Distributional sections with various support properties}

\subsection{Convergence of distributions}
Let $\A$ be one of the support systems $\CC$, $pc$, $fc$, $tc$, $sc$, $spc$, $sfc$, or $c$.
Let $\B$ be the dual support system as in Table~1.
The continuity property of a distributional section $f\in\D_\A(M,E)$ means that for any $B\in\B$ the restriction of $f$ to $\Coo_B(M,E^*)$ is continuous.
In other words, for any sequence of smooth sections $\phi_j$ with support contained in $B$ which converge locally uniformly with their derivatives to some $\phi\in\Coo_B(M,E^*)$, we must have $f[\phi_j]\to f[\phi]$.

Our distribution spaces are always equipped with the weak*-topology.
This means that a sequence $f_j\in\D_\A(M,E)$ converges if and only if $f_j[\phi]$ converges for every fixed $\phi\in\Coo_\B(M,E^*)$.

We have the following analog to \lref{lem:Ccdicht1}:

\begin{lem}\label{lem:Ccdicht2}
Let $\A$ be one of the support systems $\CC$, $pc$, $fc$, $tc$, $sc$, $spc$, $sfc$, or $c$.
Then $\Cooc(M,E)$ is a dense subspace of $\D_\A(M,E)$.
\end{lem}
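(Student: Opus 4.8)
The plan is to show that $\Cooc(M,E)$ is dense in $\D_\A(M,E)$ by exhibiting, for each $f\in\D_\A(M,E)$, a sequence of test sections converging to $f$ in the weak*-topology. The natural approach is to combine a cutoff (to produce compact support) with a mollification (to produce smoothness), while keeping track of supports so that the approximants remain in the support system. First I would reduce to a local statement: cover $M$ by countably many relatively compact charts $(U_i,\psi_i)$ over which $E$ is trivialized, and choose a smooth partition of unity $\chi_i$ subordinate to $\{U_i\}$ with $\supp\chi_i$ compact. Given $f\in\D_\A(M,E)$, each $\chi_i f$ is a compactly supported distributional section living in a single chart, so by the classical Euclidean theory (convolution with a mollifier) there are test sections $g_{i,n}\to\chi_i f$ as $n\to\infty$, with $\supp g_{i,n}$ contained in a fixed compact neighborhood $L_i$ of $\supp\chi_i$ independent of $n$.

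The key step is then to assemble these local approximants correctly. For each $N$ I would set $f_N := \sum_{i=1}^{N} g_{i,n(i,N)}$ for a suitable choice of indices $n(i,N)\to\infty$; since the sum is finite, $f_N\in\Cooc(M,E)$. To see $f_N\to f$ weak*, fix $\phi\in\Coo_\B(M,E^*)$. Because $\A$ and $\B$ are in duality (Table~1), $A:=\supp f\in\A$ and $B:=\supp\phi\in\B$ satisfy: $A\cap B$ is compact. By the local finiteness of $\{U_i\}$, only finitely many of the $\chi_i$ are nonzero on a neighborhood of the compact set $A\cap B$, so $f[\phi]=\sum_{i\in F}(\chi_i f)[\phi]$ for a finite index set $F$. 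For $N$ large enough that $F\subset\{1,\dots,N\}$, and with $n(i,N)$ chosen large enough, the remaining terms $g_{i,n(i,N)}[\phi]$ with $i>N$ or $i\notin F$ contribute nothing new in the limit: the ones with $i\notin F$ pair to zero against $\phi$ (their supports, contained in $L_i$, can be arranged disjoint from $\supp\phi\cap\supp f$-relevant region — more carefully, one pairs against $\phi$ directly and uses that $(\chi_i f)[\phi]=0$ for $i\notin F$ while controlling the convolution error), so $f_N[\phi]\to f[\phi]$.

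The main obstacle is the bookkeeping that guarantees the approximants stay inside $\A$ rather than merely being compactly supported — but this is actually automatic: compact sets lie in every support system under consideration (each of $\CC$, $pc$, $fc$, $tc$, $sc$, $spc$, $sfc$ contains $c$ by \diagref{diag:closed1}), so $f_N\in\Cooc(M,E)\subset\Coo_\A(M,E)\subset\D_\A(M,E)$ with no extra work. The genuinely delicate point is instead the diagonal choice of mollification parameters $n(i,N)$: one must pick them so that for each fixed $\phi$ the tail $\sum_{i>N}$ and the convolution errors $\sum_{i\le N}(g_{i,n(i,N)}-\chi_i f)[\phi]$ both tend to $0$. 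This is handled by a standard diagonal argument — for each $i$ choose $n(i,N)$ so large that $|(g_{i,n(i,N)}-\chi_i f)[\phi]|<2^{-i-N}$ for all $\phi$ in a fixed countable determining family, or more simply observe that for a fixed $\phi$ the relevant index set $F$ is finite and one only needs $n(i,N)\to\infty$ for $i\in F$. Once this is set up, continuity of each local mollification pairing against the fixed $\phi$ (which is smooth with support in some $B\in\B$, hence restricts to a genuine test section against $\chi_i f$ after cutoff) closes the argument.

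An alternative, cleaner route avoiding the diagonal argument: first approximate $f$ by the compactly supported distributions $f_K := \rho_K f$, where $\rho_K\in\Cooc(M,\R)$ is a cutoff equal to $1$ on a compact exhaustion set $K$; one checks $f_K\to f$ weak* because for fixed $\phi\in\Coo_\B(M,E^*)$ the set $\supp f\cap\supp\phi$ is compact, so eventually $\rho_K\equiv 1$ there and $f_K[\phi]=f[\phi]$. Then each $f_K\in\Dc(M,E)$, and $\Cooc(M,E)$ is dense in $\Dc(M,E)$ in the weak*-topology by the classical mollification argument (which is a local computation in charts, using the partition of unity just to patch). Composing the two density statements — and using that the weak*-topology on $\D_\A$ restricts to the weak*-topology on $\Dc$ since $\Coo_\B\supset\Cooc$ — gives the result. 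I would present this version, as it isolates the only nontrivial input (weak*-density of test sections in compactly supported distributions) and uses the duality of $\A,\B$ only through the single fact that $\supp f\cap\supp\phi$ is compact.
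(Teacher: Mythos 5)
Your ``cleaner route'' is correct and is essentially the paper's proof organized in reverse order: the paper first approximates $u\in\D_\A(M,E)$ by test sections $u_j$ in $\D(M,E)$, then multiplies by a single fixed cutoff $\chi$ (with $\chi\equiv 1$ on $\supp u$ and $\supp\chi\subset A'$ for some $A'\in\A$ containing $\supp u$ in its interior), so that $\chi u_j\in\Cooc(M,E)$ and $\chi u_j[\phi]=u_j[\chi\phi]\to u[\chi\phi]=u[\phi]$ for $\phi\in\Coo_B(M,E^*)$ because $A'\cap B$ is compact; you instead cut off with an exhaustion-based family $\rho_K$ to land in $\Dc(M,E)$ and then approximate there, invoking transitivity of density and the fact that the topology $\Dc$ inherits from $\D_\A$ is weaker than its intrinsic weak*-topology. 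Both arguments rest on the same structural input — the $\A$--$\B$ duality makes $\supp f\cap\supp\phi$ compact, so the cutoff becomes irrelevant for any fixed test section — together with the classical density of test sections in distributions.
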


\begin{proof}
Let $\B$ be the dual support system to $\A$ as in Table~1.
Let $u\in\D_\A(M,E)$.
Put $A:=\supp(u)$, hence $u\in\D_A(M,E)$.
It is well known that $\Cooc(M,E)$ is dense in $\D(M,E)$.
Hence there is a sequence $u_j\in\Cooc(M,E)$ with $u_j\to u$ in $\D(M,E)$.

Choose $A'\in\A$ such that $A$ is contained in the interior of $A'$.
Let $\chi\in\Coo(M,\R)$ be a function such that $\chi\equiv 1$ on $A$ and $\supp\chi\subset A'$.

Let $\phi\in\Coo_B(M,E^*)$ where $B\in\B$.
Since $A'\cap B$ is compact, the section $\chi\phi$ has compact support.
Therefore
\[
(\chi u_j)[\phi] = u_j[\chi\phi] \to u[\chi\phi] = (\chi u)[\phi] = u[\phi].
\]
Thus the compactly supported sections $\chi u_j$ converge to $u$ in $\D_\A(M,E)$.
\end{proof}

%%%%%%%%%%%%%%%%%%%%%%%%%%%%%%%%%%%%%%%%%%%%%%%%%%%%%%%%%%%%%%%%%%%%%%%%%%%%%%%%%%%%%%%%%%%%%%%%%

\section{Properties of Green-Hyperbolic Operators}

\subsection{Green's operators and Green hyperbolic operators}
Let $E_1,E_2\to M$ be vector bundles over a globally hyperbolic manifold.
Let $P:\Coo(M,E_1) \to \Coo(M,E_2)$ be a linear differential operator.
Differential operators do not increase supports and yield continuous maps $P:\Coo_\A(M,E_1) \to \Coo_\A(M,E_2)$ for any support system $\A$.

There is a unique linear differential operator $\Pt:\Coo(M,E_2^*) \to \Coo(M,E_1^*)$ characterized by
\begin{equation}
\int_M \<\phi,Pf\> \dV = \int_M \<\Pt\phi,f\> \dV
\label{eq:FormalDual}
\end{equation}
for all $f\in\Coo(M,E_1)$ and $\phi\in\Coo(M,E_2^*)$ such that $\supp f \cap \supp(\phi)$ is compact.
Here again, $\<\cdot,\cdot\>$ denotes the canonical pairing of $E_i^*$ and $E_i$.
The operator $\Pt$ is called the \emph{formally dual operator} of $P$.

\begin{dfn}\label{def:GreenOp}
An \emph{advanced Green's operator} of $P$ is a linear map $G_+:\Cooc(M,E_2)\to\Coo(M,E_1)$ such that
\begin{enumerate}[(i)]
\item\label{eq:GreenOp1}
$G_+Pf=f$ for all $f\in\Cooc(M,E_1)$;
\item\label{eq:GreenOp2}
$PG_+f=f$ for all $f\in\Cooc(M,E_2)$;
\item\label{eq:GreenOp3}
$\supp(G_+f)\subset J^+(\supp f)$ for all $f\in\Cooc(M,E_2)$.
\end{enumerate}
A linear map $G_-:\Cooc(M,E_2)\to\Coo(M,E_1)$ is called a \emph{retarded Green's operator} of $P$ if \eqref{eq:GreenOp1}, \eqref{eq:GreenOp2} hold and 
\begin{enumerate}[(i)']
\setcounter{enumi}{2}
\item \label{eq:GreenOp4}
$\supp(G_-f)\subset J^-(\supp f)$ holds for every $f\in\Cooc(M,E_2)$.
\end{enumerate}
\end{dfn}
 
\begin{dfn}
The operator $P$ is be called \emph{Green hyperbolic} if $P$ and $\Pt$ have advanced and retarded Green's operators.
\end{dfn}

We will see in \cref{cor:GreenUnique} that uniqueness of the Green's operators comes for free.

\begin{ex}\label{ex:waveop}
The most prominent examples of Green-hyperbolic operators are \emph{wave operators}, also called \emph{normally hyperbolic operators}.
They are second-order differential operators $P$ whose principal symbol is given by the Lorentzian metric.
Locally they take the form
\[
P = \sum_{ij} g^{ij}(x)\frac{\partial^2}{\partial x^i \partial x^j} + \sum_j B_j(x) \frac{\partial}{\partial x^j} + C(x)
\]
where $g^{ij}$ denote the components of the inverse metric tensor, and $B_j$ and $C$ are matrix-valued coefficients depending smoothly on $x$.

The class of wave operators contains the \emph{d'Alembert operator} $P=\Box$, the \emph{Klein-Gordon operator} $P=\Box + m^2$, and the Klein-Gordon operator with a potential, $P=\Box + V$.
In these cases, the operator acts on functions, i.e., the underlying vector bundles $E_1$ and $E_2$ are simply trivial line bundles.

On any vector bundle $E$ one may choose a connection $\nabla$ and put $P=\mathrm{tr}(\nabla^2)$ to obtain a wave operator $\Coo(M,E)\to\Coo(M,E)$.
If $E=\Lambda^kT^*M$ is the bundle of $k$-forms, then $P=d\delta + \delta d$ is a wave operator where $d$ denotes the exterior differential and $\delta$ the codifferential.

It is shown in \cite[Cor.~3.4.3]{BGP07} that wave operators have Green's operators.
Since the formally dual operator of a wave operator is again a wave operator, wave operators are Green hyperbolic.
\end{ex}

\begin{ex}\label{ex:Mink2}
Let us consider a concrete special case of \eref{ex:waveop}.
Let $M=\R^2$ be $2$-dimensional Minkowski space.
We denote a generic point of $M$ by $(t,x)$.
Let 
\[P=\Box=-\frac{\partial^2}{\partial t^2}+\frac{\partial^2}{\partial x^2}\] 
be the d'Alembert operator.
Then one checks by explicit calculation that 
\[
(G_+f)(t,x) 
=
-\frac12 \int_{J^-(t,x)} f(\tau,\xi) \,d\xi \, d\tau
=
-\frac12 \int_{-\infty}^t\left(\int_{x+\tau -t}^{x+t-\tau} f(\tau,\xi) \, d\xi\right) d\tau
\]
yields an advanced Green's operator for $\Box$.
Replacing $J^-(t,x)$ by $J^+(t,x)$ we get a retarded Green's operator.
In other words, the integral kernel of $G_+$ is $-\frac12$ times the characteristic function of $\{(t,x,\tau,\xi)\mid (\tau,\xi)\in J^-(t,x)\}=\{(t,x,\tau,\xi)\mid(x-\xi)^2\leq (t-\tau)^2, \, \tau\leq t\}\subset M\times M$.

For the d'Alembert operator on higher-dimensional Minkowski space the integral kernel of $G_\pm$ is no longer an $L^\infty$-function but is given by the so-called Riesz distributions, see \cite[Sec.~1.2]{BGP07}.
\end{ex}

\begin{ex}
Let $E=T^*M$ and $m>0$.
Then $P=\delta d + m^2$ is the \emph{Proca operator}.
Now $\tilde P:=d\delta + \delta d + m^2$ is a wave operator and hence has Green's operators $\tilde G_\pm$.
One can check that $G_\pm := (m^{-2}d\delta + \mathrm{id})\circ \tilde G_\pm$ are Green's operators of $P$, compare \cite[Sec.~2.4]{BG12b}.
Similarly, one gets Green's operators for $\Pt$.
Thus the Proca operator is not a wave operator but it is Green hyperbolic.
\end{ex}

\subsection{Restrictions to subregions}
Green hyperbolicity persists under restriction to suitable subregions of the manifold $M$.

\begin{lem}
Let $M$ be globally hyperbolic and let $N\subset M$ be an open subset which is causally compatible and globally hyperbolic.
Then the restriction of $P$ to $N$ is again Green hyperbolic.
\end{lem}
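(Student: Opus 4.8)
The plan is to build the Green's operators on $N$ directly out of those on $M$ by sandwiching them between extension-by-zero and restriction. Fix an advanced Green's operator $G_+$ and a retarded Green's operator $G_-$ for $P$ on $M$. Given $f\in\Cooc(N,E_2)$, let $\bar f\in\Cooc(M,E_2)$ denote its extension by zero; this section is smooth because $N$ is open and $\supp f$ is compact in $N$. I would then define
\[
G_+^N f:=(G_+\bar f)|_N\in\Coo(N,E_1),\qquad G_-^N f:=(G_-\bar f)|_N\in\Coo(N,E_1),
\]
both manifestly linear, and check that $G_\pm^N$ are advanced and retarded Green's operators for $P|_N$.

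Verifying the axioms of \dref{def:GreenOp} uses only that differential operators are local and do not enlarge supports, together with causal compatibility. For \eqref{eq:GreenOp2}, locality of $P$ gives $P((G_+\bar f)|_N)=(PG_+\bar f)|_N=\bar f|_N=f$. For \eqref{eq:GreenOp3}, one has $\supp(G_+\bar f)\subset J^+_M(\supp\bar f)=J^+_M(\supp f)$, so that $\supp(G_+^N f)\subset J^+_M(\supp f)\cap N$; causal compatibility turns this into $J^+_M(\supp f)\cap N=\bigcup_{x\in\supp f}(J^+_M(x)\cap N)=\bigcup_{x\in\supp f}J^+_N(x)=J^+_N(\supp f)$, which is exactly what is required. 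For \eqref{eq:GreenOp1}, note that for $f\in\Cooc(N,E_1)$ the section $P|_N f$ has support contained in $\supp f\subset N$, hence its extension by zero to $M$ coincides with $P\bar f$; therefore $G_+^N(P|_N f)=(G_+ P\bar f)|_N=\bar f|_N=f$. The computations for $G_-^N$ are identical with $J^+$ replaced by $J^-$.

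It remains to handle the formally dual operator. The key observation is that the formally dual operator of $P|_N$, formed over $N$, equals $(\Pt)|_N$: the defining relation \eqref{eq:FormalDual} is local, so the coefficients of the formal dual at a point of $N$ depend only on the germ of $P$ and of the metric there, and restricting before dualizing gives the same operator as dualizing before restricting. Since $P$ is Green hyperbolic, $\Pt$ possesses advanced and retarded Green's operators on $M$, and applying the construction above to $\Pt$ produces advanced and retarded Green's operators for $(P|_N)^{\mathrm t}=(\Pt)|_N$ on $N$. Hence both $P|_N$ and $(P|_N)^{\mathrm t}$ have advanced and retarded Green's operators, i.e., $P|_N$ is Green hyperbolic.

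I do not expect a genuine obstacle; the only two places demanding a little care are the smoothness of the zero-extension (which is why $N$ must be open and why one works with test sections) and the support identity in \eqref{eq:GreenOp3}. The latter is precisely where causal compatibility is indispensable: without it one would only obtain $\supp(G_+^N f)\subset J^+_M(\supp f)\cap N$, which in general is strictly larger than $J^+_N(\supp f)$, and the candidate operators would fail to be Green's operators for $P|_N$.
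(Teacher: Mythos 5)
Your proposal is correct and follows essentially the same route as the paper: define $G_\pm^N:=\res\circ G_\pm\circ\ext$ and verify the three axioms using locality of $P$ and causal compatibility, then repeat for the formal dual. You add a useful detail the paper leaves implicit, namely that $(P|_N)^{\mathrm t}=(\Pt)|_N$ because the defining relation for the formal dual is local, and you spell out the chain $J^+_M(\supp f)\cap N=\bigcup_{x\in\supp f}J^+_N(x)=J^+_N(\supp f)$ where the paper simply invokes causal compatibility.
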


\begin{proof}
We construct an advanced Green's operator for the restriction $P|_N$ of $P$ to $N$.
The construction of the retarded Green's operator and the ones for $\Pt$ are analogous.
Denote by $\ext:\Cooc(N,E_2|_N)\to\Cooc(M,E_2)$ the extension-by-zero operator and by $\res:\Coo(M,E_1)\to\Coo(N,E_1|_N)$ the restriction operator.
Let $G_+:\Cooc(M,E_2) \to \Coo(M,E_1)$ be the advanced Green's operator of $P$.
We claim that 
\[
G_+^N:=\res\circ G_+\circ\ext:\,\,\,\Cooc(N,E_2|_N)\to\Coo(N,E_1|_N)
\]
is an advanced Green's operator of $P|_N$.
Since differential operators commute with restrictions and extensions we easily check for $f\in\Cooc(N,E_i|_N)$:
\[
P|_N(G_+^Nf)
=
\res\circ P\circ G_+ \circ \ext f
=
\res\circ\ext f
=
f
\]
and
\[
G_+^N(P|_Nf)
=
\res \circ G_+\circ \ext \circ \res \circ P \circ \ext f
=
\res \circ G_+\circ  P \circ \ext f
=
\res \circ \ext f
=
f.
\]
This shows \eqref{eq:GreenOp1} and \eqref{eq:GreenOp2} in \dref{def:GreenOp}.
As to \eqref{eq:GreenOp3} we see
\begin{align*}
\supp(G_+^Nf)
&=
\supp(\res\circ G_+\circ\ext f)
=
\supp(G_+\circ\ext f) \cap N \\
&\subset
J^+_M(\supp(\ext f)) \cap N
=
J^+_M(\supp f) \cap N
=
J^+_N(\supp f).
\end{align*}
In the last equality we used that $N$ is causally compatible.
\end{proof}

\begin{dfn}
Let $G_\pm$ be advanced and retarded Green's operators of $P$.
Then $G:=G_+-G_-:\Cooc(M,E_2) \to \Coo(M,E_1)$ is called the \emph{causal propagator}.
\end{dfn}

\subsection{Extensions of Green's operators}
From \eqref{eq:GreenOp3} and \eqref{eq:GreenOp3}' in \dref{def:GreenOp} we see that the Green's operators of $P$ give rise to linear maps
\begin{align*}
G_+&:\Cooc(M,E_2) \to \Coospc(M,E_1), \\
G_-&:\Cooc(M,E_2) \to \Coosfc(M,E_1), \\
G&:\Cooc(M,E_2) \to \Coosc(M,E_1).
\end{align*}

\begin{thm}\label{thm:Gquer}
There are unique linear extensions
\begin{align*}
\overline G_+:\Coopc(M,E_2) \to \Coopc(M,E_1) \quad\mbox{ and }\quad
\overline G_-:\Coofc(M,E_2) \to \Coofc(M,E_1)
\end{align*}
of $G_+$ and $G_-$ respectively, such that 
\begin{enumerate}[(i)]
\item\label{eq:GreenOp5}
$\overline G_+Pf=f$ for all $f\in\Coopc(M,E_1)$;
\item\label{eq:GreenOp6}
$P\overline G_+f=f$ for all $f\in\Coopc(M,E_2)$;
\item\label{eq:GreenOp7}
$\supp(\overline G_+f)\subset J^+(\supp f)$ for all $f\in\Coopc(M,E_2)$;
\end{enumerate}
and similarly for $\overline G_-$.
\end{thm}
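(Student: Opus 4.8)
The plan is to build $\overline{G}_+$ from $G_+$ by a pointwise formula using compactly supported cutoffs; the stated properties are then all checked locally, and uniqueness drops out of \eqref{eq:GreenOp7}. Everything for $\overline{G}_-$ is the time-reversed mirror (swap future/past, $G_+$/$G_-$, past-compact/future-compact), so I only treat $\overline{G}_+$. One should \emph{not} try to cut $f\in\Coopc(M,E_2)$ into compactly supported pieces supported in slabs between successive Cauchy hypersurfaces and apply $G_+$ to each, since such slabs are only temporally compact, not compact. Instead: given $f\in\Coopc(M,E_2)$ and $x\in M$, the set $J^-(x)\cap\supp f$ is compact by the very definition of past compactness, so one may choose $\chi\in\Cooc(M,\R)$ with $\chi\equiv 1$ on an open neighborhood of $J^-(x)\cap\supp f$ and define $(\overline{G}_+f)(x):=(G_+(\chi f))(x)$, which is legitimate because $\chi f\in\Cooc(M,E_2)$. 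This is independent of $\chi$: for a second admissible cutoff $\chi'$ the difference $\psi:=\chi-\chi'$ vanishes on a neighborhood of $J^-(x)\cap\supp f$, so $\supp(\psi f)\subset\supp f$ is disjoint from $J^-(x)$, hence $x\notin J^+(\supp(\psi f))\supset\supp(G_+(\psi f))$ by \eqref{eq:GreenOp3}, giving $(G_+(\psi f))(x)=0$.

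Next I would record the elementary properties. Fix $x_0\in M$ and a compact neighborhood $K_0$; the set $J^-(K_0)\cap\supp f$ is again compact (same argument as in the proof of \lref{lem:SpacelikeCompact}), so a single cutoff $\chi\equiv 1$ on a neighborhood of it is admissible for every $x$ in the interior of $K_0$. Hence $\overline{G}_+f$ agrees there with the smooth section $G_+(\chi f)$, so $\overline{G}_+f\in\Coo(M,E_1)$, and linearity follows by using a common cutoff for finitely many sections near a point. Taking $\chi\equiv 1$ near the compact set $\supp f$ shows $\overline{G}_+$ restricts to $G_+$ on $\Cooc(M,E_2)$. If $x\notin J^+(\supp f)$ then $J^-(x)\cap\supp f=\emptyset$, so $\chi\equiv 0$ is admissible near $x$ and $\overline{G}_+f$ vanishes there; this gives \eqref{eq:GreenOp7}, and since $\supp f$ past compact implies $J^+(\supp f)$ past compact, $\overline{G}_+$ indeed maps $\Coopc(M,E_2)$ into $\Coopc(M,E_1)$. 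For \eqref{eq:GreenOp6}: with $K_0,\chi$ as above, $\overline{G}_+f=G_+(\chi f)$ on the interior of $K_0$, so there $P\overline{G}_+f=PG_+(\chi f)=\chi f$ by \eqref{eq:GreenOp2} and locality of $P$, and $\chi f=f$ there because $\chi\equiv 1$ on $\supp f\cap K_0$; letting $x_0$ range over $M$ gives \eqref{eq:GreenOp6}.

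The one point that needs care is \eqref{eq:GreenOp5}. Let $f\in\Coopc(M,E_1)$, put $g:=Pf\in\Coopc(M,E_2)$ with $\supp g\subset\supp f$, and fix $x_0$, $K_0$ and $\chi\equiv 1$ on a neighborhood $V\supset J^-(K_0)\cap\supp f$. Using the commutator identity $\chi g=\chi Pf=P(\chi f)-[P,\chi]f$, where $[P,\chi]f:=P(\chi f)-\chi Pf$ is of lower order in $f$ with coefficients supported in $\supp(d\chi)\subset\supp\chi$ and vanishing wherever $\chi$ is locally constant, one finds that $[P,\chi]f\in\Cooc(M,E_2)$ has support inside $\supp f$ and disjoint from $V$, hence disjoint from $J^-(K_0)$; therefore $(G_+([P,\chi]f))(x)=0$ for $x\in K_0$ by \eqref{eq:GreenOp3}. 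Consequently, on the interior of $K_0$, $\overline{G}_+(Pf)=G_+(\chi Pf)=G_+P(\chi f)-G_+([P,\chi]f)=\chi f=f$, using \eqref{eq:GreenOp1} for $\chi f\in\Cooc(M,E_1)$ and once more $\chi\equiv 1$ on $\supp f\cap K_0$; letting $x_0$ vary proves \eqref{eq:GreenOp5}. I expect this commutator bookkeeping — verifying that $[P,\chi]f$ keeps compact support and cannot reach back into $J^-(K_0)$ — to be the main thing requiring attention. Finally, uniqueness is immediate: for any linear extension $\overline{G}_+$ of $G_+$ satisfying \eqref{eq:GreenOp7}, any $f\in\Coopc(M,E_2)$, any $x\in M$ and $\chi$ as in the construction, one has $(\overline{G}_+f)(x)=(\overline{G}_+(\chi f))(x)+(\overline{G}_+((1-\chi)f))(x)=(G_+(\chi f))(x)$ because $x\notin J^+(\supp((1-\chi)f))$; the assertions for $\overline{G}_-$ follow by reversing the time orientation.
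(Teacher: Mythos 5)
Your proposal is correct and follows essentially the same route as the paper's proof: defining $\overline G_+(f)(x)$ by the pointwise cutoff formula $G_+(\chi f)(x)$, checking well-definedness, smoothness, linearity, and properties (i)--(iii) with the same commutator bookkeeping for $[P,\chi]$, and establishing uniqueness. The only cosmetic differences are that you verify smoothness by fixing a compact neighborhood $K_0$ (the paper uses the open sets $I^-(x)$), and you deduce uniqueness directly from linearity together with (iii), whereas the paper derives it from (i) and (ii) by observing that $\overline G_+$ inverts $P:\Coopc(M,E_1)\to\Coopc(M,E_2)$; both are valid.
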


\begin{proof}
We only consider $\overline G_+$, the proof for $\overline G_-$ being analogous.

(a)
Let $f\in\Coopc(M.E)$.
Given $x\in M$ we define $(\overline G_+f)(x)$ as follows:
Since $J^-(x)\cap\supp f$ is compact we can choose a cutoff function $\chi\in\Cooc(M,\R)$ with $\chi\equiv 1$ on a neighborhood of $J^-(x)\cap\supp f$.
Now we put 
\begin{equation}
(\overline G_+f)(x):=(G_+(\chi f))(x).
\label{eq:defGquer}
\end{equation}

(b)
The definition in \eqref{eq:defGquer} is independent of the choice of $\chi$.
Namely, let $\chi'$ be another such cutoff function.
It suffices to show $x\notin \supp(G_+((\chi-\chi')f))$.
If $x\in \supp(G_+((\chi-\chi')f))\subset J^+(\supp((\chi-\chi')f))$ then there would be a causal curve from $\supp((\chi-\chi')f)$ to $x$.
Hence $\supp((\chi-\chi')f)\cap J^-(x)$ would be nonempty.
On the other hand,
\begin{align*}
\supp((\chi-\chi')f)\cap J^-(x)
&=
\supp(\chi-\chi')  \cap \supp f \cap J^-(x)\\
&\subset
\supp(\chi-\chi') \cap \{\chi\equiv\chi'\equiv 1\}\\
&=
\emptyset ,
\end{align*}
a contradiction.

(c)
The section $\overline G_+f$ is smooth.
Namely, a cutoff function $\chi$ for $x\in M$ also works for all $x'\in J^-(x)$ simply because $J^-(x')\subset J^-(x)$.
In particular, on the open set $I^-(x)$ we have $\overline G_+f=G_+(\chi f)$ for a fixed $\chi$.
Hence $\overline G_+f$ is smooth on $I^-(x)$.
Since any point in $M$ is contained in $I^-(x)$ for some $x$,  $\overline G_+f$ is smooth on $M$.

(d)
The operator $\overline G_+$ is linear.
The only issue here is additivity.
Let $f_1,f_2\in\Coopc(M,E_2)$.
Then $\supp(f_1) \cap J^-(x)$ and  $\supp(f_2) \cap J^-(x)$ are both compact and we may choose the cutoff function  $\chi$ such that $\chi\equiv 1$ on neighborhoods of both $\supp(f_1) \cap J^-(x)$ and  $\supp(f_2) \cap J^-(x)$.
Then $\chi\equiv 1$ on a neighborhood of $\supp(f_1+f_2) \cap J^-(x)$ and we get
\begin{align*}
(\overline G_+(f_1+f_2))(x)
&=
(G_+(\chi f_1 + \chi f_2))(x) \\
&=
(G_+(\chi f_1)(x) + (G_+(\chi f_2))(x) \\
&=
(\overline G_+ f_1)(x) + (\overline G_+f_2))(x).
\end{align*}

(e)
Let $x\in M$ and $\chi$ a cutoff function which is identically $\equiv 1$ on a neighborhood of $\supp f\cap J^-(x)$.
In particular, we may choose $\chi\equiv 1$ on a neighborhood of $x$.
Then 
\[
(P\overline G_+f)(x)
=
(PG_+(\chi f))(x)
=
(\chi f)(x)
=
f(x).
\]
This shows \eqref{eq:GreenOp6}.
Moreover,
\begin{align*}
(\overline G_+Pf)(x)
&=
(G_+(\chi\cdot Pf))(x) \\
&=
(G_+P(\chi f))(x) + (G_+([\chi,P]f))(x)\\
&=
f(x) + (G_+([\chi,P]f))(x).
\end{align*}
In order to prove \eqref{eq:GreenOp5} we have to show $x\notin \supp(G_+([\chi,P]f))$.
The coefficients of the differential operator $[\chi,P]$ vanish where $\chi\equiv1$, hence in particular on $\supp f\cap J^-(x)$.
Now we find
\begin{align*}
\supp(G_+([\chi,P]f))
&\subset
J^+(\supp([\chi,P]f)) \\
&\subset
J^+(\supp f \setminus J^-(x)) \\
&\subset
J^+(\supp f) \setminus \{x\}
\end{align*}
and therefore $x\notin \supp(G_+([\chi,P]f))$.

(f)
As to \eqref{eq:GreenOp7} we see for $f\in\Coopc(M,E_2)$
\begin{align*}
\supp(\overline G_+f)
\subset
\bigcup_{\chi} \supp(G_+(\chi f)) 
\subset
\bigcup_{\chi} J^+(\supp(\chi f)) 
\subset
J^+(\supp f).
\end{align*}
Here the union is taken over all $\chi\in\Cooc(M,\R)$.

(g)
Since the causal future of a past-compact set is again past compact, \eqref{eq:GreenOp7} shows that $\overline G_+$ maps sections with past-compact support to sections with past-compact support.
Now \eqref{eq:GreenOp5} and \eqref{eq:GreenOp6} show that $P$ considered as an operator $\Coopc(M,E_1) \to \Coopc(M,E_2)$ is bijective and that $\overline G_+$ is its inverse.
In particular, $\overline G_+$ is uniquely determined.
\end{proof}

\begin{cor}\label{cor:sol}
There are no nontrivial solutions $f\in\Coo(M,E_1)$ of the differential equation $Pf=0$ with past-compact or future-compact support.
For any $g\in\Coopc(M,E_2)$ or $g\in\Coofc(M,E_2)$ there exists a unique $f\in\Coo(M,E_1)$ solving $Pf=g$ and such that $\supp(f)\subset J^+(\supp(g))$ or $\supp(f)\subset J^-(\supp(g))$, respectively.
\hfill\qed
\end{cor}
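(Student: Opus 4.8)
The plan is to derive everything directly from \tref{thm:Gquer}, since the corollary is essentially a restatement of properties \eqref{eq:GreenOp5}--\eqref{eq:GreenOp7} together with the bijectivity observed in part (g) of its proof. I will treat the past-compact case in detail; the future-compact case follows by interchanging the roles of future and past, i.e.\ by using $\overline G_-$ in place of $\overline G_+$.

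For the first assertion, suppose $f\in\Coo(M,E_1)$ has past-compact support and satisfies $Pf=0$. Then $f\in\Coopc(M,E_1)$, so property \eqref{eq:GreenOp5} of \tref{thm:Gquer} applies and gives
\[
f=\overline G_+Pf=\overline G_+0=0.
\]
For the existence and uniqueness of solutions, let $g\in\Coopc(M,E_2)$. Put $f:=\overline G_+g\in\Coopc(M,E_1)$. By \eqref{eq:GreenOp6} we have $Pf=P\overline G_+g=g$, and by \eqref{eq:GreenOp7} we have $\supp(f)=\supp(\overline G_+g)\subset J^+(\supp g)$, so $f$ has the desired properties. For uniqueness, if $f_1,f_2\in\Coo(M,E_1)$ both solve $Pf_i=g$ with $\supp(f_i)\subset J^+(\supp g)$, then since $J^+(\supp g)$ is past compact (the causal future of a past-compact set is past compact, as used repeatedly above), both $f_i$ lie in $\Coopc(M,E_1)$, and $P(f_1-f_2)=0$ with $f_1-f_2\in\Coopc(M,E_1)$; by the first part, $f_1=f_2$. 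Equivalently, one may simply invoke the bijectivity of $P\colon\Coopc(M,E_1)\to\Coopc(M,E_2)$ established in part (g) of the proof of \tref{thm:Gquer}, which already packages both the injectivity (no nontrivial past-compact solutions) and the surjectivity with support control.

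The future-compact case is entirely parallel: a solution $f$ of $Pf=0$ with future-compact support lies in $\Coofc(M,E_1)$ and is killed by $\overline G_-P$; for $g\in\Coofc(M,E_2)$ the section $f:=\overline G_-g$ solves $Pf=g$ with $\supp(f)\subset J^-(\supp g)$, and uniqueness follows since $J^-(\supp g)$ is future compact, so any competing solution lies in $\Coofc(M,E_1)$ and the difference is a future-compactly supported solution of the homogeneous equation, hence zero.

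There is essentially no obstacle here: the entire content has been extracted into \tref{thm:Gquer}, and the corollary is a bookkeeping consequence. The only point that requires a moment's care is noting that the support condition $\supp(f)\subset J^+(\supp g)$ (resp.\ $J^-(\supp g)$) automatically forces a candidate solution into the space $\Coopc$ (resp.\ $\Coofc$), so that the uniqueness statement from \tref{thm:Gquer} is applicable; this uses only that the causal future of a past-compact set is again past compact, which was established in the discussion preceding the theorem.
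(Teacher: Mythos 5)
Your proposal is correct and takes exactly the approach the paper intends: the corollary is stated with a terminal \qed and no proof, precisely because it is an immediate bookkeeping consequence of \tref{thm:Gquer} (properties \eqref{eq:GreenOp5}--\eqref{eq:GreenOp7} and the bijectivity noted in part (g)), which is what you have spelled out. The one point requiring care---that $\supp(f)\subset J^+(\supp g)$ forces any competing solution into $\Coopc(M,E_1)$ because $J^+$ of a past-compact set is past compact---you have correctly identified and justified.
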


Since the causal future of a strictly past-compact set is again strictly past compact we can restrict $\overline G_+$ to smooth sections with strictly past-compact support and we get

\begin{cor}\label{cor:Gtilde}
There are unique linear extensions
\begin{align*}
\tilde G_+:\Coospc(M,E_2) \to \Coospc(M,E_1) \quad\mbox{ and }\quad
\tilde G_-:\Coosfc(M,E_2) \to \Coosfc(M,E_1)
\end{align*}
of $G_+$ and $G_-$ respectively, such that 
\begin{enumerate}[(i)]
\item\label{eq:GreenOp8}
$\tilde G_+Pf=f$ for all $f\in\Coospc(M,E_1)$;
\item\label{eq:GreenOp9}
$P\tilde G_+f=f$ for all $f\in\Coospc(M,E_2)$;
\item\label{eq:GreenOp10}
$\supp(\tilde G_+f)\subset J^+(\supp f)$ for all $f\in\Coospc(M,E_2)$;
\end{enumerate}
and similarly for $\tilde G_-$.
\hfill\qed
\end{cor}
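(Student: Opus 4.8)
The plan is to obtain $\tilde G_+$ by simply restricting the operator $\overline G_+$ constructed in \tref{thm:Gquer}. Indeed, by \lref{lem:CauchyChar2}\eqref{spc:1} (and its analog), or more directly by the observation already made before \cref{cor:Gtilde}, a strictly past-compact set $A$ satisfies $A\subset J^+(K)$ for some compact $K$, whence $J^+(A)\subset J^+(K)$ is again strictly past compact. Thus $\overline G_+$ restricts to a map $\Coospc(M,E_2)\to\Coospc(M,E_1)$, because $\Coospc\subset\Coopc$ (see \diagref{diag:Cspaces}) and property \eqref{eq:GreenOp7} guarantees that the image of a section with strictly past-compact support again has strictly past-compact support. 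Denote this restriction by $\tilde G_+$.

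First I would verify that $\tilde G_+$ indeed extends $G_+$: since $\Cooc(M,E_2)\subset\Coospc(M,E_2)\subset\Coopc(M,E_2)$ and $\overline G_+$ already extends $G_+$ by \tref{thm:Gquer}, this is immediate. Next, properties \eqref{eq:GreenOp8}, \eqref{eq:GreenOp9}, \eqref{eq:GreenOp10} are the restrictions of \eqref{eq:GreenOp5}, \eqref{eq:GreenOp6}, \eqref{eq:GreenOp7} of \tref{thm:Gquer} to the smaller spaces $\Coospc$, and hence hold automatically. In particular one needs to check that $Pf$ lands back in $\Coospc$ when $f\in\Coospc$, but differential operators do not increase supports, so this is clear. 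The statement for $\tilde G_-$ follows by interchanging future and past, using that the causal past of a strictly future-compact set is strictly future compact.

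For uniqueness I would argue exactly as in part (g) of the proof of \tref{thm:Gquer}: properties \eqref{eq:GreenOp8} and \eqref{eq:GreenOp9} say that $P$, considered as a map $\Coospc(M,E_1)\to\Coospc(M,E_2)$, is bijective with inverse $\tilde G_+$; so $\tilde G_+$ is the unique two-sided inverse and in particular the unique linear extension of $G_+$ satisfying \eqref{eq:GreenOp8}--\eqref{eq:GreenOp9}. Alternatively, uniqueness is inherited directly: any extension to $\Coospc$ satisfying these identities would, by the same cutoff-function argument as in \tref{thm:Gquer}, have to agree with the pointwise formula \eqref{eq:defGquer}, since for each $x\in M$ the set $J^-(x)\cap\supp f$ is compact regardless of whether $\supp f$ is strictly past compact or merely past compact.

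There is essentially no obstacle here; the only point requiring a moment's care is the stability of the support class $spc$ under $A\mapsto J^+(A)$, and that $\Coospc$ is a subspace of $\Coopc$ so that \tref{thm:Gquer} applies verbatim — both of which have already been recorded in the text preceding the corollary. This is why the statement is presented as a corollary with the proof replaced by \qed.
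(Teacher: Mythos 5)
Your proposal is correct and matches the paper's approach exactly: the paper also obtains $\tilde G_\pm$ by restricting $\overline G_\pm$ from \tref{thm:Gquer}, using the one-line observation (stated immediately before the corollary) that $J^+$ preserves strict past-compactness, with properties \eqref{eq:GreenOp8}--\eqref{eq:GreenOp10} and uniqueness inherited verbatim. The paper therefore leaves the proof as \qed, and your elaboration fills in precisely the details the author considered immediate.
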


\subsection{Uniqueness and continuity of Green's operators}
The extension of Green's operators to sections with past-compact support will now be used to show continuity and uniqueness of the Green's operators.

\begin{cor}
The Green's operators $G_\pm:\Cooc(M,E_2)\to\Coo(M,E_1)$ as well as the extensions
\begin{alignat*}{3}
\tilde G_+&:\Coospc(M,E_2) \to \Coospc(M,E_1),& \quad
\tilde G_-&:\Coosfc(M,E_2) \to \Coosfc(M,E_1), \\
\overline G_+&:\Coopc(M,E_2) \to \Coopc(M,E_1),& \quad
\overline G_-&:\Coofc(M,E_2) \to \Coofc(M,E_1)
\end{alignat*}
are continuous.
\end{cor}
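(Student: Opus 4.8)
The plan is to reduce the continuity of each of the six maps to the continuity of a single restriction map between Fréchet spaces, and then invoke the open mapping / closed graph machinery together with the results already established. First I would treat $\tilde G_+$ on $\Coospc(M,E_2)\to\Coospc(M,E_1)$. By Corollary~\ref{cor:Gtilde} this map is a bijection with inverse $P$, which is continuous (differential operators are continuous on all the spaces $\Coo_\A$). So it suffices to show that $\tilde G_+$ has closed graph, or equivalently that $P\colon\Coospc(M,E_1)\to\Coospc(M,E_2)$ is an open map. Since $spc$ is essentially countable (as noted in the excerpt), I can work with a basic chain $B_1\subset B_2\subset\cdots$ and reduce, via \lref{lem:LFKonvergenz}, to showing that for each $j$ the restricted map $\tilde G_+\colon\Coo_{B_j}(M,E_2)\to\Coo_{B_j'}(M,E_1)$ is continuous, where $B_j'$ is chosen with $J^+(B_j)\subset B_j'$ in $spc$; this uses the support estimate \eqref{eq:GreenOp10}. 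Both $\Coo_{B_j}(M,E_2)$ and $\Coo_{B_j'}(M,E_1)$ are Fréchet spaces, so by the closed graph theorem it is enough to check that the graph is closed: if $f_k\to f$ in $\Coo_{B_j}(M,E_2)$ and $\tilde G_+ f_k\to g$ in $\Coo_{B_j'}(M,E_1)$, then applying the continuous operator $P$ gives $f_k=P\tilde G_+ f_k\to Pg$, hence $Pg=f$; then by the uniqueness in Corollary~\ref{cor:Gtilde} (the only strictly-past-compactly-supported solution of $P\,\cdot=f$ with support in $J^+(\supp f)$ is $\tilde G_+ f$), we conclude $g=\tilde G_+ f$. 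This proves continuity of $\tilde G_+$, and the same argument verbatim handles $\tilde G_-$.

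For $\overline G_+\colon\Coopc(M,E_2)\to\Coopc(M,E_1)$ the support system $pc$ is \emph{not} essentially countable, so \lref{lem:LFKonvergenz} is unavailable and I cannot reduce to Fréchet subspaces directly. Instead I would argue as follows. By \tref{thm:Gquer}, $\overline G_+$ is the inverse of the continuous linear bijection $P\colon\Coopc(M,E_1)\to\Coopc(M,E_2)$. It therefore suffices to prove that $\overline G_+$ is sequentially continuous, because the weak/direct-limit topologies in play here are such that a linear map out of $\Coopc$ is continuous iff its restriction to each $\Coo_A(M,E_2)$, $A\in pc$, is continuous — and each such restriction is continuous precisely when it has closed graph in the Fréchet space $\Coo_A(M,E_2)\to\Coo_{A'}(M,E_1)$ with $J^+(A)\subset A'\in pc$ (such an $A'$ exists by \dref{def:suppsys}(ii) applied within $pc$, or directly since $J^+$ of a past-compact set is past compact). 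The closed-graph check is identical to the $spc$ case: apply $P$, use its continuity, and use uniqueness from \tref{thm:Gquer}. Thus continuity of $\overline G_+$, and symmetrically $\overline G_-$, follows without needing essential countability — the key point is that continuity on a direct limit only requires continuity on each member $\Coo_A$, and each of those is a Fréchet-space statement amenable to the closed graph theorem.

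Finally, for the original Green's operators $G_\pm\colon\Cooc(M,E_2)\to\Coo(M,E_1)$, I would observe that $G_+$ is the composition of the inclusion $\Cooc(M,E_2)\hookrightarrow\Coospc(M,E_2)$, the now-established continuous map $\tilde G_+$, and the inclusion $\Coospc(M,E_1)\hookrightarrow\Coo(M,E_1)$; all three are continuous by \diagref{diag:Cspaces} and the previous paragraphs, so $G_+$ is continuous. Likewise $G_-$ factors through $\tilde G_-$ and $\Coosfc$. The main obstacle I anticipate is the $pc$/$fc$ case: one must be careful that continuity of a linear map \emph{out of} the direct-limit space $\Coopc$ really is equivalent to continuity of its restriction to every $\Coo_A(M,E_2)$ — this is true by the universal property of the locally convex direct limit topology as recalled in the text (a convex set is open iff its trace on each $\Coo_A$ is open), so no essential-countability hypothesis is needed for this direction, and the closed graph theorem applies on each Fréchet piece. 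The one genuinely delicate point is verifying that $\overline G_+$ indeed maps $\Coo_A(M,E_2)$ into a \emph{fixed} Fréchet subspace $\Coo_{A'}(M,E_1)$ rather than into the full direct limit — but \eqref{eq:GreenOp7} gives exactly $\supp(\overline G_+f)\subset J^+(\supp f)\subset J^+(A)=:$ a past-compact set, so taking $A'$ to be any past-compact set containing $J^+(A)$ in its interior (which exists) does the job.
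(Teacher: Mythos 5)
Your proof is correct and takes essentially the same route as the paper's: in both, one reduces to Fr\'echet subspaces $\Coo_{J^+(A)}(M,E_i)$ via the support estimate, invokes a Baire-category theorem there (you verify the closed graph of $\tilde G_+$ via uniqueness; the paper applies the open mapping theorem to the continuous bijection $P:\Coo_{J^+(A)}(M,E_1)\to\Coo_{J^+(A)}(M,E_2)$, which is an equivalent device), passes to the direct limit by its universal property, and factors $G_\pm$ through the extensions and the embeddings of \diagref{diag:Cspaces}. Your initial detour through essential countability and \lref{lem:LFKonvergenz} for $spc$ is unnecessary, as you yourself observe when handling $pc$: the universal property of the locally convex direct-limit topology alone applies uniformly to all four support systems, which is also how the paper proceeds.
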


\begin{proof}
The operator $\overline G_+:\Coopc(M,E_2) \to \Coopc(M,E_1)$ is the inverse of $P$ when considered as an operator $\Coopc(M,E_1) \to \Coopc(M,E_2)$.
If $A\in pc$, then also $J^+(A)\in pc$.
Now $\overline G_+$ maps sections with support in $J^+(A)$ to sections with support in $J^+(J^+(A))=J^+(A)$.
Hence $P$ yields a bijective linear operator $\Coo_{J^+(A)}(M,E_1) \to \Coo_{J^+(A)}(M,E_2)$ with inverse given by the restriction of $\overline G_+$ to $\Coo_{J^+(A)}(M,E_2)$.
By the open mapping theorem for Fr\'echet spaces \cite[Cor.~1, p.~172]{T06}, $\overline G_+$ is continuous as a map $\Coo_{J^+(A)}(M,E_2) \to \Coo_{J^+(A)}(M,E_1)$.
Since we have the continuous embeddings $\CooA(M,E_2) \subset \Coo_{J^+(A)}(M,E_2)$ and $\Coo_{J^+(A)}(M,E_1)\subset\Coopc(M,E_1)$, the operator $\overline G_+$ is also continuous as a map $\CooA(M,E_2) \to \Coopc(M,E_1)$.
Since this holds for any $A\in pc$, we conclude that $\overline G_+:\Coopc(M,E_2) \to \Coopc(M,E_1)$ is continuous.

A similar argument shows that $\tilde G_+:\Coospc(M,E_2) \to \Coospc(M,E_1)$ is continuous.
Using the continuous embeddings $\Cooc(M,E_2) \subset \Coospc(M,E_2)$ and $\Coospc(M,E_1) \subset \Coo(M,E_1)$ we see that the Green's operator $G_+$ is continuous.
The same reasoning proves the claim for $G_-$, $\tilde G_-$, and $\overline G_-$.
\end{proof}

\begin{cor}\label{cor:GreenUnique}
The Green's operators of a Green-hyperbolic operator are unique.
\end{cor}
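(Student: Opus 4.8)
The plan is to read off uniqueness directly from \tref{thm:Gquer}. Recall that in its proof we showed that any advanced Green's operator $G_+$ of $P$ extends, via the formula \eqref{eq:defGquer}, to an operator $\overline G_+:\Coopc(M,E_2)\to\Coopc(M,E_1)$, and that part (g) of that proof identifies $\overline G_+$ \emph{intrinsically}: it is the inverse of $P$ regarded as the bijective operator $\Coopc(M,E_1)\to\Coopc(M,E_2)$. This characterization refers only to $P$, not to the $G_+$ we started with.

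First I would verify that $\overline G_+$ genuinely restricts to $G_+$ on test sections. For $f\in\Cooc(M,E_2)$ the set $\supp f$ is compact, so one may choose a single cutoff function $\chi\in\Cooc(M,\R)$ with $\chi\equiv 1$ on a neighborhood of $\supp f$; this $\chi$ is admissible in \eqref{eq:defGquer} for every $x\in M$, and $\chi f=f$, so $\overline G_+f=G_+(\chi f)=G_+f$.

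Now let $G_+$ and $G_+'$ be two advanced Green's operators of $P$. Each extends to an operator $\Coopc(M,E_2)\to\Coopc(M,E_1)$ inverting the bijection $P\colon\Coopc(M,E_1)\to\Coopc(M,E_2)$; since the inverse of a bijection is unique, the two extensions coincide, and restricting to $\Cooc(M,E_2)$ gives $G_+=G_+'$. The identical argument with $\overline G_-$ and future-compact supports in place of past-compact ones shows that the retarded Green's operator is unique. Finally, $P$ being Green hyperbolic means $\Pt$ also admits advanced and retarded Green's operators, and since the formally dual operator of $\Pt$ is again $P$, the operator $\Pt$ is itself Green hyperbolic; hence the same reasoning proves uniqueness for $\Pt$ as well.

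There is essentially no obstacle here. The only point that deserves to be stated with care is the one already emphasized: although the construction in \tref{thm:Gquer} takes a Green's operator as input, its output is pinned down as an inverse of $P$ on a fixed space of sections with prescribed support behaviour, and therefore cannot depend on that input.
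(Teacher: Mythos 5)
Your argument is correct and follows exactly the same route as the paper's proof: both derive uniqueness by observing that $\overline G_+$ is intrinsically characterized as the inverse of the bijection $P\colon\Coopc(M,E_1)\to\Coopc(M,E_2)$ (part (g) of the proof of \tref{thm:Gquer}), so that any advanced Green's operator is determined as the restriction of this canonical inverse to $\Cooc(M,E_2)$; the retarded and dual cases follow by symmetry. Your write-up merely makes explicit the verification that $\overline G_+$ restricts to $G_+$ and spells out the argument for $\Pt$, details the paper leaves tacit.
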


\begin{proof}
The advanced Green's operator $G_+$ is a restriction of the operator $\overline G_+$ which is uniquely determined by $P$ (as the inverse of $P:\Coopc(M,E)\to\Coopc(M,E)$) and similarly for $G_-$.
\end{proof}

\subsection{Composition of Green-hyperbolic operators}
We now show that the composition as well as ``square roots'' of Green-hyperbolic operators and again Green hyperbolic.

\begin{cor}\label{cor:Komposition}
Let $P_1:\Coo(M,E_1)\to\Coo(M,E_2)$ and $P_2:\Coo(M,E_2)\to\Coo(M,E_3)$ be Green hyperbolic.
Then $P_2\circ P_1:\Coo(M,E_1)\to\Coo(M,E_3)$ is Green hyperbolic.
\end{cor}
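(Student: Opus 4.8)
The plan is to reduce everything to a one‑sided statement: if $Q_1$ and $Q_2$ are linear differential operators each possessing an advanced Green's operator, then so does $Q_2\circ Q_1$, and likewise with ``advanced'' replaced by ``retarded'' throughout. Granting this, the corollary follows immediately. Since $P_1$ and $P_2$ are Green hyperbolic, each of $P_1,P_2,P_1^{t},P_2^{t}$ has advanced and retarded Green's operators. Applying the one‑sided statement with $Q_1=P_1$, $Q_2=P_2$ yields advanced and retarded Green's operators for $P_2\circ P_1$, and applying it with $Q_1=P_2^{t}$, $Q_2=P_1^{t}$ yields them for $P_1^{t}\circ P_2^{t}$. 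As \eqref{eq:FormalDual} together with uniqueness of the formal dual gives $(P_2\circ P_1)^{t}=P_1^{t}\circ P_2^{t}$, this supplies all four Green's operators required for $P_2\circ P_1$ to be Green hyperbolic.

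To prove the one‑sided statement, let $G_1^{+}$ and $G_2^{+}$ be advanced Green's operators of $Q_1$ and $Q_2$. The natural guess for a Green's operator of $Q_2\circ Q_1$ is the composition $G_1^{+}\circ G_2^{+}$, but this does not literally typecheck: $G_2^{+}$ maps $\Cooc(M,E_3)$ into $\Coo(M,E_2)$, whereas $G_1^{+}$ is only defined on $\Cooc(M,E_2)$. The key observation — and essentially the only subtlety in the argument — is that $G_2^{+}$ sends a compactly supported section $f$ to a section supported in $J^{+}$ of a compact set, hence into $\Coospc(M,E_2)$, and that on this space the extension $\tilde G_1^{+}$ of $G_1^{+}$ is available by \cref{cor:Gtilde}. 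I would therefore define $G^{+}:=\tilde G_1^{+}\circ G_2^{+}\colon\Cooc(M,E_3)\to\Coospc(M,E_1)\subset\Coo(M,E_1)$.

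What is left is to verify the three conditions of \dref{def:GreenOp}, which are routine. For $f\in\Cooc(M,E_3)$ one computes $(Q_2Q_1)G^{+}f=Q_2\bigl(Q_1\tilde G_1^{+}\bigr)(G_2^{+}f)=Q_2G_2^{+}f=f$, using \eqref{eq:GreenOp9} for $\tilde G_1^{+}$ and \eqref{eq:GreenOp2} for $G_2^{+}$. For $f\in\Cooc(M,E_1)$ one computes $G^{+}(Q_2Q_1)f=\tilde G_1^{+}\bigl(G_2^{+}Q_2\bigr)(Q_1f)=\tilde G_1^{+}(Q_1f)=f$, using \eqref{eq:GreenOp1} for $G_2^{+}$ applied to the compactly supported section $Q_1f$, and \eqref{eq:GreenOp8} for $\tilde G_1^{+}$ applied to $f\in\Cooc(M,E_1)\subset\Coospc(M,E_1)$. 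Finally $\supp(G^{+}f)\subset J^{+}(\supp(G_2^{+}f))\subset J^{+}(J^{+}(\supp f))=J^{+}(\supp f)$ by \eqref{eq:GreenOp10} and \eqref{eq:GreenOp3}. The retarded case is identical after replacing $J^{+}$ by $J^{-}$, $\Coospc$ by $\Coosfc$, $\tilde G_1^{+}$ by $\tilde G_1^{-}$, and the cited properties by the corresponding statements for $\tilde G_-$.

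I do not expect a real obstacle here, since no PDE analysis is needed — only the extension result \cref{cor:Gtilde}, whose proof (via \tref{thm:Gquer}) uses just the existence of the relevant one‑sided Green's operator, not full Green hyperbolicity, so that it legitimately applies in turn to $P_1$ and $P_2^{t}$ (and to their counterparts for the retarded half). The one thing to watch is precisely the domain bookkeeping sketched above: one cannot compose the original Green's operators directly and must route through the strictly past‑ and future‑compact extensions.
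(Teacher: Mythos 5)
Your proposal is correct and follows essentially the same approach as the paper: compose the Green's operators after extending one of them to a larger section space so that the composition typechecks. The paper routes through $\Coopc$ using the extensions $\overline G_+^i$ of both operators from \tref{thm:Gquer}, whereas you route through $\Coospc$ using only the extension $\tilde G_1^+$ from \cref{cor:Gtilde} together with the unextended $G_2^+$; this is an immaterial variation, and your more explicit verification of the three defining properties and your remark that the extension result needs only the one-sided Green's operator (not full Green hyperbolicity) are both accurate.
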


\begin{proof}
Denote the Green's operators of $P_i$ by $G_\pm^i$.
We obtain an advanced Green's operator of $P_2\circ P_1$ by composing the following maps:
\[
\Cooc(M,E_3) \hookrightarrow \Coopc(M,E_3) \xrightarrow{\overline G_+^2} \Coopc(M,E_2) \xrightarrow{\overline G_+^1} \Coopc(M,E_2) \hookrightarrow \Coo(M,E_1)
\]
and similarly for the retarded Green's operator.
\end{proof}

\begin{ex}
Let $P=\Box^2=\frac{\partial^4}{\partial t^4}-2\frac{\partial^4}{\partial x^2t^2}+\frac{\partial^4}{\partial x^4}$ be the square of the d'Alembert operator on $2$-dimensional Minkowski space $M=\{(t,x)\in\R^2\}$.
In \eref{ex:Mink2} be have seen that the integral kernel of the Green's operator $G_+^\Box$ is given by $(G_+f)(t,x)=-\frac12 \int_{J^-(t,x)}f(\tau,\xi)\,d\xi\,d\tau$.
Hence $P$ has the Green's operator
\begin{align*}
(G_+f)(t,x) 
&=
((G_+^\Box)^2 f)(t,x) 
=
\frac14 \int_{J^-(t,x)} \int_{J^-(\tau,\xi)} f(s,y) \,d\tau\,d\xi\,ds\,dy \\
&=
\frac14 \int_M \mathrm{Area}(J^-(t,x)\cap J^+(s,y)) f(s,y) \,ds\,dy .
\end{align*}
The integral kernel $\frac14 \mathrm{Area}(J^-(t,x)\cap J^+(s,y))$ of $G_+$ is a continuous function in this case.
\end{ex}

There is a very useful partial inverse to \cref{cor:Komposition}.

\begin{cor}\label{cor:square}
Let $P:\Coo(M,E)\to\Coo(M,E)$ be a differential operator such that $P^2$ is Green hyperbolic.
Then $P$ itself is Green hyperbolic.
\end{cor}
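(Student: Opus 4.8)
The plan is to reduce the statement to the following assertion and then apply it twice: \emph{if $Q\colon\Coo(M,E)\to\Coo(M,E)$ is a differential operator whose square $Q^2$ admits advanced and retarded Green's operators, then so does $Q$.} This suffices because the formal dual of a composition of differential operators reverses the order (a routine consequence of \eqref{eq:FormalDual}), so $(\Pt)^2$ is the formal dual of $P^2$; hence the assumption that $P^2$ is Green hyperbolic says precisely that both $P^2$ and $(\Pt)^2$ have advanced and retarded Green's operators, and applying the reduced assertion to $Q=P$ and to $Q=\Pt$ then shows that $P$ is Green hyperbolic.

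For the reduced assertion, let $H_\pm\colon\Cooc(M,E)\to\Coo(M,E)$ be the advanced and retarded Green's operators of $Q^2$, and I would set $G_\pm:=Q\circ H_\pm$. The key preliminary step is the commutation relation $QH_\pm=H_\pm Q$ on test sections. To see it, fix $f\in\Cooc(M,E)$; since differential operators do not enlarge supports, $Qf$ is again compactly supported, so $u:=QH_+f-H_+Qf$ is a well-defined smooth section with $Q^2u=Q(Q^2H_+f)-Q^2H_+(Qf)=Qf-Qf=0$ (applying the identity $Q^2H_+=\Id$ on $\Cooc(M,E)$ to $f$ and to $Qf$), and by \eqref{eq:GreenOp3} its support lies in $J^+(\supp f)\cup J^+(\supp Qf)=J^+(\supp f)$, which is strictly past compact, in particular past compact. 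Applying \cref{cor:sol} to $Q^2$ — legitimate since $Q^2$ has an advanced Green's operator — forces $u=0$; the retarded identity $QH_-=H_-Q$ is proved identically using $J^-$ and \eqref{eq:GreenOp4}.

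Once commutation is available, verifying that $G_+=QH_+=H_+Q$ is an advanced Green's operator of $Q$ is routine bookkeeping: for $f\in\Cooc(M,E)$ one has $G_+Qf=H_+Q^2f=f$ and $QG_+f=Q^2H_+f=f$, which are \eqref{eq:GreenOp1} and \eqref{eq:GreenOp2}, while $\supp(G_+f)\subset\supp(H_+f)\subset J^+(\supp f)$ is \eqref{eq:GreenOp3}; mutatis mutandis, $G_-=QH_-$ is a retarded Green's operator of $Q$. The one genuine obstacle is the commutation step: without it the composite $Q\circ H_+$ satisfies only one of the two identities $G_+Qf=f$, $QG_+f=f$, and it is exactly the uniqueness statement \cref{cor:sol} for the operator $Q^2$ that bridges the gap. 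Everything else is formal.
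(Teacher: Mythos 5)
Your proof is correct, and it takes a genuinely different route from the paper's. The paper invokes \tref{thm:Gquer} for $P^2$ to conclude that $P^2$ maps $\Coopc(M,E)$ bijectively onto itself, infers that $P$ does likewise (injectivity because $Pf=0\Rightarrow P^2f=0\Rightarrow f=0$; surjectivity because $g=P\bigl(P(P^2)^{-1}g\bigr)$), and then simply declares $G_+$ to be the restriction of $(P|_{\Coopc})^{-1}$ to $\Cooc(M,E)$; the support property \eqref{eq:GreenOp3} is obtained by running the same bijectivity argument on the smaller spaces $\Coo_{J^+(K)}(M,E)$. You instead write down the explicit formula $G_\pm=P\circ H_\pm$ and establish the commutation relation $PH_\pm=H_\pm P$ on test sections, which is what makes the identity $G_\pm P=\mathrm{id}$ available; the support property then falls out trivially from $\supp(PH_+f)\subset\supp(H_+f)$. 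Both routes ultimately rest on \tref{thm:Gquer} — the paper directly, you through its consequence \cref{cor:sol} — so neither is more elementary in terms of prerequisites, but yours is more concrete: it exhibits the Green's operator of $P$ as a closed-form expression in the Green's operator of $P^2$, and the commutation $PH_\pm=H_\pm P$ is a small independently useful fact. One pedantic remark: \cref{cor:sol} is stated under the hypothesis of Green hyperbolicity, whereas your reduced assertion assumes only that $Q^2$ has Green's operators; this is harmless in the end because in both of your applications ($Q=P$ and $Q=\Pt$) the operator $Q^2$ is in fact Green hyperbolic (one being $P^2$ and the other $(P^2)^{\mathrm t}$, whose formal dual is again $P^2$), so \cref{cor:sol} applies exactly as stated.
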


\begin{proof}
\tref{thm:Gquer} applied to $P^2$ tells us that $P^2$ maps $\Coopc(M,E)$ bijectively onto itself.
Hence $P$ itself also maps $\Coopc(M,E)$ bijectively onto itself.
Let $G_+$ denote the composition $\Cooc(M,E) \hookrightarrow \Coopc(M,E) \xrightarrow{P^{-1}} \Coopc(M,E) \hookrightarrow \Coo(M,E)$.
Then $G_+$ obviously satisfies \eqref{eq:GreenOp1} and \eqref{eq:GreenOp2} in \dref{def:GreenOp}.

As to \eqref{eq:GreenOp3}, let $f\in\Cooc(M,E)$.
Put $A:=J^+(\supp f)\in pc$.
Again by \tref{thm:Gquer}, $P^2$ maps $\CooA(M,E)$ bijectively onto itself.
Hence so does $P$ which implies that $G_+$ maps $\CooA(M,E)$ bijectively onto itself.
In particular, $\supp(G_+f)\subset A = J^+(\supp f)$.

The arguments for $G_-$ and for $\Pt$ are analogous.
\end{proof}

\begin{ex}\label{ex:Dirac1}
A differential operator $P$ of first order is said to be of \emph{Dirac type} if $P^2$ is a wave operator.
Since wave operators are Green hyperbolic, \cref{cor:square} tells us that Dirac-type operators are Green hyperbolic too.
Examples are the classical Dirac operator acting on sections of the spinor bundle $E=SM$ (see \cite{B81} for details) or, more generally, on sections of a twisted spinor bundle $E=SM\otimes F$ where $F$ is any ``coefficient bundle'' equipped with a connection.

Particular examples are the \emph{Euler operator} $P=i(d-\delta)$ on $E=\bigoplus_k \Lambda^kT^*M$ and, in dimension $\dim(M)=4$, the \emph{Buchdahl operators} on $SM\otimes S_+^{\odot k}M$.
See \cite[Sec.~2.5]{BG12b} for details.
\end{ex}

If the vector bundles $E_1,E_2\to M$ carry possibly indefinite but nondegenerate fiber metrics $\<\cdot,\cdot\>$, then the \emph{formally adjoint} operator $P^*$ is characterized by 
\begin{equation}
\int_M \<g,Pf\> \dV = \int_M \<P^*g,f\> \dV
\label{eq:FormalAdjoint}
\end{equation}
for all $f\in\Coo(M,E_1)$ and $g\in\Coo(M,E_2)$ with $\supp f\cap \supp g$ compact.
This definition is similar to that of the formally dual operator in \eqref{eq:FormalDual}.
In \eqref{eq:FormalAdjoint} the brackets $\<\cdot,\cdot\>$ denote fiber metrics while in \eqref{eq:FormalDual} they denote the canonical pairing.

\subsection{Direct sum of Green-hyperbolic operators}
The direct sum of two Green-hyperbolic operators is again Green hyperbolic.

\begin{lem}\label{lem:sum}
Let $P:\Coo(M,E_1) \to \Coo(M,E_2)$ and $Q:\Coo(M,E_1') \to \Coo(M,E_2')$ be Green hyperbolic.
Then the operator
\[
\begin{pmatrix}
P & 0 \\
0 & Q
\end{pmatrix}
: \Coo(M,E_1\oplus E_1') \to  \Coo(M,E_2\oplus E_2')
\]
is also Green hyperbolic.
\end{lem}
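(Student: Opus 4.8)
The plan is to show that the formally dual operator of the direct sum is again a direct sum, namely $\ltrans{(P\oplus Q)} = \ltrans{P}\oplus\ltrans{Q}$ acting on $\Coo(M,(E_2\oplus E_2')^*) = \Coo(M,E_2^*\oplus E_2'^*)$; this follows immediately from the defining relation \eqref{eq:FormalDual} together with the fact that the fiber pairing on $(E_2\oplus E_2')^*\otimes(E_2\oplus E_2')$ splits as a sum of the pairings on $E_i^*\otimes E_i$. Hence it suffices to produce advanced and retarded Green's operators for the direct sum itself, since the argument for $\ltrans{P}\oplus\ltrans{Q}$ is then word for word the same.

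First I would let $G_\pm^P$ and $G_\pm^Q$ denote the advanced/retarded Green's operators of $P$ and $Q$, which exist by hypothesis. Under the identification $\Cooc(M,E_2\oplus E_2') \cong \Cooc(M,E_2)\oplus\Cooc(M,E_2')$ and likewise for $E_1\oplus E_1'$, I claim that
\[
G_\pm := \begin{pmatrix} G_\pm^P & 0 \\ 0 & G_\pm^Q \end{pmatrix} : \Cooc(M,E_2\oplus E_2') \to \Coo(M,E_1\oplus E_1')
\]
is an advanced (resp.\ retarded) Green's operator for $P\oplus Q$. Conditions \eqref{eq:GreenOp1} and \eqref{eq:GreenOp2} in \dref{def:GreenOp} are checked componentwise: for $f = (f_1,f_2)$ we have $(P\oplus Q)(G_+ f) = (PG_+^P f_1, QG_+^Q f_2) = (f_1,f_2) = f$, and similarly $G_+(P\oplus Q)f = f$. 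For the support condition \eqref{eq:GreenOp3}, note that the support of a section $(s_1,s_2)$ of $E_1\oplus E_1'$ is $\supp(s_1)\cup\supp(s_2)$, and likewise $\supp(f) = \supp(f_1)\cup\supp(f_2)$; therefore
\[
\supp(G_+ f) = \supp(G_+^P f_1)\cup\supp(G_+^Q f_2) \subset J^+(\supp f_1)\cup J^+(\supp f_2) = J^+(\supp f),
\]
using that $J^+$ commutes with unions. The retarded case is identical with $J^+$ replaced by $J^-$.

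There is really no hard obstacle here; the only point requiring a moment's care is the bookkeeping of the canonical identifications $E_2\oplus E_2' \to M$ with its dual and the compatibility of the pairing, which is what makes $\ltrans{(P\oplus Q)}$ block-diagonal and lets the Green's operators for the dual be assembled in exactly the same block-diagonal way. Once that is in place, applying the construction above to $\ltrans{P}\oplus\ltrans{Q}$ completes the verification that $P\oplus Q$ is Green hyperbolic.
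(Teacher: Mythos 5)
Your proof is correct and follows the same approach as the paper, which simply exhibits the block-diagonal operator $\begin{pmatrix} G_\pm & 0 \\ 0 & G_\pm' \end{pmatrix}$ as the Green's operator for $\begin{pmatrix} P & 0 \\ 0 & Q \end{pmatrix}$. You supply the routine verification of the axioms and the observation about the dual operator that the paper leaves implicit.
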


\begin{proof}
If $G_\pm$ and $G_\pm'$ are the Green's operators for $P$ and $Q$ respectively, then 
$
\begin{pmatrix}
G_\pm & 0 \\
0 & G_\pm'
\end{pmatrix}
$
yields Green's operators for 
$\begin{pmatrix}
P & 0 \\
0 & Q
\end{pmatrix}
$.
\end{proof}

\begin{rem}
The simple construction in \lref{lem:sum} shows that Green hyperbolicity cannot be read off the principal symbol of the operator.
For instance, $P$ could be a wave operator and $Q$ a Dirac-type operator.
Then the total Green-hyperbolic operator in \lref{lem:sum} is of second order and the principal symbol does not see $Q$ and therefore cannot recognize $Q$ as a Green hyperbolic operator.

For similar reasons, it is not clear how to characterize Green hyperbolicity in terms of well-posedness of a Cauchy problem in general.
\end{rem}

Now we get the following variation of \cref{cor:square} for operators acting on sections of two different bundles:

\begin{cor}\label{cor:P*P}
Let $P:\Coo(M,E_1)\to \Coo(M,E_2)$ be a differential operator and let $E_1$ and $E_2$ carry nondegenerate fiber metrics.
Let $P^*:\Coo(M,E_2)\to \Coo(M,E_1)$ be the formally adjoint operator.

If $P^*P$ and $PP^*$ are Green hyperbolic, then $P$ and $P^*$ are Green hyperbolic too.
\end{cor}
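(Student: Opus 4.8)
**Proof proposal for Corollary~\ref{cor:P*P}.**

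The plan is to reduce this statement to \cref{cor:square} by passing to the direct sum bundle and a suitable off-diagonal operator, mimicking the standard trick relating a first-order operator to its square. Concretely, consider the bundle $E_1 \oplus E_2$ and the operator
\[
D := \begin{pmatrix} 0 & P^* \\ P & 0 \end{pmatrix} : \Coo(M,E_1\oplus E_2) \to \Coo(M,E_1\oplus E_2).
\]
Squaring gives
\[
D^2 = \begin{pmatrix} P^*P & 0 \\ 0 & PP^* \end{pmatrix},
\]
which is Green hyperbolic by \lref{lem:sum}, since both $P^*P$ and $PP^*$ are Green hyperbolic by hypothesis. Then \cref{cor:square} applied to $D$ on the bundle $E_1\oplus E_2$ shows that $D$ itself is Green hyperbolic.

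It then remains to extract Green's operators for $P$ (and for $P^*$) from those of $D$. Writing the advanced Green's operator of $D$ in block form
\[
\mathcal{G}_+ = \begin{pmatrix} A_+ & B_+ \\ C_+ & D_+ \end{pmatrix},
\]
the relations $D\mathcal{G}_+ = \mathrm{id}$ and $\mathcal{G}_+D = \mathrm{id}$ on $\Cooc$ unwind into $P^*C_+ = \mathrm{id}$, $P^*D_+ = 0$, $PA_+ = 0$, $PB_+ = \mathrm{id}$, $C_+P^* = 0$, $A_+P = \mathrm{id}$, $D_+P = \mathrm{id}$, $B_+P^* = 0$ on the appropriate test section spaces. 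In particular $B_+$ satisfies $PB_+f = f$ and $B_+P^*g = 0$; but one would prefer the identities $B_+ P g = g$. The clean way around this is to work instead with the second-order operator directly: $P^*P$ is Green hyperbolic, so by \tref{thm:Gquer} it is bijective on $\Coopc(M,E_1)$ with inverse $\overline{(G^{P^*P})}_+$, and one checks that $G_+ := \overline{(G^{P^*P})}_+ \circ P^*$, precomposed with the inclusion $\Cooc(M,E_2)\hookrightarrow\Coopc(M,E_2)$ and postcomposed with $\Coopc(M,E_1)\hookrightarrow\Coo(M,E_1)$, satisfies $G_+Pf = \overline{(G^{P^*P})}_+(P^*Pf) = f$ for $f\in\Cooc(M,E_1)$, and $PG_+f = PG_+f$—here one needs $PG_+f=f$, which requires knowing $PP^*$ is invertible: indeed $PG_+f = P\overline{(G^{P^*P})}_+P^*f$, and using $P\circ P^*P = PP^*\circ P$ together with bijectivity of both second-order operators on $\Coopc$ one deduces $P\overline{(G^{P^*P})}_+ = \overline{(G^{PP^*})}_+ P$, whence $PG_+f = \overline{(G^{PP^*})}_+(PP^*f) = f$. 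The support condition $\supp(G_+f)\subset J^+(\supp f)$ follows exactly as in the proof of \cref{cor:square}: setting $A:=J^+(\supp f)\in pc$, \tref{thm:Gquer} shows $P^*P$ and $PP^*$ map $\Coo_A$ bijectively onto themselves, hence so do $P$ and $P^*$, hence so does $G_+$. The retarded operator $G_-$ is obtained with $J^-$ in place of $J^+$, and the Green's operators of $P^*$ are obtained symmetrically (with the roles of $P^*P$ and $PP^*$ exchanged), using that $(P^*)^* = P$.

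The main obstacle is exactly this bookkeeping step: verifying the intertwining relation $P\,\overline{(G^{P^*P})}_+ = \overline{(G^{PP^*})}_+\,P$ on $\Coopc$, which is what forces us to assume that \emph{both} $P^*P$ \emph{and} $PP^*$ are Green hyperbolic rather than just one of them. Everything else—the direct-sum reduction, the appeal to \cref{cor:square} and \tref{thm:Gquer}, and the support computation—is routine. I would present the argument without the auxiliary operator $D$ in the final writeup, going directly through $G_+ := \overline{(G^{P^*P})}_+\circ P^*$, since that makes the intertwining identity and the support estimate most transparent; the operator $D$ is only a conceptual guide explaining why the hypothesis is the natural one.
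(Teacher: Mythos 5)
Your final construction, $G_+:=\overline{(G^{P^*P})}_+\circ P^*$ with the intertwining identity $P\,\overline{(G^{P^*P})}_+ = \overline{(G^{PP^*})}_+\,P$ supplying $PG_+=\mathrm{id}$, is correct and is a genuinely different route from the paper. The paper simply reads the Green's operators of $P$ and $P^*$ off the off-diagonal blocks of $\G_\pm$ --- the very path you sketched and then abandoned. The obstacle you thought you found in that path is a bookkeeping slip, not a real problem. With
\[
\G_+ = \begin{pmatrix} A_+ & B_+ \\ C_+ & D_+ \end{pmatrix}, \qquad
\P = \begin{pmatrix} 0 & P^* \\ P & 0 \end{pmatrix},
\]
the relation $\P\G_+ = \mathrm{id}$ indeed yields $P^*C_+ = \mathrm{id}$, $P^*D_+ = 0$, $PA_+ = 0$, $PB_+ = \mathrm{id}$ as you wrote, but $\G_+\P = \mathrm{id}$ yields $B_+P = \mathrm{id}$, $A_+P^* = 0$, $C_+P^* = \mathrm{id}$, $D_+P = 0$, not the relations you listed. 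To see this, note $\P(f_1,f_2)=(P^*f_2,Pf_1)$, so $\G_+\P(f_1,f_2)=(A_+P^*f_2+B_+Pf_1,\,C_+P^*f_2+D_+Pf_1)$; comparing coefficients gives the above. In fact your listed $A_+P=\mathrm{id}$ and $D_+P=\mathrm{id}$ do not even typecheck, since $A_+$ and $D_+$ act on sections of $E_1$ and $E_2$ respectively while $P$ produces sections of $E_2$. With the corrected relations, $B_+:\Cooc(M,E_2)\to\Coo(M,E_1)$ already satisfies $PB_+=\mathrm{id}$ and $B_+P=\mathrm{id}$, and $\supp(B_+f)\subset\supp(\G_+(0,f))\subset J^+(\supp f)$, so $B_+$ is an advanced Green's operator for $P$; symmetrically $C_+$ serves $P^*$, and the same block extraction applied to $\G_-$ and to the Green's operators of $\Pt$ finishes the argument. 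So the direct-sum route does close cleanly and is shorter than the $\overline{(G^{P^*P})}_+\circ P^*$ detour, although the latter is an independently valid alternative that makes the role of both hypotheses explicit.
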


\begin{proof}
Consider the operator $\P:\Coo(M,E_1\oplus E_2) \to \Coo(M,E_1\oplus E_2)$ defined by 
\[
\P = 
\begin{pmatrix}
0 & P^* \\
P & 0
\end{pmatrix} .
\]
Since $P^*P$ are $PP^*$ are Green hyperbolic so is
\[
\P^2 = 
\begin{pmatrix}
P^*P & 0 \\
0 & PP^*
\end{pmatrix} .
\]
By \cref{cor:square}, $\P$ is Green hyperbolic.
Let 
\[
\G_\pm =
\begin{pmatrix}
G_\pm^{11} & G_\pm^{21} \\
G_\pm^{12} & G_\pm^{22}
\end{pmatrix}
\]
be the Green's operators of $\P$.
Then one easily sees that $G_\pm^{21}$ are Green's operators for $P$ and $G_\pm^{12}$ for $P^*$. 
\end{proof}

\begin{ex}\label{ex:Dirac2}
If $M$ is even dimensional, then the spinor bundle splits into ``chirality subbundles'' $SM=S_+M\oplus S_-M$.
The twisted Dirac operators in Example~\ref{ex:Dirac1} interchange these bundles and we get operators $P:\Coo(M,S_+M\otimes F) \to \Coo(M,S_-M\otimes F)$.
By \cref{cor:P*P}, they are Green hyperbolic too.
\end{ex}

\subsection{Green's operators of the dual operator}
Next we show that the Green's operators of the dual operator are the duals of the Green's operators.
The roles of ``advanced'' and ``retarded'' get interchanged.

\begin{lem}\label{lem:Greendual}
Let $P:\Coo(M,E_1)\to\Coo(M,E_2)$ be Green hyperbolic.
Denote the Green's operators of $P$ by $G_\pm$ and the ones of\, $\Pt$ by $G^*_\pm$.
Then
\[
\int_M\<\tilde G_-^*\phi,f\>\dV=\int_M\<\phi,\overline G_+f\>\dV 
\]
holds for all $\phi\in \Coosfc(M,E_1^*)$ and $f\in \Coopc(M,E_2)$. 
Similarly,
\[
\int_M\<\tilde G_+^*\phi,f\>\dV=\int_M\<\phi,\overline G_-f\>\dV 
\]
holds for all $\phi\in \Coospc(M,E_1^*)$ and $f\in \Coofc(M,E_2)$. 
\end{lem}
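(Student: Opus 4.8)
The plan is to reduce the identity to the already-known duality between the Green's operators $G_\pm$ and $G_\pm^*$ on compactly supported sections, and then to remove the cutoffs using the explicit formula \eqref{eq:defGquer} for $\overline G_+$ and $\tilde G_-^*$ from the proof of \tref{thm:Gquer}. First I would recall that for $f\in\Cooc(M,E_2)$ and $\phi\in\Cooc(M,E_1^*)$ one has the standard identity $\int_M\<G_-^*\phi,f\>\dV=\int_M\<\phi,G_+f\>\dV$; this follows from $\supp(G_+f)\cap\supp(G_-^*\phi)\subset J^+(\supp f)\cap J^-(\supp\phi)$ being compact together with the defining properties $PG_+f=f$, $\Pt G_-^*\phi=\phi$ and the definition \eqref{eq:FormalDual} of $\Pt$. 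So this case is essentially routine and I would dispatch it quickly.

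For the general case let $\phi\in\Coosfc(M,E_1^*)$ and $f\in\Coopc(M,E_2)$. By \dref{def:GreenOp}\eqref{eq:GreenOp3}' (extended via \cref{cor:Gtilde}) and \tref{thm:Gquer}\eqref{eq:GreenOp7} we have $\supp(\tilde G_-^*\phi)\subset J^-(\supp\phi)$ and $\supp(\overline G_+f)\subset J^+(\supp f)$; since $\supp\phi$ is strictly future compact and $\supp f$ is past compact, the intersection $J^-(\supp\phi)\cap\supp(\overline G_+f)$ and the intersection $\supp(\tilde G_-^*\phi)\cap J^+(\supp f)$ are both compact by \lref{lem:duality}\eqref{eq:futurecompact}. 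Hence both integrands in the claimed identity have compact support, so the integrals make sense. Now fix a point set: let $K:=J^-(\supp\phi)\cap J^+(\supp f)$, which is compact (again by \lref{lem:duality}), and choose a cutoff $\chi\in\Cooc(M,\R)$ with $\chi\equiv1$ on a neighborhood of $K$. Using formula \eqref{eq:defGquer}, on a neighborhood of $\supp\phi\cap K$ we have $\overline G_+f=G_+(\chi f)$ up to a section supported away from $K$, and similarly $\tilde G_-^*\phi=G_-^*(\chi\phi)$ up to something supported away from $K$; the point is that both $\<\tilde G_-^*\phi,f\>$ and $\<\phi,\overline G_+f\>$ are, after localizing, unchanged if we replace $f$ by $\chi f$ and $\phi$ by $\chi\phi$, because the discrepancies are supported outside $\supp f\cap J^-(\supp\phi)$ respectively $\supp\phi\cap J^+(\supp f)$, where the relevant pairing vanishes by the support estimates. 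This is exactly the kind of cancellation argument carried out in part (b) of the proof of \tref{thm:Gquer}.

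Having localized, I would apply the compactly supported identity to $\chi f\in\Cooc(M,E_2)$ and $\chi\phi\in\Cooc(M,E_1^*)$ to get
\[
\int_M\<\tilde G_-^*\phi,f\>\dV=\int_M\<G_-^*(\chi\phi),\chi f\>\dV=\int_M\<\chi\phi,G_+(\chi f)\>\dV=\int_M\<\phi,\overline G_+f\>\dV,
\]
where the outer equalities are the localization step and the middle one is the known compactly supported identity. The second formula, with $\tilde G_+^*$, $\overline G_-$ and the roles of $spc$, $fc$ interchanged, follows by the same argument reading $J^+\leftrightarrow J^-$ and using \lref{lem:duality}\eqref{eq:pastcompact} in place of \eqref{eq:futurecompact}. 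The main obstacle I anticipate is being careful with the localization: one must check that replacing $f$ by $\chi f$ (and $\phi$ by $\chi\phi$) genuinely does not change either integral, which requires tracking that the differences are supported in regions disjoint from the support of the other factor in the pairing. Everything else is bookkeeping with the compactness lemmas from Section~1 and the extension theorem \tref{thm:Gquer}.
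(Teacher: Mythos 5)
Your proof is correct, and its first step — the identity $\int_M\<G_-^*\phi,f\>\dV=\int_M\<\phi,G_+f\>\dV$ established by writing $f=PG_+f$, integrating by parts via \eqref{eq:FormalDual}, and then using $\Pt G_-^*\phi=\phi$ — is exactly the mechanism the paper uses. The difference is that the paper applies this computation \emph{directly} to the general case: replace $G_\pm$ by the extensions $\overline G_+,\tilde G_-^*$, note $\supp(\tilde G_-^*\phi)$ is strictly future compact and $\supp(\overline G_+f)$ is past compact so their intersection is compact by \lref{lem:duality}\,\eqref{eq:pastcompact}, and this is precisely the hypothesis under which \eqref{eq:FormalDual} applies. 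Your detour — proving the compactly supported case, then reducing the general case to it by choosing a cutoff $\chi$ equal to $1$ near $J^-(\supp\phi)\cap J^+(\supp f)$ and invoking the explicit formula \eqref{eq:defGquer} — is logically sound but introduces an entire localization argument that the problem does not require. What the localization buys you is that you only ever need the Green's operator duality on $\Cooc$; what it costs is a nontrivial bookkeeping step (verifying that replacing $f$ by $\chi f$ and $\phi$ by $\chi\phi$ changes neither integral) which essentially re-proves, in disguised form, the same compactness-of-overlap fact that makes the direct integration by parts legitimate. It would strengthen the write-up to observe that the integration-by-parts step is already licensed in the general setting, which collapses the proof to three lines.
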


\begin{proof}
By \eqref{eq:GreenOp6} in \tref{thm:Gquer} we have
\begin{align*}
\int_M\< \tilde G_-^*\phi,f\>\dV
&=\int_M\<\tilde G_-^*\phi,P(\overline G_+f)\>\dV\\
&=\int_M\<\Pt(\tilde G_-^*\phi),\overline G_+ f\>\dV\\
&=\int_M\<\phi,\overline G_+ f\>\dV.
\end{align*}
The integration by parts is justified because the intersection $\supp(\tilde G_-^*\phi)\cap\supp(\overline G_+f)$ of a strictly future-compact set and a past-compact set is compact.
The second assertion is analogous.
\end{proof}

\subsection{The causal propagator}
The following theorem contains important information about the solution theory of Green-hyperbolic operators.
It was proved in \cite[Thm.~3.5]{BG12b}, compare also \tref{thm:exactDistri}.

\begin{thm}\label{thm:exact}
Let $G$ be the causal propagator of the Green-hyperbolic operator $P:\Coo(M,E_1)\to\Coo(M,E_2)$.
Then
\begin{equation}\label{eq:exactseq}
 \{0\}\to\Cooc(M,E_1)\xrightarrow{P}\Cooc(M,E_2)\xrightarrow{G}\Coosc(M,E_1)\xrightarrow{P}\Coosc(M,E_2)
\end{equation}
is an exact sequence.
\hfill\qed
\end{thm}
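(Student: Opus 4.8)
The plan is to verify exactness of \eqref{eq:exactseq} at each of its three relevant nodes, using only the defining properties \eqref{eq:GreenOp1}--\eqref{eq:GreenOp3}' of $G_\pm$ together with the extensions $\overline G_\pm$ and $\tilde G_\pm$ from \tref{thm:Gquer} and \cref{cor:Gtilde}. The recurring tool throughout is that the causal future of a strictly past-compact set is again strictly past compact and likewise for pasts, so that expressions like $J^+(\supp f)\cap J^-(\supp g)$ are compact whenever the relevant supports have complementary compactness, which is exactly what is needed to integrate by parts or to feed sections back into the original (compactly supported) Green's operators.

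\emph{Exactness at} $\Cooc(M,E_1)$: I must show $P:\Cooc(M,E_1)\to\Cooc(M,E_2)$ is injective. If $Pf=0$ for $f\in\Cooc(M,E_1)$, then by \eqref{eq:GreenOp1} we get $f=G_+Pf=G_+0=0$. (This is also immediate from \cref{cor:sol}, since a compactly supported solution is in particular past compact.)

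\emph{Exactness at} $\Cooc(M,E_2)$: The inclusion $\ker\bigl(G:\Cooc(M,E_2)\to\Coosc(M,E_1)\bigr)\supset\operatorname{im}(P)$ follows from $GPf=(G_+-G_-)Pf=f-f=0$ by \eqref{eq:GreenOp1}. For the reverse inclusion, suppose $f\in\Cooc(M,E_2)$ with $Gf=0$, i.e.\ $G_+f=G_-f$. Call this section $u:=G_+f=G_-f\in\Coo(M,E_1)$. Then $\supp u\subset J^+(\supp f)\cap J^-(\supp f)$, which is compact, so $u\in\Cooc(M,E_1)$; and $Pu=PG_+f=f$ by \eqref{eq:GreenOp2}. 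Hence $f=Pu\in\operatorname{im}(P)$.

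\emph{Exactness at} $\Coosc(M,E_1)$: This is the main point, and the place where \tref{thm:Gquer}/\cref{cor:Gtilde} really enter. One inclusion is trivial: for $f\in\Cooc(M,E_2)$ we have $P(Gf)=PG_+f-PG_-f=f-f=0$ by \eqref{eq:GreenOp2}, so $\operatorname{im}(G)\subset\ker\bigl(P:\Coosc(M,E_1)\to\Coosc(M,E_2)\bigr)$. Conversely, let $u\in\Coosc(M,E_1)$ with $Pu=0$. Since $sc\subset spc$ and $sc\subset sfc$, the section $u$ lies in both $\Coospc(M,E_1)$ and $\Coosfc(M,E_1)$. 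Apply \eqref{eq:GreenOp8} of \cref{cor:Gtilde} in both flavours: $\tilde G_+Pu=u$ and $\tilde G_-Pu=u$. Now I want to produce a \emph{compactly supported} $f$ with $Gf=u$; the natural candidate is to decompose $u$ itself. Choose, using \lref{lem:SpacelikeCompact} and a Cauchy temporal function $t$, a cutoff-type splitting: pick a smooth function $\rho\in\Coo(M,[0,1])$ equal to $1$ on a strictly-past-compact neighbourhood and to $0$ on a strictly-future-compact set, e.g.\ $\rho=\chi\circ t$ with $\chi$ a smooth step. Set $f:=P(\rho u)$. Then $\supp f\subset\supp(d\rho)\cap\supp u$ is contained in $J(K)\cap t^{-1}([a,b])$ for a compact $K$, hence compact, so $f\in\Cooc(M,E_2)$. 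Moreover $\rho u\in\Coospc(M,E_1)$ and $(1-\rho)u\in\Coosfc(M,E_1)$, with $P(\rho u)=f=P((\rho-1)u)+Pu\cdot$-type bookkeeping giving $P((1-\rho)u)=-f$; applying $\tilde G_+$ to the first and $\tilde G_-$ to the second and using \eqref{eq:GreenOp8} yields $\tilde G_+f=\rho u$ and $\tilde G_-f=-(1-\rho)u$, whence $Gf=\tilde G_+f-\tilde G_-f=\rho u+(1-\rho)u=u$. This shows $u\in\operatorname{im}(G)$ and completes exactness.

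I expect the splitting step — producing the compactly supported preimage $f$ from a spacially compact solution $u$ — to be the only real obstacle: one must choose $\rho$ so that $\rho u$ has strictly past-compact support and $(1-\rho)u$ has strictly future-compact support simultaneously, which is exactly where \lref{lem:SpacelikeCompact} (intersection of a spacially compact set with a Cauchy hypersurface is compact) and the Cauchy temporal function of \cite[Thm.~1.1]{BS05} are used; the verification that $\supp(P(\rho u))$ is then compact and that $\tilde G_\pm$ invert $P$ on the respective support-system spaces is routine given \cref{cor:Gtilde}. Everything else is a direct manipulation of the identities \eqref{eq:GreenOp1}--\eqref{eq:GreenOp3}'.
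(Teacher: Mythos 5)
Your argument is correct and runs in parallel with the paper's proof of the distributional analog, Theorem~\ref{thm:exactDistri}: the first two nodes follow at once from the Green's operator identities, and surjectivity of $G$ onto $\ker P$ at the third node is obtained by splitting $u$ with a temporal cutoff $\rho$ and applying the extended Green's operators $\tilde G_\pm$ to the two halves. One false assertion should be deleted: the claimed inclusions $sc\subset spc$ and $sc\subset sfc$ are the \emph{reverse} of what holds (Diagram~\ref{diag:closed1} gives $spc\subset sc$ and $sfc\subset sc$; for instance $J^-(K)$ is spacially compact but not strictly past compact), so a spacially compact solution $u$ is in general \emph{not} in $\Coospc(M,E_1)$. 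Had it been, the derived identity $\tilde G_+Pu=u$ together with $Pu=0$ would force $u=0$ and trivialize the theorem. Fortunately you make no further use of this remark, and the cutoff construction that follows is self-contained and correct: with $\rho=\chi\circ t$ a smooth temporal step, $\rho$ is $\equiv 1$ on some $J^+(\Sigma_b)$ (which is past compact, not strictly so) and $\equiv 0$ on some $J^-(\Sigma_a)$ (future compact), while the \emph{products} $\rho u$ and $(1-\rho)u$ acquire strictly past-compact resp.\ strictly future-compact support precisely because $\supp u$ is spacially compact --- which is the point you then exploit via Corollary~\ref{cor:Gtilde} to conclude $Gf=u$.
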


%%%%%%%%%%%%%%%%%%%%%%%%%%%%%%%%%%%%%%%%%%%%%%%%%%%%%%%%%%%%%%%%%%%%%%%%%%%%%%%%%%%%%%%%%%%%%%%%%

\section{Green-hyperbolic Operators Acting on Distributional Sections}
\label{sec:Distri}

\subsection{Green's operators acting on distributional sections}
We extend any differential operator $P:\Coo(M,E_1)\to\Coo(M,E_2)$ as usual to distributional sections by taking the dual map of $\Pt:\Cooc(M,E_2^*)\to\Cooc(M,E_1^*)$
thus giving rise to a continuous linear map $P:\D(M,E_1)\to\D(M,E_2)$.

\begin{lem}\label{lem:Ghut}
The Green's operators $\overline G_+:\Coopc(M,E_2) \to \Coopc(M,E_1)$ and $\overline G_-:\Coofc(M,E_2) \to \Coofc(M,E_1)$ extend uniquely to continuous operators
\[
\widehat G_+:\Dpc(M,E_2) \to \Dpc(M,E_1)
\mbox{ and }
\widehat G_-:\Dfc(M,E_2) \to \Dfc(M,E_1),
\]
respectively.
Moreover
\begin{enumerate}[(i)]
\item\label{eq:GreenOp11}
$\widehat G_+Pf=f$ holds for all $f\in\Dpc(M,E_1)$;
\item\label{eq:GreenOp12}
$P\widehat G_+f=f$ holds for all $f\in\Dpc(M,E_2)$;
\item\label{eq:GreenOp13}
$\supp(\widehat G_+f)\subset J^+(\supp f)$ holds for all $f\in\Dpc(M,E_2)$;
\end{enumerate}
and similarly for $\widehat G_-$.
\end{lem}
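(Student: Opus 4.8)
The plan is to construct $\widehat G_+$ as a transpose (dual) map, exploiting the duality between the support systems $pc$ and $sfc$ (Table~1), \lref{lem:DistriSupp}, and \lref{lem:Greendual}. Since $P$ is Green hyperbolic, so is $\Pt$; write $G_\pm^*$ for its Green's operators and $\tilde G_\pm^*$ for the extensions to strictly future-/past-compactly supported sections furnished by \cref{cor:Gtilde}, which are continuous by applying the continuity corollary of the previous subsection to $\Pt$. By \lref{lem:DistriSupp}, restriction to $\Cooc$ identifies $\Dpc(M,E_i)$ with the space of continuous linear functionals on $\Coosfc(M,E_i^*)$; this is a bijection because $\Cooc(M,E_i^*)$ is dense in $\Coosfc(M,E_i^*)$ by \lref{lem:Ccdicht1}, and it respects the weak$^*$-topologies by the description of convergence of distributions in Section~2. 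Under this identification I would \emph{define}
\[
\widehat G_+ : \Dpc(M,E_2)\to\Dpc(M,E_1), \qquad (\widehat G_+ f)[\phi] := f[\tilde G_-^*\phi] \quad (\phi\in\Coosfc(M,E_1^*)),
\]
i.e.\ $\widehat G_+$ is the transpose of the continuous map $\tilde G_-^*:\Coosfc(M,E_1^*)\to\Coosfc(M,E_2^*)$. Then $\widehat G_+ f$ is a continuous functional on $\Coosfc(M,E_1^*)$, hence lies in $\Dpc(M,E_1)$, and $\widehat G_+$ is weak$^*$-continuous since transposes of continuous maps always are. That $\widehat G_+$ restricts to $\overline G_+$ on $\Coopc(M,E_2)$ is exactly the content of \lref{lem:Greendual}. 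Uniqueness of the continuous extension follows because $\Coopc(M,E_2)$ is dense in $\Dpc(M,E_2)$ (already $\Cooc$ is, by \lref{lem:Ccdicht2}) and $\Dpc(M,E_1)$ is Hausdorff.

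It then remains to verify \eqref{eq:GreenOp11}--\eqref{eq:GreenOp13} by testing against sections and invoking the operator identities of \cref{cor:Gtilde} read for $\Pt$. For \eqref{eq:GreenOp12}, pair $P\widehat G_+ f$ with $\phi\in\Cooc(M,E_2^*)$: by the definition of $P$ on distributions, $(P\widehat G_+ f)[\phi]=(\widehat G_+ f)[\Pt\phi]=f[\tilde G_-^*\Pt\phi]=f[\phi]$, the last step because $G_-^*\circ\Pt=\mathrm{id}$ on $\Cooc(M,E_1^*)$ (property \eqref{eq:GreenOp1} for the retarded Green's operator of $\Pt$). For \eqref{eq:GreenOp11}, observe first that $Pf\in\Dpc(M,E_2)$ since differential operators do not enlarge supports; pairing $\widehat G_+ Pf$ with $\phi\in\Coosfc(M,E_1^*)$ and using the relation $(Pf)[\psi]=f[\Pt\psi]$ for $\psi\in\Coosfc(M,E_2^*)$ (valid on test sections by definition, hence everywhere by continuity of $\Pt$ on $\Coosfc$ and density, \lref{lem:Ccdicht1}) together with $\Pt\circ\tilde G_-^*=\mathrm{id}$ on $\Coosfc(M,E_1^*)$ (the \cref{cor:Gtilde}-extension of property \eqref{eq:GreenOp2} for $\Pt$) gives $(\widehat G_+ Pf)[\phi]=(Pf)[\tilde G_-^*\phi]=f[\Pt\tilde G_-^*\phi]=f[\phi]$.

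For \eqref{eq:GreenOp13} suppose $x\notin J^+(\supp f)$. Since $J^+(\supp f)$ is closed, choose an open neighborhood $U$ of $x$ with $U\cap J^+(\supp f)=\emptyset$, which is equivalent to $J^-(U)\cap\supp f=\emptyset$. For any $\phi\in\Cooc(M,E_1^*)$ with $\supp\phi\subset U$ we have $\supp(\tilde G_-^*\phi)\subset J^-(\supp\phi)\subset J^-(U)$, so $\supp(\tilde G_-^*\phi)$ is disjoint from $\supp f$, whence $(\widehat G_+ f)[\phi]=f[\tilde G_-^*\phi]=0$; here one uses that the extension of $f$ to $\Coosfc(M,E_2^*)$ vanishes on any section whose support misses $\supp f$, which is immediate from the cutoff construction in part~(a) of the proof of \lref{lem:DistriSupp}. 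Hence $x\notin\supp(\widehat G_+ f)$. The statements for $\widehat G_-$ follow by interchanging future and past, i.e.\ replacing $\tilde G_-^*$ by $\tilde G_+^*$, $sfc$ by $spc$, and $\Coosfc$ by $\Coospc$ throughout.

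The step I expect to require the most care is not any single estimate but the bookkeeping around the identification $\Dpc(M,E_i)\cong(\Coosfc(M,E_i^*))'$: one must confirm that the action of $P$ on $\Dpc$ is compatible with the passage from test sections to $\Coosfc$-sections (so that $(Pf)[\psi]=f[\Pt\psi]$ persists for $\psi\in\Coosfc$, not only for $\psi\in\Cooc$), and that the identities of \cref{cor:Gtilde}, applied to the \emph{formally dual} operator $\Pt$, are exactly the ones producing the cancellations above. These points are routine but must be aligned correctly; once they are, everything reduces to \lref{lem:Greendual}, \lref{lem:DistriSupp}, and density (\lref{lem:Ccdicht1}, \lref{lem:Ccdicht2}).
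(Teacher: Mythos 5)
Your proposal is correct and follows essentially the same route as the paper: define $\widehat G_+$ as the transpose of $\tilde G_-^*:\Coosfc(M,E_1^*)\to\Coosfc(M,E_2^*)$ under the identification $\Dpc(M,E_i)\cong(\Coosfc(M,E_i^*))'$ from \lref{lem:DistriSupp}, invoke \lref{lem:Greendual} to see it extends $\overline G_+$, obtain uniqueness from density (\lref{lem:Ccdicht2}), and deduce \eqref{eq:GreenOp11}--\eqref{eq:GreenOp13} by dualizing the identities of \cref{cor:Gtilde} for $\Pt$ and $G_-^*$. The only difference is that you spell out the dualization of \eqref{eq:GreenOp8}--\eqref{eq:GreenOp9} and the support argument in more detail than the paper, which simply says ``dualizing \dots we get \eqref{eq:GreenOp11} and \eqref{eq:GreenOp12}''; the substance is the same.
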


\begin{proof}
Recall from \lref{lem:DistriSupp} and Table~1 that $\Dpc(M,E_i)$ can be identified with the dual space of $\Coosfc(M,E_i^*)$.
Let $\widehat G_+$ be the dual map of $\tilde G_-^*:\Coosfc(M,E_1^*)\to\Coosfc(M,E_2^*)$ where $G_-^*$ is the retarded Green's operator of $\Pt$.
By \lref{lem:Greendual}, $\widehat G_+$ is an extension of $\overline G_+$.
The extension is unique because $\Cooc(M,E_2)$ is dense in $\Dpc(M,E_2)$ by \lref{lem:Ccdicht2}.

Dualizing \eqref{eq:GreenOp8} and \eqref{eq:GreenOp9} for $\Pt$ and $G_-^*$ in \cref{cor:Gtilde} we get \eqref{eq:GreenOp11} and \eqref{eq:GreenOp12} as asserted.
As to \eqref{eq:GreenOp13} let $f\in\Dpc(M,E_2)$ and let $\phi\in\Cooc(M,E_2^*)$ be a test section such that $J^+(\supp f)\cap\supp(\phi)=\emptyset$.
Then $\supp f\cap J^-(\supp(\phi))=\emptyset$ and therefore
\begin{align*}
(\widehat G_+f)[\phi] = f[\overline G_-^*\phi] =0.
\end{align*}
Thus $\supp(\widehat G_+f)\subset J^+(\supp f)$.
\end{proof}

Summarizing \tref{thm:Gquer}, \cref{cor:Gtilde} and \lref{lem:Ghut} we get the following diagram of continuous extensions of the Green's operator $G_+$ of $P$:
\begin{diagram}\label{diag:GreenExtensions}
\begin{equation*}
\xymatrix{
\Cooc(M,E_2) \ar@{^{(}->}[r] \ar@{=}[d] \ar@/^30pt/[rrr]_{G_+} & \Coospc(M,E_2) \ar[r]^{\tilde G_+}  \ar@{_{(}->}[d] & \Coospc(M,E_1) \ar@{_{(}->}[d]  \ar@{^{(}->}[r]& \Coo(M,E_1) \ar@{=}[d]\\
\Cooc(M,E_2)  \ar@{^{(}->}[r] \ar@{_{(}->}[d]& \Coopc(M,E_2) \ar[r]^{\overline G_+} \ar@{_{(}->}[d]& \Coopc(M,E_1) \ar@{^{(}->}[r] \ar@{_{(}->}[d] & \Coo(M,E_1) \ar@{_{(}->}[d] \\
\Dc(M,E_2)\ar@{^{(}->}[r]  & \Dpc(M,E_2) \ar[r]^{\widehat G_+} & \Dpc(M,E_1) \ar@{^{(}->}[r] &  \D(M,E_1)
}
\end{equation*}
\end{diagram}
\diagramnumber{Extensions of the advanced Green's operator}

By \eqref{eq:GreenOp13} in \lref{lem:Ghut}, $\widehat G_+$ also restricts to an operator $\Dspc(M,E_2)\to\Dspc(M,E_1)$.

\cref{cor:sol} holds also for distributional sections:

\begin{cor}\label{cor:solDistri}
There are no nontrivial distributional solutions $f\in\D(M,E_1)$ of the differential equation $Pf=0$ with past-compact or future-compact support.
For any $g\in\Dpc(M,E_2)$ or $g\in\Dfc(M,E_2)$ there exists a unique $f\in\D(M,E_1)$ solving $Pf=g$ and such that $\supp(f)\subset J^+(\supp(g))$ or $\supp(f)\subset J^-(\supp(g))$, respectively.
\hfill\qed
\end{cor}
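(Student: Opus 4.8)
The plan is to deduce this corollary from \lref{lem:Ghut} in exactly the same way \cref{cor:sol} was deduced from \tref{thm:Gquer}. The crucial observation is that properties \eqref{eq:GreenOp11} and \eqref{eq:GreenOp12} of \lref{lem:Ghut} say precisely that $P$, viewed as a map $\Dpc(M,E_1)\to\Dpc(M,E_2)$, is a bijection with inverse $\widehat G_+$, and similarly that $P:\Dfc(M,E_1)\to\Dfc(M,E_2)$ is a bijection with inverse $\widehat G_-$.

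First I would treat the rigidity statement. Suppose $f\in\D(M,E_1)$ satisfies $Pf=0$ and has past-compact support, so that $f\in\Dpc(M,E_1)$. Then \eqref{eq:GreenOp11} gives $f=\widehat G_+Pf=\widehat G_+0=0$. The future-compact case is identical, using $\widehat G_-$ and the $\widehat G_-$-analog of \eqref{eq:GreenOp11}. Next, for existence with $g\in\Dpc(M,E_2)$, I put $f:=\widehat G_+g$. By \eqref{eq:GreenOp13}, $\supp(f)\subset J^+(\supp(g))$; since $\supp(g)$ is past compact, so is $J^+(\supp(g))$ (recorded in Section~1), whence $f\in\Dpc(M,E_1)\subset\D(M,E_1)$, and by \eqref{eq:GreenOp12} we have $Pf=P\widehat G_+g=g$. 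For uniqueness, if $f'\in\D(M,E_1)$ also satisfies $Pf'=g$ and $\supp(f')\subset J^+(\supp(g))$, then $f-f'$ has support in the past-compact set $J^+(\supp(g))$ and $P(f-f')=0$, so $f=f'$ by the rigidity statement just proved. The case $g\in\Dfc(M,E_2)$ is treated the same way with $J^-$ in place of $J^+$ and $\widehat G_-$ in place of $\widehat G_+$.

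There is no genuine obstacle here beyond bookkeeping; the only point requiring a moment's care is to check that all the distributions in play actually lie in the spaces $\Dpc(M,E_i)$ (resp.\ $\Dfc(M,E_i)$) on which $\widehat G_\pm$ and the identities of \lref{lem:Ghut} are available. This is exactly what the stability of past-compactness (resp.\ future-compactness) under $J^+$ (resp.\ $J^-$) guarantees.
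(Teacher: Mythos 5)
Your proof is correct and is exactly the argument the paper has in mind: the corollary is stated with a terminal $\square$ because it follows from \lref{lem:Ghut} in precisely the same way \cref{cor:sol} follows from \tref{thm:Gquer}, and you have spelled that out faithfully, including the key bookkeeping point that $J^+$ preserves past-compactness so that all the distributions stay in the spaces where the identities of \lref{lem:Ghut} apply.
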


\subsection{The causal propagator}
Using the restriction of $\widehat G_+$ to an operator $\Dc(M,E_2)\to\Dspc(M,E_1)\hookrightarrow\Dsc(M,E_1)$ and  $\widehat G_-:\Dc(M,E_2)\to\Dsc(M,E_1)$ we obtain an extension of the causal propagator $G:\Cooc(M,E_2) \to \Coosc(M,E_1)$ to distributions:
\[
\widehat G := \widehat G_+-\widehat G_-: \Dc(M,E_2) \to \Dsc(M,E_1). 
\]

Now we get the analog to \tref{thm:exact}.

\begin{thm}\label{thm:exactDistri}
The sequence
\begin{equation}\label{eq:exactseq2}
 \{0\}\to\Dc(M,E_1)\xrightarrow{P}\Dc(M,E_2)\xrightarrow{\widehat G}\Dsc(M,E_1)\xrightarrow{P}\Dsc(M,E_2)
\end{equation}
is exact.
\hfill\qed
\end{thm}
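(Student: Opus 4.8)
The plan is to transcribe the proof of \tref{thm:exact} from the smooth to the distributional setting. All the necessary ingredients are already available: the extended Green's operators $\widehat G_\pm$ together with the identities \eqref{eq:GreenOp11}, \eqref{eq:GreenOp12}, \eqref{eq:GreenOp13} of \lref{lem:Ghut}, the density statement \lref{lem:Ccdicht2}, and the support calculus of Section~1. I would establish exactness separately at the three nonzero terms of the sequence.

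At $\Dc(M,E_1)$ there is nothing to do beyond invoking \cref{cor:solDistri}: a distributional solution of $Pf=0$ with compact, hence past-compact, support vanishes, so $P$ is injective. At $\Dc(M,E_2)$, the inclusion $\mathrm{im}(P)\subset\ker(\widehat G)$ is immediate from $\widehat G\circ P=(\widehat G_+-\widehat G_-)\circ P=\mathrm{id}-\mathrm{id}$ on $\Dc(M,E_1)$ using \eqref{eq:GreenOp11}. For the reverse inclusion I would take $f\in\Dc(M,E_2)$ with $\widehat Gf=0$, set $u:=\widehat G_+f=\widehat G_-f$, and note that \eqref{eq:GreenOp13} and its $\widehat G_-$-analog force $\supp u\subset J^+(\supp f)\cap J^-(\supp f)$, a compact set since $\supp f$ is compact; then $u\in\Dc(M,E_1)$ and $Pu=P\widehat G_+f=f$ by \eqref{eq:GreenOp12}.

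The substantive step is exactness at $\Dsc(M,E_1)$, and there I would argue exactly as in the proof of \tref{thm:exact}. The easy inclusion is $P\circ\widehat G=P\widehat G_+-P\widehat G_-=0$. Conversely, let $u\in\Dsc(M,E_1)$ with $Pu=0$ and pick a compact $K$ with $\supp u\subset J(K)$. Using a Cauchy temporal function $t$, choose levels $\Sigma_-=t^{-1}(a)$ and $\Sigma_+=t^{-1}(b)$ with $a<b$ and a cutoff $\chi=\rho\circ t$, where $\rho:\R\to[0,1]$ is smooth with $\rho\equiv0$ on $(-\infty,a]$ and $\rho\equiv1$ on $[b,\infty)$. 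Then $\chi u$ has support in $J^+(\Sigma_-)\cap J(K)$, a closed subset of the past-compact set $J^+(\Sigma_-)$, so $\chi u\in\Dpc(M,E_1)$; likewise $(1-\chi)u\in\Dfc(M,E_1)$. Setting $f:=P(\chi u)=[P,\chi]u$ (using $Pu=0$), the coefficients of $[P,\chi]$ are supported where $d\chi\neq0$, hence in $t^{-1}([a,b])=J^+(\Sigma_-)\cap J^-(\Sigma_+)$, so $\supp f\subset(J^+(\Sigma_-)\cap J^-(\Sigma_+))\cap J(K)$ is compact by \lref{lem:duality}~\eqref{eq:timelikecompact}, being the intersection of a temporally compact and a spacially compact set; thus $f\in\Dc(M,E_2)$. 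Now \eqref{eq:GreenOp11} gives $\widehat G_+f=\widehat G_+P(\chi u)=\chi u$, while $P((1-\chi)u)=-f$ together with the $\widehat G_-$-analog of \eqref{eq:GreenOp11} gives $\widehat G_-f=-(1-\chi)u$. Hence $\widehat Gf=\widehat G_+f-\widehat G_-f=\chi u+(1-\chi)u=u$, so $u\in\mathrm{im}(\widehat G)$.

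The only obstacle is the bookkeeping in this last step, namely checking that $\chi u$, $(1-\chi)u$, and $f$ genuinely lie in $\Dpc$, $\Dfc$, and $\Dc$; this rests on the elementary observations that past- and future-compactness are inherited by closed subsets and that a temporally compact set meets a spacially compact set in a compact set (\lref{lem:duality}). Everything else is formal manipulation of the Green's-operator identities of \lref{lem:Ghut}. Since the argument is, line for line, the one establishing \tref{thm:exact}, I would present the theorem with these pointers rather than a fully detailed repetition, matching the author's treatment of \tref{thm:exact} itself.
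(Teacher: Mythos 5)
Your proposal is correct and follows essentially the same route as the paper: injectivity of $P$ on $\Dc$ from \cref{cor:solDistri}, compactness of $\supp(\widehat G_+f)\cap\supp(\widehat G_-f)$ for the middle step, and a temporal cutoff decomposition $u=\chi u+(1-\chi)u$ for exactness at $\Dsc$. The only cosmetic difference is in the final verification $\widehat G_+f=\chi u$: you cite \lref{lem:Ghut}~\eqref{eq:GreenOp11} directly after checking $\chi u\in\Dpc(M,E_1)$, whereas the paper re-derives this identity by unrolling the duality definition of $\widehat G_+$ against a test section; both are valid and your version is arguably the cleaner appeal to the already-established lemma.
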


\begin{proof}
It is clear from \eqref{eq:GreenOp11} and \eqref{eq:GreenOp12} in \lref{lem:Ghut} that $P\widehat G=\widehat GP=0$ on $\Dc(M,S)$, hence \eqref{eq:exactseq2} is a complex.

In \cref{cor:solDistri} we have seen that $P$ is injective on $\Dpc(M,E_1)$.
Hence $P$ is injective on $\Dc(M,E_1)$ and the complex is exact at $\Dc(M,E_1)$.

Let $f\in\Dc(M,E_2)$ with $\widehat Gf=0$, i.e., $\widehat G_+f=\widehat G_-f$. 
We put $g:=\widehat G_+f=\widehat G_-f \in \D(M,S)$ and we see that $\supp(g)=\supp(\widehat G_+f)\cap\supp(\widehat G_-f)\subset J^+(\supp(f))\cap 
J^-(\supp(f))$.
Since $J^+(\supp(f))\cap J^-(\supp(f))$ is compact, $g\in \Dc(M,E_1)$.
From $Pg = P\widehat G_+f = f$ we see that $f\in P(\Dc(M,E_2))$.
This shows exactness at $\Dc(M,E_2)$.

It remains to show that any $f\in \Dsc(M,E_1)$ with $Pf=0$ is of the form $f=\widehat Gg$ for some $g\in \Dc(M,E_2)$.
Using a cutoff function decompose $f$ as $f=f_+-f_-$ where $\supp(f_\pm) \subset J^\pm(K)$ where $K$ is a suitable compact subset of $M$.
Then $g:=Pf_+=Pf_-$ satisfies $\supp(g)\subset J^+(K)\cap J^-(K)$.
Thus $g\in \Dc(M,E_2)$.
We check that $\widehat G_+g=f_+$.
Namely, for all $\phi\in\Cooc(M,E_1^*)$ we have by the definition of $\widehat G_+$,
$$
\widehat G_+Pf_+[\phi] =
Pf_+[G_-^{*}\phi] =
f_+[\Pt G_-^{*}\phi] =
f_+[\phi] .
$$
The second equality is justified because $\supp(f_+) \cap \supp(G^*_-\phi) \subset J^+(K)\cap J^-(\supp(\phi))$ is compact.
Similarly, one shows $\widehat G_-g=f_-$.
Now $\widehat Gg = \widehat G_+g - \widehat G_-g = f_+ - f_- = f$ which concludes the proof. 
\end{proof}

%%%%%%%%%%%%%%%%%%%%%%%%%%%%%%%%%%%%%%%%%%%%%%%%%%%%%%%%%%%%%%%%%%%%%%%%%%%%%%%%%%%%%%%%%%%%%%%%%

\section{Symmetric Hyperbolic Systems}
\label{sec:Symm}

\subsection{Definition and example}
Now we consider an important class of operators of first order on Lorentzian manifolds, the symmetric hyperbolic systems.
We will show that the Cauchy problem for such operators is well posed on globally hyperbolic manifolds.
We will deduce that they are Green hyperbolic so that the results of the previous sections apply.
For an approach based on the framework of hyperfunctions see \cite{Sch13}.

For a linear first-order operator $P:\Coo(M,E)\to\Coo(M,F)$ the principal symbol $\sigma_P: T^*M\otimes E\to F$ can be characterized by $P(fu)=fPu + \sigma_P(df)u$ where $u\in\Coo(M,E)$ and $f\in\Coo(M,\R)$.

\begin{dfn}\label{def:symhyp}
Let $M$ be a time oriented Lorentzian manifold.
Let $E\to M$ be a real or complex vector bundle with a (possibly indefinite) nondegenerate sesquilinear fiber metric $\<\cdot,\cdot\>$.
A linear differential operator $P:\Coo(M,E)\to\Coo(M,E)$ of first order is called a \emph{symmetric hyperbolic system over $M$} if the following holds for every $x\in M$:
\begin{enumerate}[(i)]
\item\label{eq:symhyp1}
The principal symbol $\sigma_P(\xi):E_x\to E_x$ is symmetric or Hermitian with respect to $\<\cdot,\cdot\>$ for every $\xi\in T_x^*M$;
\item\label{eq:symhyp2}
For every future-directed timelike covector $\tau\in T_x^*M$, the bilinear form $\<\sigma_P(\tau)\cdot,\cdot\>$ on $E_x$ is positive definite.
\end{enumerate}
\end{dfn}

The first condition relates the principal symbol of $P$ to the fiber metric on $E$, the second relates it to the Lorentzian metric on $M$.
The Lorentzian metric enters only via its conformal class because this suffices to specify the causal types of (co)vectors.

\begin{ex}
Let $M=\R^{n+1}$ and denote generic elements of $M$ by $x=(x^0,x^1,\ldots,x^n)$.
We provide $M$ with the Minkowski metric $g=-(dx^0)^2+(dx^1)^2+\ldots+(dx^n)^2$.
The coordinate function $t=x^0/c:M\to\R$ is a Cauchy temporal function;
here $c$ is a positive constant to be thought of as the speed of light.

Let $E$ be the trivial real or complex vector bundle of rank $N$ over $M$ and let $\<\cdot,\cdot\>$ denote the standard Euclidean scalar product on the fibers of $E$, canonically identified with $\K^N$ where $\K=\R$ or $\K=\C$.
Any linear differential operator $P:\Coo(M,E)\to\Coo(M,E)$ of first order is of the form 
\[
P =  A_0(x) \frac{\partial}{\partial t} + \sum_{j=1}^n A_j(x) \frac{\partial}{\partial x^j} + B(x)
\]
where the coefficients $A_j$ and $B$ are $N\times N$-matrices depending smoothly on $x$.
Condition~\eqref{eq:symhyp1} in \dref{def:symhyp} means that all matrices $A_j(x)$ are symmetric if $\K=\R$ and Hermitian if $\K=\C$.
Condition~\eqref{eq:symhyp2} with $\tau=dt$ means that $A_0(x)$ is positive definite.
Thus $P$ is a symmetric hyperbolic system in the usual PDE sense, see e.g.\ \cite[Def.~2.11]{A09}.
But \eqref{eq:symhyp2} says more than that;
it means that $A_0(x)$ dominates $A_1(x),\ldots,A_n(x)$ in the following sense:
The covector $\tau=dt+\sum_{j=1}^n \alpha_j dx^j$ is timelike if and only if $\sum_{j=1}^n\alpha_j^2 < c^{-2}$.
Thus the matrix
\[
\sigma_P(\tau) = A_0(x) + \sum_{j=1}^n \alpha_j A_j(x)
\]
must be positive definite whenever $\sum_{j=1}^n\alpha_j^2 < c^{-2}$.
\cref{cor:finitespeed} below will tell us that waves $u$ solving the equation $Pu=0$ will propagate at most with speed $c$.
\end{ex}

Many examples important in mathematical physics can be found in \cite[App.~A]{G96}.

Given a first order operator $P$ which is not symmetric hyperbolic, one can still try to find a fiberwise invertible endomorphism field $A\in\Coo(M,\Hom(E,E))$ such that $Q=A\circ P$ is symmetric hyperbolic.
Then the analytic results below apply to $Q$ and hence yield analogous results for $P$ as well.
Finding such an endomorphism field is an algebraic problem which is treated e.g.\ in \cite{S07}.

\subsection{The energy estimate}
The following energy estimate will be crucial for controlling the support of solutions to symmetric hyperbolic systems.
It will establish finiteness of the speed of propagation and uniqueness of solutions to the Cauchy problem.

Let $M$ be globally hyperbolic and let $t:M\to\R$ be a Cauchy temporal function.
We write $\Sigma_s:=t^{-1}(s)$ and $\Sigma_s^x := J^-(x)\cap \Sigma_s$ for $x\in M$.
The scalar product $\<\cdot,\cdot\>_0 := \sqrt{\beta}\<\sigma_P(dt)\cdot,\cdot\>$ is positive definite.
Here the smooth positive function $\beta:M\to\R$ is chosen for normalization, more precisely, the Lorentzian metric on $M$ is given by $g=-\beta dt^2 + g_t$ where each $g_s$ is the induced Riemannian metric on $\Sigma_s$.
Let $\dAs$ be the volume density of $\Sigma_s$.
We denote the norm corresponding to $\<\cdot,\cdot\>_0$ by $|\cdot|_0$.

\begin{thm}[Energy estimate]\label{thm:EnEst}
Let $M$ be globally hyperbolic, let $P$ be a symmetric hyperbolic system over $M$ and let $t:M\to\R$ be a Cauchy temporal function.
For each $x\in M$ and each $t_0\in t(M)$ there exists a constant $C>0$ such that
\[
\int_{\Sigma_{t_1}^x} |u|_0^2\,\dA_{t_1}
\le
\bigg[C\int_{t_0}^{t_1} \int_{\Sigma_s^x} |Pu|_0^2\,\dAs\,ds + \int_{\Sigma_{t_0}^x} |u|_0^2\,\dA_{t_0}\bigg]
 e^{C(t_1-t_0)}
\]
holds for each $u\in\Coo(M,E)$ and for all $t_1\ge t_0$.
\end{thm}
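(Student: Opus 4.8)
The plan is to derive a differential inequality for the quantity $\mathcal{E}(s) := \int_{\Sigma_s^x} |u|_0^2\,\dAs$ and then apply Gr\"onwall's lemma. The starting point is the divergence identity for the vector field $W$ defined by $\langle W, Y\rangle := \langle \sigma_P(Y^\flat)u, u\rangle$ for all tangent vectors $Y$, where $Y^\flat$ is the metric dual. Because $\sigma_P$ is symmetric/Hermitian (condition~\eqref{eq:symhyp1} in \dref{def:symhyp}) this $W$ is a well-defined real vector field, and a direct computation using $Pu = \sigma_P(dx^j)\partial_j u + (\text{lower order})$ gives $\mathrm{div}(W) = 2\Re\langle Pu, u\rangle + \langle R u, u\rangle$ where $R$ is a zeroth-order term built from the first derivatives of $\sigma_P$ and the connection coefficients; in particular $R$ is a smooth endomorphism field. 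First I would integrate this identity over the truncated causal diamond $D := J^-(x) \cap t^{-1}([t_0, t_1])$.

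The next step is to apply the divergence theorem to $W$ on $D$. The boundary of $D$ consists of three pieces: the bottom cap $\Sigma_{t_0}^x$, the top cap $\Sigma_{t_1}^x$, and the lateral null boundary contained in $\partial J^-(x)$. On the spacelike caps, the flux of $W$ through $\Sigma_s^x$ with respect to the future-directed unit normal is, up to the normalization factor $\sqrt{\beta}$, exactly $\langle \sigma_P(dt)u, u\rangle = |u|_0^2$ (this is why the weight $\sqrt{\beta}$ was built into $\langle\cdot,\cdot\rangle_0$); so these contribute $\mathcal{E}(t_1)$ and $-\mathcal{E}(t_0)$. The crucial sign observation is that on the lateral null boundary the outward conormal is a future-directed \emph{null} covector $\tau$, hence a limit of future-directed timelike covectors, and by continuity condition~\eqref{eq:symhyp2} gives $\langle \sigma_P(\tau)u, u\rangle \ge 0$; therefore the lateral flux has a favorable sign and can simply be discarded. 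This yields
\[
\mathcal{E}(t_1) \le \mathcal{E}(t_0) + \int_{t_0}^{t_1}\!\int_{\Sigma_s^x} \big(2\Re\langle Pu,u\rangle + \langle Ru,u\rangle\big)\,\dAs\,ds
\]
after coarea-slicing the volume integral over $D$ along the level sets of $t$ (here $\dV = \sqrt{\beta}\,\dAs\,ds$, absorbed into constants).

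The remaining step is purely algebraic estimation. On the compact set $J^-(x)\cap t^{-1}([t_0,t_1])$ the endomorphism $R$ is bounded, and the fiber metrics $\langle\cdot,\cdot\rangle$, $\langle\cdot,\cdot\rangle_0$ are uniformly equivalent, so $\langle Ru,u\rangle \le C_1 |u|_0^2$ for a constant depending on $x$ and $t_0$ (I would fix $t_0$ and note the estimate is then uniform in $t_1$ over any bounded range, which is all that is needed). Similarly $2\Re\langle Pu, u\rangle \le |Pu|_0^2 + |u|_0^2$ up to another constant. Collecting terms gives $\mathcal{E}(t_1) \le \mathcal{E}(t_0) + C\int_{t_0}^{t_1}\!\int_{\Sigma_s^x}|Pu|_0^2\,\dAs\,ds + C\int_{t_0}^{t_1}\mathcal{E}(s)\,ds$, and since the same inequality holds with $t_1$ replaced by any intermediate $t \in [t_0, t_1]$, Gr\"onwall's lemma yields the claimed exponential bound.

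The main obstacle I anticipate is the careful treatment of the lateral boundary: $\partial J^-(x)$ need not be smooth everywhere, so the divergence theorem must be justified either by approximating $D$ from inside by domains with smooth boundary (using that the non-smooth locus has measure zero and that $W$ is smooth and compactly supported on a neighborhood of $\bar D$), or by working with a smooth spacelike cutoff that exhausts the diamond; in either case one still needs the sign of the null flux, which comes from continuity of $\xi \mapsto \langle\sigma_P(\xi)u,u\rangle$ and the fact that future null covectors lie in the closure of the future timelike cone. A secondary point requiring care is that $t_0$ may fail to be the minimum of $t$ on the relevant region — but since $\Sigma_{t_0}^x = J^-(x)\cap\Sigma_{t_0}$ is compact (Lemma~40 of \cite{ON83}, as used in \lref{lem:SpacelikeCompact}) all the caps and the truncated diamond are compact, so no issue of unboundedness arises.
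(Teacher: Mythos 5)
Your proposal is correct and essentially identical to the paper's argument: the paper integrates the $n$-form $\omega=\sum_{j}\Re\langle\sigma_P(b_j^*)u,u\rangle\,b_j\lrcorner\vol$ over the truncated diamond $J^-(x)\cap t^{-1}([t_0,t_1])$ and applies Stokes' theorem for Lipschitz boundaries, which is precisely your divergence-theorem calculation for the dual vector field $W$. All the key steps match: the computation of $d\omega$ (resp.\ $\mathrm{div}\,W$) in terms of $Pu$ plus a zero-order remainder, the observation that $\sigma_P$ is positive semidefinite on future-directed null covectors (by continuity from the timelike case) so the lateral null flux can be discarded, the identification of the flux through the spacelike caps with $\int|u|_0^2\,\dAs$ via the $\sqrt{\beta}$-normalization, the coarea slicing of the bulk integral, and Gr\"onwall's lemma.
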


\begin{proof}
Denote the dimension of $M$ by $n+1$.
Without loss of generality, we assume that $M$ is oriented;
if $M$ is nonorientable replace the $(n+1)$- and $n$-forms occurring below by densities or, alternatively, work on the orientation covering of $M$.

Let $\vol$ be the volume form of $M$.
We define the $n$-form $\omega$ on $M$ by 
\[
\omega := \sum_{j=0}^n \Re(\<\sigma_P(b_j^*)u,u\>)\, b_j \lrcorner \vol .
\]
Here $b_0,\ldots,b_n$ denotes a local tangent frame, $b_0^*,\ldots,b_n^*$ the dual basis, and $\lrcorner$ denotes the insertion of a tangent vector into the first slot of a form.
It is easily checked that $\omega$ does not depend on the choice of $b_0,\ldots,b_n$.
For the sake of brevity, we write
\begin{equation}
f:=Pu .
\label{eq:Puf}
\end{equation} 

We choose a metric connection $\nabla$ on $E$.
The symbol $\nabla$ will also be used for the Levi-Civita connection on $TM$.
Since the first-order operator $\sum_{j=0}^n \sigma_P(b_j^*)\nabla_{b_j}$ has the same principal symbol as $P$, it differs from $P$ only by a zero-order term.
Thus there exists $B\in\Coo(M,\Hom(E,E))$ such that 
\begin{equation}
P = \sum_{j=0}^n \sigma_P(b_j^*)\nabla_{b_j} - B .
\label{eq:PRahmen}
\end{equation}
To simplify the computation of the exterior differential of $\omega$, we assume that the local tangent frame is synchronous at the point under consideration, i.e., $\nabla b_j=0$ at the (fixed but arbitrary) point.
In particular, the Lie brackets $[b_j,b_k]$ vanish at that point.
Then we get at that point
\begin{align*}
d\omega(b_0,\ldots,b_n)
&=
\sum_{k=0}^n(-1)^k \partial_{b_k}(\omega(b_0,\ldots,\widehat b_k,\ldots,b_n)) \\
&=
\sum_{k=0}^n(-1)^k \partial_{b_k}\bigg( \sum_{j=0}^n \Re(\<\sigma_P(b_j^*)u,u\>)\, \vol(b_j,b_0,\ldots,\widehat b_k,\ldots,b_n)  \bigg) \\
&=
\Re\sum_{j=0}^n \partial_{b_j}( \<\sigma_P(b_j^*)u,u\>)\, \vol(b_0,\ldots,b_n)
\end{align*}
and thus 
\[
d\omega
=
\Re\sum_{j=0}^n \partial_{b_j}( \<\sigma_P(b_j^*)u,u\>)\, \vol.
\]
We put $\widetilde B:= \sum_{j=0}^n \nabla_{b_j}\sigma_P(b_j^*)\in\Coo(M,\Hom(E,E))$.
Using the symmetry of the principal symbol, \eqref{eq:Puf}, and \eqref{eq:PRahmen} we get
\begin{align*}
\sum_{j=0}^n \partial_{b_j}( \<\sigma_P(b_j^*)u,u\>)
&=
\<\widetilde B u,u\> + \sum_{j=0}^n[\<\sigma_P(b_j^*)\nabla_{b_j}u,u\> + \<\sigma_P(b_j^*)u,\nabla_{b_j}u\> ]\\
&=
\<\widetilde B u,u\> + \<(P+B)u,u\> + \<u,(P+B)u\> \\
&=
\<(\widetilde B +B)u,u\> + \<u,Bu\> + \<f,u\> + \<u,f\>
\end{align*}
and hence
\[
d\omega
=
\Re(\<(\widetilde B +2B)u,u\> + 2\<f,u\>)\,\vol .
\]
Thus we have for any compact $K\subset M$
\begin{align*}
\int_K d\omega
&=
\int_K \Re(\<(\widetilde B +2B)u,u\> + 2\<f,u\>)\,\vol \\
&\leq
\int_K (C_1 |u|_0^2 + C_2|f|_0|u|_0)\,\vol \\
&\leq
C_3\int_K ( |u|_0^2 + |f|_0^2)\,\vol
\end{align*}
with constants $C_1$, $C_2$, $C_3$ depending on $P$ and $K$ but not on $u$ and $f$.
We apply this to $K=J^-(x)\cap t^{-1}([t_0,t_1])$ where $[t_0,t_1]$ is a compact subinterval of the image of $t$ (Fig.~2).
\begin{center}
\begin{pspicture}(-4,0.8)(4,5)
%\rput(0,3){\psscalebox{0.15}{\epsfbox{Kegel-gross.eps}}}
\rput(0,3){\psscalebox{0.4}{\epsfbox{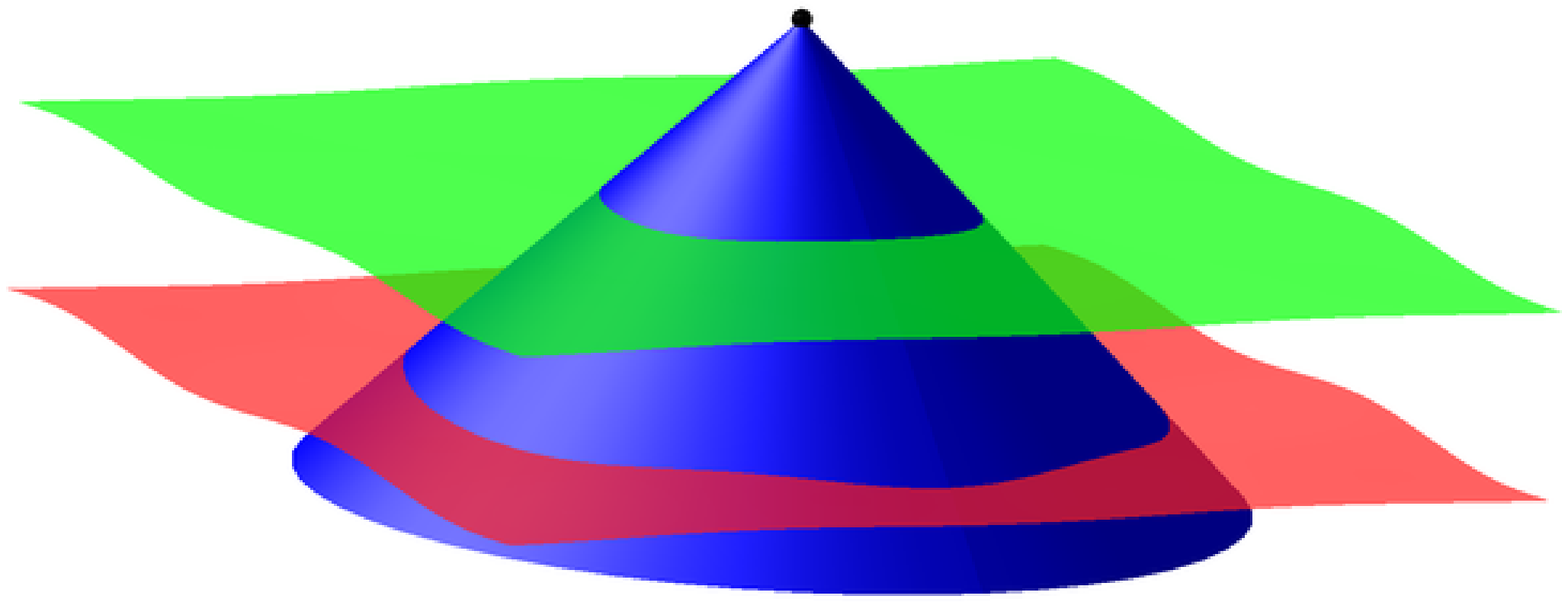}}}
\rput(0.1,4.8){$x$}
\rput(-4.5,4){$\Sigma_{t_1}$}
\rput(-4.5,3){$\Sigma_{t_0}$}
\rput(1,1){$J^-(x)$}
\rput(0,2.4){\textcolor{white}{$K$}}
\end{pspicture}

{\sc Fig.}~2:
\emph{Integration domain in the energy estimate}
\end{center}

By the Fubini theorem,
\begin{equation}
\int_K d\omega
\le
C_4\int_{t_0}^{t_1} \int_{\Sigma_s^x} ( |u|_0^2 + |f|_0^2)\,\dAs\,ds \, .
\label{eq:domegabound}
\end{equation}
The boundary $\partial J^-(x)$ is a Lipschitz hypersurface (see \cite[p.~187]{HE73} or \cite[pp.~413--415]{ON83}).
The Stokes' theorem for manifolds with Lipschitz boundary \cite[p.~209]{EG92} yields
\begin{equation}
\int_K d\omega
=
\int_{\partial K}\omega
=
\int_{\Sigma_{t_1}^x}\omega - \int_{\Sigma_{t_0}^x}\omega + \int_Y\omega 
\label{eq:Stokes}
\end{equation}
where $Y=(\partial J^-(x))\cap t^{-1}([t_0,t_1])$.
Choosing $b_0=\sqrt{\beta}dt$ and $b_1,\ldots,b_n$ tangent to $\Sigma_s$, we see that
\begin{equation}
\int_{\Sigma_s^x}\omega
=
\int_{\Sigma_s^x} \<\sigma_P(\sqrt{\beta}dt)u,u\>\,\dAs
=
\int_{\Sigma_s^x} |u|_0^2\,\dAs .
\label{eq:Deckel}
\end{equation}
The boundary $\partial J^-(x)$ is ruled by the past-directed lightlike geodesics emanating from $x$.
Thus at each differentiable point $y\in \partial J^-(x)$ the tangent space $T_y\partial J^-(x)$ contains a lightlike vector but no timelike vectors.
We choose a positively oriented generalized orthonormal tangent basis $b_0,b_1,\ldots,b_n$ of $T_yM$ in such a way that $b_0$ is future-directed timelike and $b_0+b_1,b_2,\ldots,b_n$ is a oriented basis of $T_y\partial J^-(x)$.
Then
\begin{align*}
\omega(b_0+b_1,b_2,\ldots,b_n)
&=
\sum_{j=0}^n \Re(\<\sigma_P(b_j^*)u,u\>)\vol(b_j,b_0+b_1,b_2,\ldots,b_n) \\
&=
\Re\<\sigma_P(b_0^*)u,u\>-\Re\<\sigma_P(b_1^*)u,u\> \\
&=
\Re\<\sigma_P(b_0^*-b_1^*)u,u\> .
\end{align*}
Since $\<\sigma_P(\tau)\cdot,\cdot\>$ is positive definite for each future-directed timelike covector, it is, by continuity, still positive semidefinite for each future-directed causal covector.
Now $b_0^*-b_1^*$ is future-directed lightlike.
Therefore
\[
\omega(b_0+b_1,b_2,\ldots,b_n) = \<\sigma_P(b_0^*-b_1^*)u,u\> \ge 0.
\]
This implies 
\begin{equation}
\int_Y\omega\ge 0.
\label{eq:Yge0}
\end{equation}
Combining \eqref{eq:domegabound}, \eqref{eq:Stokes}, \eqref{eq:Deckel}, and \eqref{eq:Yge0} we find
\[
\int_{\Sigma_{t_1}^x} |u|_0^2\,\dA_{t_1} - \int_{\Sigma_{t_0}^x} |u|_0^2\,\dA_{t_0}
\le
C_4\int_{t_0}^{t_1} \int_{\Sigma_s^x} ( |u|_0^2 + |f|_0^2)\,\dAs\,ds \, .
\]
In other words, the function $h(s)=\int_{\Sigma_{s}^x} |u|_0^2\,\dAs$ satisfies the integral inequality
\[
h(t_1) \leq \alpha(t_1) + C_4\int_{t_0}^{t_1}h(s)\,ds
\]
for all $t_1\ge t_0$ where $\alpha(t_1) = C_4\int_{t_0}^{t_1} \int_{\Sigma_s^x} |f|_0^2\,\dAs\,ds + h(t_0)$.
Gr\"onwall's lemma gives
\[
h(t_1)\le \alpha(t_1) e^{C_4(t_1-t_0)}
\]
which is the claim.
\end{proof}

\subsection{Finite speed of propagation}
We deduce that a ``wave'' governed by a symmetric hyperbolic system can propagate with the speed of light at most (Fig.~3).

\begin{cor}[Finite propagation speed]\label{cor:finitespeed}
Let $M$ be globally hyperbolic, let $\Sigma\subset M$ be a smooth spacelike Cauchy hypersurface and let $P$ be a symmetric hyperbolic system over $M$.
Let $u\in\Coo(M,E)$ and put $u_0:=u|_\Sigma$ and $f:=Pu$.
Then
\begin{equation}
\supp(u)\cap J^\pm(\Sigma) \subset J^\pm((\supp f\cap J^\pm(\Sigma)) \cup \supp\, u_0)  .
\label{eq:finitefuturepast}
\end{equation}
\begin{center}
\begin{pspicture}(-3.4,-1.1)(4,2.4)
\pspolygon[fillcolor=lightgray,fillstyle=solid,linewidth=0.01](-3,2)(-1,0)(0,0)(.5,.5)(1,0)(3,0)(5,2)
\psline[linecolor=white](-3,2)(5,2)
\psline[linewidth=0.02](-3.5,0)(5.5,0)
\psline[linewidth=0.06](-1,0)(0,0)
\psccurve[fillcolor=gray,fillstyle=solid,linewidth=0.01](1,0)(2,1)(3,0)(2.5,-0.6)(1.5,-0.7)
\psline[linewidth=0.01,linestyle=dotted,dotsep=0.03](1,0)(3,0)
\rput(-3.8,0){$\Sigma$}
\rput(-3.8,1.3){$J^+(\Sigma)$}
\rput(-0.5,-0.4){$\supp(u_0)$}
\rput(2,0){\psframebox*[framearc=.3]{$\supp f$}}
\end{pspicture}

{\sc Fig.}~3:
\emph{Finite propagation speed}
\end{center}
In particular,
\[
\supp(u) \subset J(\supp f \cup \supp(u_0)).
\]
\end{cor}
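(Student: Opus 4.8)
The plan is to derive both cone inclusions in \eqref{eq:finitefuturepast} from the energy estimate \tref{thm:EnEst} and then read off the final assertion. Set $A_\pm:=(\supp f\cap J^\pm(\Sigma))\cup\supp u_0$, so that the goal is $\supp u\cap J^\pm(\Sigma)\subset J^\pm(A_\pm)$. First I would fix a Cauchy temporal function $t\colon M\to\R$ adapted to $\Sigma$: by the Bernal--S\'anchez result (the version already used in the proof of \lref{lem:CauchyChar1}, \cite[Thm.~1.2]{BS06}) there is such a $t$ with $\Sigma=\Sigma_{t_0}$ for some level $t_0\in t(M)$, and since $t$ is strictly increasing along future-directed causal curves the slices $\Sigma_s$ with $s\ge t_0$ all lie in $J^+(\Sigma)$.

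The core step is: $u(y)=0$ for every $y\in J^+(\Sigma)$ with $J^-(y)\cap A_+=\emptyset$. For such a $y$, $u_0=u|_\Sigma$ vanishes on $\Sigma_{t_0}^y=J^-(y)\cap\Sigma$ because $\supp u_0\subset A_+$, so $\int_{\Sigma_{t_0}^y}|u|_0^2\,\dA_{t_0}=0$; and for $t_0\le s\le t(y)$ the slice $\Sigma_s^y=J^-(y)\cap\Sigma_s$ lies in $J^-(y)\cap J^+(\Sigma)$, which is disjoint from $\supp f$ because $\supp f\cap J^+(\Sigma)\subset A_+$, so $Pu=f$ vanishes there. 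Hence both terms on the right-hand side of the energy estimate --- applied with this $t$, with $x$ replaced by $y$, and any $t_1\in[t_0,t(y))$ --- vanish, forcing $\int_{\Sigma_{t_1}^y}|u|_0^2\,\dA_{t_1}=0$, i.e.\ $u$ vanishes on $J^-(y)\cap\Sigma_{t_1}$; letting $t_1\to t(y)$ and using continuity of $u$ (these slices accumulate at $y$) gives $u(y)=0$. As $y\notin J^+(A_+)$ is equivalent to $J^-(y)\cap A_+=\emptyset$, this proves $u\equiv 0$ on $J^+(\Sigma)\setminus J^+(A_+)$.

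To upgrade this to the support inclusion, note that on the open set $I^+(\Sigma)\setminus J^+(A_+)$ we immediately obtain $\supp u\cap I^+(\Sigma)\subset J^+(A_+)$. The only delicate points are those of $\Sigma$ itself: for $x\in\Sigma$, the relations $J^-(x)\cap\Sigma=\{x\}=J^-(x)\cap J^+(\Sigma)$ show that $x\notin J^+(A_+)$ means precisely that $u_0$ vanishes near $x$ in $\Sigma$ and $Pu$ vanishes near $x$ in $M$; local uniqueness for symmetric hyperbolic systems then forces $u=0$ on a neighbourhood of $x$, so $x\notin\supp u$. This local uniqueness in turn follows from the energy estimate applied in a small causal lens above $\Sigma$ around $x$, together with its time-reversed analogue below $\Sigma$ --- obtained by reversing the time orientation of $M$ and replacing $P$ by the again symmetric hyperbolic operator $-P$. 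Hence $\supp u\cap J^+(\Sigma)\subset J^+(A_+)$, and the inclusion $\supp u\cap J^-(\Sigma)\subset J^-(A_-)$ follows by applying the same argument to $-P$ on $M$ with the opposite time orientation.

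Finally, since $\Sigma$ is a Cauchy hypersurface we have $M=J^+(\Sigma)\cup J^-(\Sigma)$, hence $\supp u\subset J^+(A_+)\cup J^-(A_-)$; because $J^\pm$ sends unions to unions and $\supp f=(\supp f\cap J^+(\Sigma))\cup(\supp f\cap J^-(\Sigma))$, this is contained in $J^+(\supp f\cup\supp u_0)\cup J^-(\supp f\cup\supp u_0)=J(\supp f\cup\supp u_0)$, as claimed. I expect the main obstacle to be exactly the boundary upgrade in the previous paragraph: the energy estimate as stated only controls the causal future of $\Sigma$, so it must be complemented by the time-reversed estimate in order to prevent $\supp u$ from reaching $\Sigma$ from below. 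A secondary point requiring attention is the verification that every intermediate slice $\Sigma_s^y$ stays inside $J^+(\Sigma)\cap J^-(y)$, so that the inhomogeneous term in the energy estimate genuinely drops out.
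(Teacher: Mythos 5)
Your argument is correct and uses the same strategy as the paper: the energy estimate of \tref{thm:EnEst} applied on the past cone $J^-(y)\cap J^+(\Sigma)$, together with time reversal for $J^-(\Sigma)$, and the identity $M=J^+(\Sigma)\cup J^-(\Sigma)$ at the end. You are also right that passing from the pointwise statement ``$u=0$ on $J^+(\Sigma)\setminus J^+(A_+)$'' to the support inclusion needs an extra word at points $x\in\Sigma$, where the paper is terse: there one has $J^-(x)\cap J^+(\Sigma)=\{x\}$, so $x\notin J^+(A_+)$ forces $x\notin\supp u_0$ and $x\notin\supp f$, and then local uniqueness (from the energy estimate in both time directions) gives two-sided vanishing of $u$ near $x$; equivalently, since for $x\in\Sigma$ one has $x\notin J^+(A_+)$ iff $x\notin J^-(A_-)$, the paper's two pointwise halves can simply be combined to yield a full neighbourhood on which $u$ vanishes. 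Both routes close the same small gap, and the rest of your argument matches the paper's.
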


\begin{proof}
One can choose a Cauchy temporal function in such a way that $\Sigma=\Sigma_0$ where again $\Sigma_s=t^{-1}(s)$, see \cite[Thm.~1.2~(B)]{BS06}.
Let $x\in J^+(\Sigma)$.
Assume $x\in M\setminus J^+((\supp f\cap J^+(\Sigma)) \cup \supp(u_0))$.
This means that there is no future-directed causal curve starting in $\supp f \cup \supp\, u_0$, entirely contained in $J^+(\Sigma)$, which terminates at $x$.
In other words, there is no past-directed causal curve starting at $x$, entirely contained in $J^+(\Sigma)$, which terminates in $\supp f \cup \supp\, u_0$.
Hence $J^-(x)\cap J^+(\Sigma)$ does not intersect $\supp f \cup \supp(u_0)$.
By \tref{thm:EnEst}, $u$ vanishes on $J^-(x)\cap J^+(\Sigma)$, in particular $u(x)=0$.
This proves \eqref{eq:finitefuturepast} for $J^+$.

The case $x\in J^-(\Sigma)$ can be reduced to the previous case by time reversal.
For the support of $u$ we deduce
\begin{align*}
\supp\, u
&\subset
J^+((\supp f\cap J^+(\Sigma)) \cup \supp\, u_0) \cup
J^-((\supp f\cap J^-(\Sigma)) \cup \supp\, u_0) \\
&\subset
J^+(\supp f \cup \supp\, u_0) \cup
J^-(\supp f \cup \supp\, u_0) \\
&=
J(\supp f \cup \supp\, u_0).\qedhere
\end{align*}

\end{proof}

\subsection{Uniqueness of solutions to the Cauchy problem}
As a consequence we obtain uniqueness for the Cauchy problem.

\begin{cor}[Uniqueness for the Cauchy problem]\label{cor:CPuniqueness}
Let $M$ be globally hyperbolic, let $\Sigma\subset M$ be a smooth spacelike Cauchy hypersurface and let $P$ be a symmetric hyperbolic system over $M$.
Given $f\in\Coo(M,E)$ and $u_0\in\Coo(\Sigma,E)$ there is at most one solution $u\in\Coo(M,E)$ to the Cauchy problem
\begin{equation}
\begin{cases}
Pu=f ,\\
u|_\Sigma=u_0 .
\end{cases}
\label{eq:CauchyProblem}
\end{equation}
\end{cor}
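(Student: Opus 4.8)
The statement is a uniqueness assertion for the Cauchy problem, and the natural strategy is to reduce it to the energy estimate of \tref{thm:EnEst} by subtracting two solutions. Suppose $u_1, u_2 \in \Coo(M,E)$ both solve \eqref{eq:CauchyProblem} with the same data $(f, u_0)$. Set $v := u_1 - u_2$. By linearity of $P$, we have $Pv = 0$, and since $u_1|_\Sigma = u_2|_\Sigma = u_0$, we get $v|_\Sigma = 0$. The goal is then to show $v \equiv 0$ on $M$.

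First I would fix a Cauchy temporal function $t: M \to \R$ adapted to $\Sigma$, i.e.\ with $\Sigma = \Sigma_0 = t^{-1}(0)$; such a function exists by \cite[Thm.~1.2~(B)]{BS06}, exactly as invoked in the proof of \cref{cor:finitespeed}. Then I would apply the energy estimate to $u = v$ with this temporal function. On the future side, for any $x \in J^+(\Sigma)$ and any $t_1 \ge 0$ in $t(M)$, \tref{thm:EnEst} with $t_0 = 0$ gives
\[
\int_{\Sigma_{t_1}^x} |v|_0^2 \,\dA_{t_1}
\le
\Bigl[ C \int_0^{t_1} \int_{\Sigma_s^x} |Pv|_0^2 \,\dAs \, ds + \int_{\Sigma_0^x} |v|_0^2 \,\dA_0 \Bigr] e^{C(t_1 - t_0)}.
\]
Since $Pv = 0$ the first integral on the right vanishes, and since $\Sigma_0^x = J^-(x) \cap \Sigma \subset \Sigma$ and $v|_\Sigma = 0$, the second integral vanishes as well. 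Hence $\int_{\Sigma_{t_1}^x} |v|_0^2 \,\dA_{t_1} = 0$ for every relevant $t_1$. Because $|\cdot|_0$ is a genuine (positive definite) norm and $v$ is continuous, this forces $v$ to vanish on $\Sigma_{t_1}^x$ for all such $x$ and $t_1$; letting $x$ range over $J^+(\Sigma)$ and $t_1$ over $t(M) \cap [0,\infty)$ covers all of $J^+(\Sigma)$, so $v \equiv 0$ on $J^+(\Sigma)$. The case of $J^-(\Sigma)$ is handled by time reversal (replacing $P$ by its pullback under the orientation-reversing relabelling of $t$, which is again symmetric hyperbolic), exactly as in \cref{cor:finitespeed}. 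Since $M = J^+(\Sigma) \cup J^-(\Sigma)$, we conclude $v \equiv 0$, i.e.\ $u_1 = u_2$.

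The argument is essentially routine once \tref{thm:EnEst} is available; the only mild subtlety — the part I would be most careful about — is the bookkeeping that $\{\Sigma_{t_1}^x : x \in J^+(\Sigma),\ t_1 \in t(M) \cap [0,\infty)\}$ really exhausts $J^+(\Sigma)$ and that vanishing of the $L^2$-norm over each such slice, for a continuous section, yields pointwise vanishing. In fact one can shortcut this entirely by simply invoking \cref{cor:finitespeed}: applying it to $v$ with $f = Pv = 0$ and $u_0 = v|_\Sigma = 0$ gives $\supp(v) \subset J(\supp(Pv) \cup \supp(v|_\Sigma)) = J(\emptyset) = \emptyset$, hence $v \equiv 0$. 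This is the cleanest route and the one I would present, with the direct energy-estimate computation above relegated to a remark if at all.
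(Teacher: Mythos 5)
Your proposal is correct and, in its final paragraph, arrives at exactly the paper's proof: reduce to $f=0$, $u_0=0$ by linearity, then apply \cref{cor:finitespeed} to conclude $\supp u\subset J(\emptyset)=\emptyset$. The preliminary direct argument via \tref{thm:EnEst} is a sound unpacking of the same underlying mechanism, but as you observe yourself, invoking the finite-propagation-speed corollary is the cleanest route.
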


\begin{proof}
By linearity, we only need to consider the case $f=0$ and $u_0=0$.
In this case, \cref{cor:finitespeed} shows $\supp\, u\subset J(\emptyset)=\emptyset$, hence $u=0$.
\end{proof}

\subsection{Existence of solutions to the Cauchy problem}
Existence of solutions is obtained by gluing together local solutions.
The latter exist due to standard PDE theory.
A uniqueness and existence proof for local solutions to quasilinear hyperbolic systems was also sketched in \cite[App.~B]{G96}.
It should be noted that global hyperbolicity of the underlying manifold is still crucial.
It is needed in order to have several compactness properties used in the proof of \tref{thm:CPexistence}.

\begin{thm}[Existence for the Cauchy problem]\label{thm:CPexistence}
Let $M$ be globally hyperbolic, let $\Sigma\subset M$ be a smooth spacelike Cauchy hypersurface and let $P$ be a symmetric hyperbolic system over $M$.
For any $f\in\Coo(M,E)$ and any $u_0\in\Coo(\Sigma,E)$ there is a solution $u\in\Coo(M,E)$ to the Cauchy problem \eqref{eq:CauchyProblem}.
\end{thm}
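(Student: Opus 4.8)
The plan is to obtain the solution by gluing together local solutions, which exist by standard PDE theory, and to control the gluing by means of the energy estimate \tref{thm:EnEst} (which already yields uniqueness, \cref{cor:CPuniqueness}, and finite propagation speed, \cref{cor:finitespeed}). First I would use \cite[Thm.~1.2~(B)]{BS06} to fix a Cauchy temporal function $t\colon M\to\R$ with $\Sigma=t^{-1}(0)$ and, after composing with a diffeomorphism, assume $t$ surjective, so that $J^+(\Sigma)=t^{-1}([0,\infty))$ and $J^-(\Sigma)=t^{-1}((-\infty,0])$; extending $u_0$ to a smooth section of $E$ over $M$ and subtracting, one may also reduce to $u_0=0$, although this is not essential. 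The key preliminary remark is that \tref{thm:EnEst}, and hence the uniqueness it implies, are local: in its proof the section $u$ enters only through its restriction to the compact truncated cone $J^-(x)\cap t^{-1}([t_0,t_1])$, so the estimate remains valid whenever $u$ is smooth and solves $Pu=f$ merely on an open neighbourhood of such a cone. I shall refer to this as \emph{localised uniqueness}.

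Next I would establish local existence near $\Sigma$. Around each point of $\Sigma$ choose a relatively compact chart $U$ with coordinates $(t,x^1,\dots,x^n)$ in which $E|_U$ is trivialised; there $P$ takes the normal form $\sigma_P(dt)\,\partial_t+\sum_j A_j\,\partial_j+B$ with $\sigma_P(dt)$ positive definite and the $A_j$ symmetric or Hermitian (the two conditions of \dref{def:symhyp}), so the classical theory of symmetric hyperbolic systems, i.e.\ Friedrichs' energy method (see e.g.\ \cite{A09} or \cite[App.~B]{G96}), provides on a sufficiently thin slab a smooth solution of $Pu=f$ with prescribed data on $\Sigma\cap U$. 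For any compact $K\subset\Sigma$ finitely many such charts cover $K$, and by localised uniqueness the local solutions agree on overlaps, hence glue to a smooth solution on an open neighbourhood of $K$ in $M$ carrying the prescribed Cauchy data.

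The step from local to global proceeds through Cauchy developments. For compact $K\subset\Sigma$ put $D(K):=\{x\in M:(J^+(x)\cup J^-(x))\cap\Sigma\subset K\}$; then $D(K)\cap\Sigma=K$ and each time-slice $D(K)\cap t^{-1}([-T,T])$ is compact, being a closed subset of $(J^+(K)\cap J^-(\Sigma_T))\cup(J^-(K)\cap J^+(\Sigma_{-T}))$. Starting from the neighbourhood solution above, I let $T_+$ be the supremum of those $T\ge0$ for which a solution of $Pu=f$ with data $u_0|_K$ exists on $D(K)\cap t^{-1}([0,T])$. For $T<T_+$ such solutions agree by localised uniqueness and therefore patch to a solution on $D(K)\cap t^{-1}([0,T_+))$; applying \tref{thm:EnEst} to $u$ and, inductively, to its covariant derivatives of all orders with respect to a fixed metric connection --- at each stage $\nabla^k u$ is estimated in terms of the derivatives of $f$ and the lower-order derivatives of $u$ already under control, the remaining top-order contribution being absorbed by Gr\"onwall's lemma --- one bounds $u$ and all its derivatives uniformly on the compact set $D(K)\cap t^{-1}([0,T_+])$, so that $u$ extends smoothly up to $\Sigma_{T_+}$ and still solves $Pu=f$. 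If $T_+<\infty$, taking $\Sigma_{T_+}$ as a new Cauchy hypersurface with data $u|_{\Sigma_{T_+}\cap D(K)}$ (a compact set) and repeating the local existence step yields a solution on a neighbourhood of $\Sigma_{T_+}\cap D(K)$ that matches $u$ by localised uniqueness and extends it strictly beyond $T_+$, contradicting maximality. Hence a solution exists on $D(K)\cap J^+(\Sigma)$, symmetrically on $D(K)\cap J^-(\Sigma)$, and the two halves agree near $\Sigma$ with the slab solution, giving $u_K\in\Coo(D(K),E)$ with $Pu_K=f$ and $u_K|_\Sigma=u_0|_K$. Finally, exhaust $\Sigma$ by compact sets $K_1\subset K_2\subset\cdots$ with $K_j\subset\mathrm{int}(K_{j+1})$; for $i\le j$ the solutions $u_{K_i}$ and $u_{K_j}$ coincide on $D(K_i)$ by localised uniqueness, so they assemble to a smooth $u$ with $Pu=f$ and $u|_\Sigma=u_0$ on $\bigcup_j D(K_j)$. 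Because $\Sigma$ is a Cauchy hypersurface, for each $x\in M$ the set $(J^+(x)\cup J^-(x))\cap\Sigma$ is compact, so it lies in some $K_j$ and $x\in D(K_j)$; therefore $\bigcup_j D(K_j)=M$ and $u\in\Coo(M,E)$ solves \eqref{eq:CauchyProblem}.

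The hard part is the closedness step in the Cauchy-development argument: promoting the solution on $D(K)\cap t^{-1}([0,T_+))$ to one that is smooth up to $\Sigma_{T_+}$. This requires \tref{thm:EnEst} not for $u$ alone but for its covariant derivatives of every order, and one must check carefully that the commutators of $P$ with differentiation contribute only terms that either are already controlled or can be absorbed by Gr\"onwall's lemma, so that each $C^m$-norm of $u$ on the relevant compact set stays bounded as $t\uparrow T_+$. The other ingredients --- finiteness of the chart covers, compactness of the time-slices of $D(K)$, and $\bigcup_j D(K_j)=M$ --- are exactly the points where global hyperbolicity is indispensable, but they are routine once the compactness of sets of the form $J^+(K_1)\cap J^-(K_2)$ is at hand.
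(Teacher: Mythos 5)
Your argument is correct in outline, but it follows a genuinely different and strictly more laborious route than the paper's at the crucial ``closedness'' step, and it leaves that step as an acknowledged gap. You propose to extend the solution on $D(K)\cap t^{-1}([0,T_+))$ smoothly up to the slice $\Sigma_{T_+}$ by running \tref{thm:EnEst} not only on $u$ but on $\nabla^k u$ for every $k$, commuting $P$ past $\nabla^k$ and absorbing the top-order commutator terms by Gr\"onwall. This continuation principle is standard and can be made to work, but the higher-order energy estimates are not in the paper, the commutator bookkeeping is not carried out, and passing from uniform bounds on the open interval to a smooth limit at $T_+$ needs an extra argument (express $\partial_t u$ via $Pu=f$, Sobolev embed, Arzel\`a--Ascoli). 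The paper avoids all of this: it never tries to reach $\Sigma_{t_*}$. Instead it steps back to $\Sigma_{t_*-\eps}$, where $u$ is already known and smooth, covers $J(K)\cap\Sigma_{t_*}$ by finitely many causally compatible charts, takes a partition of unity $\psi_j$, solves the local Cauchy problems with the \emph{cut-off} data $\psi_j u|_{\Sigma_{t_*-\eps}}$ and source $\psi_j f$, extends each local solution $u_j$ by zero using finite propagation speed (the support stays in $J(\supp\psi_j)\cap U_j$ on a thin slab), and sums to get $v=\sum_j u_j$ on $t^{-1}([t_*-\eps_0,t_*+\eps_0])$; uniqueness then identifies $v$ with $u$ on the overlap and strictly extends it past $t_*$. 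This needs only the $L^2$ energy estimate already proved. Two smaller remarks: the paper cuts off the Cauchy data with a partition of unity before invoking the classical local existence theorem, which is cleaner than prescribing the raw restriction $u_0|_{\Sigma\cap U_j}$ on a relatively compact chart as you do; and the paper handles non-compact data by reducing to the compactly supported case via cutoff functions (part (b)) rather than via the exhaustion $\bigcup_j D(K_j)=M$ --- your Cauchy-development exhaustion is equally valid but introduces machinery (topology of $D(K)$, compactness of its time-slices) that the paper does not need.
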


\begin{proof}
(a)
We first assume that $u_0$ and $f$ have compact supports.
We reduce the existence statement to standard PDE theory.
Choose a Cauchy temporal function $t:M\to\R$ with $\Sigma_0=\Sigma$.
Write $t(M)=(t_-,t_+)$ where $-\infty\leq t_- < 0 < t_+ \leq \infty$.
Put 
\[
t_*:=\sup\{\tau\in[0,t_+] \mid \mbox{there exists a $\Coo$-solution $u$ to \eqref{eq:CauchyProblem} on $t^{-1}([0,\tau))$}\} .
\]
We have to show $t_*=t_+$.
Assume $t_*<t_+$.
For each $\tau<t_*$ we have a solution of \eqref{eq:CauchyProblem} on $t^{-1}([0,\tau))$.
By uniqueness, the solutions for different $\tau$'s coincide on their common domain.
Thus we have a solution $u$ on $t^{-1}([0,t_*))$.
Put $K:=\supp(u_0) \cup \supp f$.
We cover the compact set $J(K)\cap\Sigma_{t_*}$ by finitely many causally compatible, globally hyperbolic coordinate charts $U_1,\ldots,U_N$ over which the vector bundle $E$ is trivial.
Choose $\eps>0$ small enough so that the union $U_1\cup\cdots\cup U_N$ still contains $J(K)\cap \Sigma_\tau$ for each $\tau\in[t_*-\eps,t_*+\eps]$.
Choose $\psi_j\in\Cooc(M,\R)$ such that $\supp\,\psi_j\subset U_j$ and
\begin{equation}
\psi_1+\cdots+\psi_N\equiv 1 \quad\mbox{ on }\quad J(K)\cap t^{-1}([t_*-\eps,t_*+\eps]) .
\label{eq:TeilungEins}
\end{equation}

In local coordinates and with respect to a local trivialization of $E$, the operator $P$ is a symmetric hyperbolic system in the classical PDE sense so that we can find local solutions $u_j\in\Coo(U_j,E)$ of the Cauchy problem 
\[
\begin{cases}
Pu_j=\psi_j f,\\
u_j|_{U_j\cap\Sigma_{t_*-\eps}}=\psi_ju|_{U_j\cap\Sigma_{t_*-\eps}},
\end{cases}
\]
see e.g.\ \cite[Thm.~7.11]{A09}. 
By \cref{cor:CPuniqueness}, $\supp\, u_j \subset J(\supp\,\psi_j)$.
Since $\supp\,\psi_j$ is a compact subset of $U_j$, there exists an $\eps_j>0$ such that $J(\supp\,\psi_j)\cap t^{-1}([t_*-\eps_j,t_*+\eps_j]) \subset U_j$.
Thus we can extend $u_j$ by zero to a smooth section, again denoted by $u_j$, defined on $t^{-1}([t_*-\eps_j,t_*+\eps_j])$.
For $\eps_0:=\min\{\eps,\eps_1,\ldots,\eps_N\}$, 
\[
v:=u_1+\cdots+u_N                                                  
\]
is a smooth section defined on $t^{-1}([t_*-\eps_0,t_*+\eps_0])$.
Now
\[
v|_{\Sigma_{t_*-\eps}}
=
(\psi_1+\ldots+\psi_N)\cdot u|_{\Sigma_{t_*-\eps}}
=
u|_{\Sigma_{t_*-\eps}}
\]
because $\supp\,u\subset J(K)$ so that \eqref{eq:TeilungEins} applies.
Moreover, on $t^{-1}([t_*-\eps,t_*+\eps])$, 
\[
Pv
=
Pu_1+\ldots +Pu_N
=
(\psi_1+\ldots+\psi_N)\cdot f
=
f
\]
because $\supp f\subset K\subset J(K)$.
Thus $u$ and $v$ solve the same Cauchy problem on $t^{-1}([t_*-\eps_0,t_*))$ and hence coincide in this region.
Therefore $v$ extends $u$ smoothly to a solution of \eqref{eq:CauchyProblem} on $t^{-1}([0,t_*+\eps])$ which contradicts the maximality of $t_*$.
This shows $t_*=t_+$.
Similarly, one extends the solution to $t^{-1}((t_-,0])$, hence to all of $M$.
%Since the support of $u$ is contained in $J(K)$, it is spacially compact.

(b)
Now we drop the assumption that $u_0$ and $f$ have compact supports.
Let $x\in M$.
Without loss of generality assume $x\in J^+(\Sigma)$.
Since $M$ is globally hyperbolic, $J^-(x)\cap J^+(\Sigma)$ is compact.
We choose a cutoff function $\chi\in\Cooc(M,\R)$ with $\chi\equiv 1$ on an open neighborhood of $J^-(x)\cap J^+(\Sigma)$.
Define $u(x):=u_x(x)$ where $u_x$ is the solution of the Cauchy problem
\[
\begin{cases}
Pu_x=\chi f,\\
u_x|_\Sigma=\chi u_0.
\end{cases}
\]
The solution $u_x$ exists by part (a) of the proof.

We claim that, in a neighborhood of $x$, the solution $u_x$ does not depend on the choice of cutoff function $\chi$.
Namely, let $\tilde\chi$ be another such cutoff function and $\tilde u_x$ the corresponding solution.
Then $v:=u_x - \tilde u_x$ solves the Cauchy problem
\[
\begin{cases}
Pv=(\chi-\tilde\chi) f,\\
v|_\Sigma=(\chi-\tilde\chi) u_0.
\end{cases}
\]
Since $\chi-\tilde\chi$ vanishes on a neighborhood of $J^-(x)\cap J^+(\Sigma)$, $v$ must vanish in a neighborhood of $x$ by \cref{cor:CPuniqueness}.
\hfill\ding{51}

In particular, $u$ is a smooth section which coincides with $u_x$ in a neighborhood of $x$.
Thus we have $(Pu)(x)=(Pu_x)(x)=\chi(x)f(x)=f(x)$ and $u(x)=u_x(x)=\chi(x)u_0(x)=u_0(x)$ if $x\in\Sigma$.
Hence $u$ solves the Cauchy problem~\eqref{eq:CauchyProblem}.
\end{proof}

\subsection{Stability for the Cauchy problem}
We conclude the discussion of the Cauchy problem for symmetric hyperbolic systems by showing stability.
This means that the solutions depend continuously on the data.
Note that if $u_0$ and $f$ have compact supports, then the solution $u$ of the Cauchy problem~\eqref{eq:CauchyProblem} has spacially compact support by \cref{cor:CPuniqueness}.

\begin{prop}[Stability of the Cauchy problem]\label{prop:CPstabitlity}
Let $P$ be a symmetric hyperbolic system over the globally hyperbolic manifold $M$.
Let $\Sigma\subset M$ be a smooth spacelike Cauchy hypersurface.

Then the map $\Cooc(M,E)\times \Cooc(\Sigma,E) \to \Coosc(M,E)$ mapping $(f,u_0)$ to the solution $u$ of the Cauchy problem \eqref{eq:CauchyProblem} is continuous.
\end{prop}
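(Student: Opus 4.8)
The plan is to deduce continuity from the finite-propagation-speed estimate of \cref{cor:finitespeed} together with the closed graph theorem for Fr\'echet spaces, rather than from a direct seminorm estimate. Write $S$ for the map $(f,u_0)\mapsto u$; it is defined on all of $\Cooc(M,E)\times\Cooc(\Sigma,E)$ by \tref{thm:CPexistence} and is linear by the uniqueness statement \cref{cor:CPuniqueness}. Since $S$ is linear, $S(f,u_0)=S_1f+S_2u_0$ with $S_1:=S(\,\cdot\,,0)$ and $S_2:=S(0,\,\cdot\,)$, and $S$ is continuous as soon as $S_1\colon\Cooc(M,E)\to\Coosc(M,E)$ and $S_2\colon\Cooc(\Sigma,E)\to\Coosc(M,E)$ are. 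By the universal property of the (strict) inductive limit $\Cooc(M,E)=\varinjlim_K\Coo_K(M,E)$ it suffices to prove that each restriction $S_1\big|_{\Coo_K(M,E)}$ is continuous for every compact $K\subset M$, and similarly $S_2$ is reduced to the pieces $\Coo_{K'}(\Sigma,E)$ with $K'\subset\Sigma$ compact.

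Now the support estimate enters. By \cref{cor:finitespeed}, $\supp(S(f,u_0))\subset J(\supp f\cup\supp u_0)$, so $S_1$ maps $\Coo_K(M,E)$ into $\Coo_{J(K)}(M,E)$ and $S_2$ maps $\Coo_{K'}(\Sigma,E)$ into $\Coo_{J(K')}(M,E)$. Since $K$ (resp.\ $K'$) is a compact subset of $M$, the set $J(K)$ (resp.\ $J(K')$) is closed and spacially compact, hence $\Coo_{J(K)}(M,E)$ is a Fr\'echet space that embeds continuously into $\Coosc(M,E)$. It therefore remains to show that $S_1\colon\Coo_K(M,E)\to\Coo_{J(K)}(M,E)$ and $S_2\colon\Coo_{K'}(\Sigma,E)\to\Coo_{J(K')}(M,E)$ are continuous maps between Fr\'echet spaces, and for this I would invoke the closed graph theorem. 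For $S_1$: suppose $f_n\to f$ in $\Coo_K(M,E)$ and $S_1f_n\to v$ in $\Coo_{J(K)}(M,E)$. Since $P$ acts continuously on $\Coo(M,E)$ and restriction to $\Sigma$ is continuous, $Pv=\lim P(S_1f_n)=\lim f_n=f$ and $v|_\Sigma=\lim(S_1f_n)|_\Sigma=0$, so $v$ solves the Cauchy problem \eqref{eq:CauchyProblem} for the data $(f,0)$; by \cref{cor:CPuniqueness} this forces $v=S_1f$. Thus the graph of $S_1\big|_{\Coo_K(M,E)}$ is closed, so $S_1\big|_{\Coo_K(M,E)}$ is continuous, and the argument for $S_2$ is identical.

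I do not expect a genuine obstacle here: the proof is in essence a bookkeeping argument whose only substantive ingredient is that finite propagation speed confines the solution to a fixed spacially compact set as soon as the data are supported in a fixed compact set, which is precisely what permits the passage to Fr\'echet pieces and the application of the closed graph theorem. The points that require care are this support control together with the continuity of the canonical embeddings $\Coo_{J(K)}(M,E)\hookrightarrow\Coosc(M,E)$, both of which are already in hand from the preceding sections. A self-contained alternative would be to estimate all seminorms of $u$ directly by applying the energy estimate of \tref{thm:EnEst} to the covariant derivatives of $u$ and trading normal derivatives via the equation $Pu=f$; this works as well but is considerably longer, so the closed-graph route is preferable.
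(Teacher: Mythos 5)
Your proof is correct. Structurally it differs from the paper's only in the choice of Fr\'echet-space machinery and in the direction in which the argument runs: the paper considers the forward map $\P:u\mapsto(Pu,u|_\Sigma)$, restricts it to the closed subspace $V_A=\P^{-1}\bigl(\CooA(M,E)\times\Coo_{A\cap\Sigma}(\Sigma,E)\bigr)$ (closed by continuity of $\P$, hence Fr\'echet), uses existence and uniqueness to see $\P|_{V_A}$ is a continuous bijection onto $\CooA(M,E)\times\Coo_{A\cap\Sigma}(\Sigma,E)$, and then invokes the \emph{open mapping theorem}; you instead work directly with the solution operator, split it as $S_1+S_2$, restrict to compact supports, and apply the \emph{closed graph theorem}. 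Of course these two theorems are two sides of the same Baire-category coin, so the substance is the same. The trade-off is small: the paper's approach gets the closedness needed for Fr\'echet theory ``for free'' by taking a preimage under the continuous map $\P$, whereas you have to verify closedness of the graph by hand (which you do correctly, passing to limits in $Pu$ and $u|_\Sigma$ and invoking uniqueness); on the other hand, your argument avoids introducing the auxiliary space $V_A$ and is somewhat more self-explanatory. Both approaches make essential and identical use of finite propagation speed (\cref{cor:finitespeed}) to confine the solution to the fixed Fr\'echet piece $\Coo_{J(K)}(M,E)$, and both use the inductive-limit universal property to pass from the pieces to $\Cooc(M,E)\times\Cooc(\Sigma,E)$. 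Your decomposition $S=S_1+S_2$ is a harmless variation: for a linear map on a product of LCS's, joint continuity is equivalent to continuity in each slot separately. So your proof is a valid, essentially equivalent alternative to the paper's.
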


\begin{proof}
The map $\P:\Coo(M,E) \to \Coo(M,E)\times\Coo(\Sigma,E)$, $u\mapsto (Pu,u|_\Sigma)$, is linear and continuous.
Fix a compact subset $A\subset M$.
Then $\CooA(M,E)\times\Coo_{A\cap\Sigma}(\Sigma,E)$ is a closed subset of $\Coo(M,E)\times\Coo(\Sigma,E)$ and thus $V_A:=\P^{-1}(\CooA(M,E)\times\Coo_{A\cap\Sigma}(\Sigma,E))$ is a closed subset of $\Coo(M,E)$.
In particular, $\CooA(M,E)\times\Coo_{A\cap\Sigma}(\Sigma,E)$ and $V_A$ are Fr\'echet spaces.
By \cref{cor:CPuniqueness} and \tref{thm:CPexistence}, $\P$ maps $V_A$ bijectively onto $\CooA(M,E)\times\Coo_{A\cap\Sigma}(\Sigma,E)$.
The open mapping theorem for Fr\'echet spaces tells us that $(\P|_{V_A})^{-1}:\CooA(M,E)\times\Coo_{A\cap\Sigma}(\Sigma,E) \to V_A$ is continuous.
Now $V_A\subset\Coo(M,E)$ and $\Coo_{J(A)}(M,E)\subset\Coo(M,E)$ carry the relative topologies and $V_A\subset \Coo_{J(A)}(M,E)$ by \cref{cor:finitespeed}.
Thus the embeddings $V_A \hookrightarrow \Coo_{J(A)}(M,E)\hookrightarrow \Coosc(M,E)$ are continuous.
Hence the solution operator for the Cauchy problem yields a continuous map $\CooA(M,E)\times\Coo_{A\cap\Sigma}(\Sigma,E) \to \Coosc(M,E)$ for every compact $A\subset M$.
Therefore it is continuous as a map $\Cooc(M,E)\times \Cooc(\Sigma,E) \to \Coosc(M,E)$.
\end{proof}

\begin{rem}
\cref{cor:CPuniqueness}, \tref{thm:CPexistence} and \pref{prop:CPstabitlity} are often summarized by saying that the Cauchy problem \eqref{eq:CauchyProblem} is \emph{well posed}.
\end{rem}

\subsection{Green-hyperbolicity of symmetric hyperbolic systems}
Finally, we show that symmetric hyperbolic systems over globally hyperbolic manifolds are Green hyperbolic.

\begin{thm}
Symmetric hyperbolic systems over globally hyperbolic manifolds are Green hyperbolic.
\end{thm}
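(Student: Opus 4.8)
The plan is to build the advanced and retarded Green's operators of $P$ directly out of the solution theory of the Cauchy problem established in \tref{thm:CPexistence}, \cref{cor:finitespeed} and \cref{cor:CPuniqueness}, and then to deduce the analogous statements for the formally dual operator $\Pt$ by reversing the time orientation. Throughout, fix a Cauchy temporal function $t:M\to\R$ and write $\Sigma_s:=t^{-1}(s)$; recall that $J^+(\Sigma_s)=\{t\ge s\}$ and that $t(M)$ is an open interval.

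\emph{Construction of $G_+$, and symmetrically $G_-$.} Let $f\in\Cooc(M,E)$. Since $\supp f$ is compact, $t$ attains a minimum on it, and we pick $s\in t(M)$ with $s<\min_{\supp f}t$; then $\Sigma_s$ lies strictly to the past of $\supp f$, so that $\supp f\cap J^-(\Sigma_s)=\emptyset$ while $\supp f\cap J^+(\Sigma_s)=\supp f$. Let $u$ be the unique solution of $Pu=f$, $u|_{\Sigma_s}=0$ provided by \tref{thm:CPexistence} and \cref{cor:CPuniqueness}. The refined finite-speed inclusion \eqref{eq:finitefuturepast} with vanishing Cauchy data gives $\supp u\cap J^-(\Sigma_s)\subset J^-(\emptyset)=\emptyset$ and $\supp u\cap J^+(\Sigma_s)\subset J^+(\supp f\cap J^+(\Sigma_s))=J^+(\supp f)$, hence $\supp u\subset J^+(\supp f)$. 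If $s'<s$ is a second admissible value, then $\supp u$ is disjoint from $\Sigma_{s'}$, so $u|_{\Sigma_{s'}}=0$, and uniqueness forces $u$ to coincide with the solution built from $\Sigma_{s'}$; comparing any two admissible values to a common smaller one shows that $u$ is independent of the choice, so that $G_+f:=u$ is well defined, and the same comparison makes $G_+$ linear. Now \eqref{eq:GreenOp2} is immediate and \eqref{eq:GreenOp3} is the support bound just shown; for \eqref{eq:GreenOp1}, given $g\in\Cooc(M,E)$ and $s<\min_{\supp g}t$ the section $g$ itself solves $Pv=Pg$, $v|_{\Sigma_s}=0$, whence $G_+(Pg)=g$ by uniqueness. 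Choosing instead $s>\max_{\supp f}t$ and interchanging future and past produces the retarded Green's operator $G_-$ with $\supp(G_-f)\subset J^-(\supp f)$.

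\emph{Green's operators of $\Pt$.} The nondegenerate fiber metric carried by a symmetric hyperbolic system identifies $\Pt$ with the formal adjoint $P^*$, and Green's operators transport along this identification, so it suffices to produce advanced and retarded Green's operators for $P^*$. The key point is that $P^*$ is again a symmetric hyperbolic system, but with respect to the \emph{opposite} time orientation of $M$: it is of first order, its principal symbol satisfies $\sigma_{P^*}(\xi)=-\sigma_P(\xi)$ (the minus sign coming from integration by parts, and $\sigma_P(\xi)$ being its own fiber-metric adjoint by \eqref{eq:symhyp1}), so it is still symmetric or Hermitian; and $\<\sigma_{P^*}(\tau)\cdot,\cdot\>=-\<\sigma_P(\tau)\cdot,\cdot\>$ is negative definite on future-directed timelike covectors, hence positive definite on past-directed ones, which is precisely \eqref{eq:symhyp2} for the reversed orientation. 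Since $M$ with the reversed time orientation is again globally hyperbolic, with the same Cauchy hypersurfaces but future and past interchanged, the construction of the previous paragraph applied to $P^*$ on this manifold yields two Green's operators whose supports are controlled by $J^-(\supp f)$ and by $J^+(\supp f)$ respectively with respect to the original orientation. Thus $P^*$, and hence $\Pt$, has both an advanced and a retarded Green's operator, so $P$ is Green hyperbolic.

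The step I expect to demand the most care is the sharp support inclusion $\supp(G_\pm f)\subset J^\pm(\supp f)$ rather than merely $\subset J(\supp f)$: this is exactly what the refined estimate \eqref{eq:finitefuturepast} of \cref{cor:finitespeed} delivers, provided the auxiliary Cauchy hypersurface is chosen so that $\supp f$ lies entirely on one of its sides. The one remaining subtlety, the independence of $G_\pm$ from that choice, again reduces to uniqueness in the Cauchy problem via the support bound.
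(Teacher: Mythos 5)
Your construction is the same as the paper's: build $G_\pm$ by solving the Cauchy problem with zero initial data on a Cauchy hypersurface strictly to the past (resp.\ future) of $\supp f$, use the refined support estimate \eqref{eq:finitefuturepast} of \cref{cor:finitespeed} to get $\supp(G_\pm f)\subset J^\pm(\supp f)$ and to prove independence of the choice of hypersurface, and then treat $\Pt$ by observing it is (up to sign / time reversal) again a symmetric hyperbolic system. The paper compresses the last step to the one-line remark that $-\Pt$ is again symmetric hyperbolic, whereas you spell it out via $P^*$ and time reversal, but this is the same observation and the proposal is correct.
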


\begin{proof}
Let $P$ be a symmetric hyperbolic system over the globally hyperbolic manifold $M$.
Let $t:M\to\R$ be a Cauchy temporal function.
Put $I:=t(M)$.

We construct an advanced Green's operator for $P$.
Let $f\in\Cooc(M,E)$.
Choose $t_0\in I$ such that $K:=\supp f \subset I^+(\Sigma_{t_0})$.
We solve the Cauchy problem $Pu=f$ with initial condition $u|_{\Sigma_{t_0}}=0$.
Now put $G_+f:=u$.

This definition does not depend on the particular choice of $t_0$.
Namely, let $t_1<t_2$ be two values in $I$ such that $K\subset I^+(\Sigma_{t_i})$.
Then the solution of $Pu=f$ and $u|_{\Sigma_{t_1}}=0$ vanishes on $t^{-1}([t_1,t_2])$ because of \eqref{eq:finitefuturepast}.
Hence it coincides with the solution of the Cauchy problem $Pu=f$ with initial condition $u|_{\Sigma_{t_2}}=0$.

Thus we obtain a linear map $G_+:\Cooc(M,E)\to\Coo(M,E)$ such that $PG_+f=f$ for every $f\in\Cooc(M,E)$.
If $f=Pv$ for some $v\in\Cooc(M,E)$, then $u=v$ is the unique solution to the Cauchy problem $Pu=f$ with $u|_{\Sigma_{t_0}}=0$.
This shows $G_+Pv=v$ for every $v\in\Cooc(M,E)$.

By \eqref{eq:finitefuturepast}, $\supp(G_+f) \subset J^+(\supp f)$.
Hence $G_+$ is an advanced Green's operator.
A retarded Green's operator is constructed similarly by choosing $t_0\in I$ such that $K\subset I^-(\Sigma_{t_0})$.

Finally, $-\Pt$ is again a symmetric hyperbolic system and therefore has Green's operators.
Thus $\Pt$ has Green's operators and $P$ is Green hyperbolic.
\end{proof}

\subsection{Locally covariant quantum field theory}
In \cite[Thm.~3.10]{BG12b} we showed that Green-hyperbolic operators always give rise to bosonic locally covariant quantum field theories in the sense of \cite{BFV03}.
Fermionic locally covariant quantum field theories are much harder to construct.
In \cite[Thm.~3.20]{BG12b} it was shown that a construction is possible for formally selfadjoint Green-hyperbolic operators of first order if they are of \emph{positive type}, see \cite[Def.~3.12]{BG12b}.
One easily sees that if $Q$ is formally selfadjoint, then $Q$ is of positive type if and only if $P=iQ$ is a symmetric hyperbolic system.
In \cite{BG12b} we assumed that $Q$ is Green hyperbolic.
Here we have shown that this is actually automatic.

It is remarkable that formally selfadjoint symmetric hyperbolic systems give rise to both, bosonic and fermionic quantum field theories.
This shows that there is no spin-statistics theorem on the level of observable algebras.
One has to complement the observables by suitable states as in \cite{V01}.

In \cite{BG12b} it was shown that some but not all Dirac-type operators are of positive type.
The classical Dirac operator acting on spinor fields is of positive type.
In contrast, Buchdahl operators which describe higher spin fields are not. 
Therefore the theory of symmetric hyperbolic systems does not apply to Buchdahl operators;
nevertheless, they are Green hyperbolic.

%%%%%%%%%%%%%%%%%%%%%%%%%%%%%%%%%%%%%%%%%%%%%%%%%%%%%%%%%%%%%%%%%%%%%%%%%%%%%%%%%%%%%%%%%%%%%%%%%

\end{document}